\DeclareMathOperator*{\E}{\mathbb{E}}
\newtheorem{theorem}{Theorem}[section]
\newtheorem{definition}{Definition}[section]
\title{A utility-based analysis of equilibria\\ in multi-objective normal form games \thanks{{This article extends an earlier unpublished paper \citep{Radulescu2019Equilibria} that was originally presented at the Adaptive and Learning Agents Workshop 2019 (\url{https://ala2019.vub.ac.be/papers/ALA2019_paper_29.pdf})}}}
\author{
  Roxana R\u{a}dulescu\\
  Artificial Intelligence Lab\\
  Vrije Universiteit Brussel\\
  Belgium \\
  \texttt{roxana.radulescu@vub.be} \\
   \And
 Patrick Mannion \\
  School of Computer Science\\
  National University of Ireland Galway\\
  Ireland \\
  \texttt{patrick.mannion@nuigalway.ie} \\
  \And
   Yijie Zhang \\
  Universiteit van Amsterdam\\
  The Netherlands\\
  \texttt{yijie.zhang@student.uva.nl} \\
  \And
   Diederik M. Roijers \\
   Institute of ICT \\
   HU University of Applied Sciences Utrecht \\
   The Netherlands \\
  \texttt{diederik.yamamoto-roijers@hu.nl} \\
    \And
   Ann Now\'{e}\\
  Artificial Intelligence Lab\\
  Vrije Universiteit Brussel\\
  Belgium \\
  \texttt{ann.nowe@vub.be} \\
}
\begin{document}
\maketitle

\begin{abstract}
In multi-objective multi-agent systems (MOMAS), agents explicitly consider the possible tradeoffs between conflicting objective functions. We argue that compromises between competing objectives in MOMAS should be analysed on the basis of the utility that these compromises have for the users of a system, where an agent's utility function maps their payoff vectors to scalar utility values. This utility-based approach naturally leads to two different optimisation criteria for agents in a MOMAS: expected scalarised returns (ESR) and scalarised expected returns (SER). In this article, we explore the differences between these two criteria using the framework of multi-objective normal form games (MONFGs). We demonstrate that the choice of optimisation criterion (ESR or SER) can radically alter the set of equilibria in a MONFG when non-linear utility functions are used.
\end{abstract}


\section{Introduction}

\label{sec:introduction}
Multi-agent systems (MAS) are ideally suited to model a wide range of real-world problems where autonomous actors participate in distributed decision making. Example application domains include urban and air traffic control \citep{Mannion2016Experimental,yliniemi2015simulation}, autonomous vehicles \citep{radulescudeep,Talpert2019Exploring} and energy systems \citep{Walraven16ecai,Mannion2016Multi,reymond2018reinforcement}. Although many such problems feature multiple conflicting objectives to optimise, most MAS research focuses on agents maximising their return w.r.t.\ a single objective. By contrast, in multi-objective multi-agent systems (MOMAS), agents explicitly consider the possible trade-offs between conflicting objective functions. Agents in a MOMAS receive vector-valued payoffs for their actions, where each component of a payoff vector represents the performance on a different objective. Following the utility-based approach \citep{roijers2013survey}, we assume that each agent has a utility function which maps vector-valued payoffs to scalar utility values. Compromises between competing objectives are then considered on the the basis of the utility that these trade-offs have for the users of a MOMAS.

The utility-based approach naturally leads to two different optimisation criteria for agents in a MOMAS: expected scalarised returns (ESR) and scalarised expected returns (SER). To date, the differences between the SER and ESR approaches have received little attention in multi-agent settings, despite having received some attention in single-agent settings (see e.g. \cite{roijers2013survey,roijers2018multi}). Consequently, the implications of choosing either ESR or SER as the optimisation criterion for a MOMAS are currently not well-understood. In this work, we use the framework of multi-objective normal form games (MONFGs) to explore the differences between ESR and SER in multi-agent settings.

In multi-agent systems, solution concepts such as Nash equilibria \citep{Nash1950Equilibrium,Nash1951Non} and correlated equilibria \citep{aumann1974subjectivity,aumann1987correlated} specify conditions under which each agent cannot increase its expected payoff by deviating unilaterally from an equilibrium strategy. Such solution concepts are well-studied in single objective settings, to capture stable multi-agent behaviour. However, in utility-based MOMAS the notion of an equilibrium must be redefined, as incentives to deviate from equilibrium strategies are now computed based on the relative utilities of vector-valued payoffs, rather than the relative values of scalar payoffs. Furthermore, the choice of optimisation criterion (ESR or SER) influences how equilibria are computed, as agents' incentives to deviate from an equilibrium strategy may be measured in terms of either differences in ESR or differences in SER.

The contributions of this work are:
\begin{enumerate}
    \item We provide the first comprehensive analysis of the differences between the ESR and SER optimisation criteria in multi-agent settings.
    \item We provide formal definitions of the criteria for Nash equilibria and correlated equilibria under ESR and SER. 
    \item We prove that the ESR and SER criteria are equivalent in cases where linear utility functions are used.
    \item We demonstrate that the choice of optimisation criterion radically alters the set of equilibria in an MONFG.
    \item We propose two versions of correlated equilibria for MONFGs -- single-signal and multi-signal -- corresponding to different use-cases.
    \item We prove that in MONFGs under SER {with non-linear utility functions}, Nash equilibria and multi-signal correlated equilibria need not exist, whereas single-signal correlated equilibria can exist. {We find that whether these equilibria exist in a specific MONFG depends on the multi-objective payoff structure and the utility functions used}. These examples are supported by empirical results.
    \item{{We demonstrate empirically that the well-known previous findings that CE can provide better payoffs than NE in single objective games (see e.g. \cite{aumann1974subjectivity}) can also apply in the more general class of multi-objective games, i.e., that in a MOMAS where a coordination signal can be established CE can potentially lead to higher utility for both agents than NE.}}
\end{enumerate}
The next section of this paper introduces and discusses normal form games, relevant solution concepts and optimisation criteria for multi-objective decision making. Section \ref{sec:relatedwork} provides an overview of prior work on multi-objective games. Section \ref{sec:computing} formally defines Nash and correlated equilibria in MONFGs under ESR and SER and discusses some important theoretical considerations arising from these definitions. Section \ref{sec:experiments} presents empirical results in support of the conclusions reached in Section \ref{sec:computing}.
Finally, Section \ref{sec:conclusion} concludes with a summary of our findings, a discussion of important open questions and promising directions for future work.
 
\section{Background}
\label{sec:background}

\subsection{Normal-form Games and Equilibria}

Normal-form (strategic) games (NFG) constitute a fundamental representation of interactions between players in game theory. Players are seen as rational decision-makers seeking to maximise their payoff. When multiple players are interacting, their strategies are interrelated, each decision depending on the choices of the others. For this reason, we usually try to determine interesting groups of outcomes, called solution concepts. Below we offer a formal definition for NFG and discuss two well-known solution concepts considered in this work: Nash equilibria and correlated equilibria.

\begin{definition}[Normal-form game]
\label{def:nfg}
An $n$-person finite normal-form game $G$ is a tuple $(N, \mathcal{A}, \mathbf{p})$, with $n\ge2$, where:
\begin{itemize}
    \item $N=\{1, \ldots, n\}$ is a finite set of players. 
    \item $\mathcal{A} = A_1 \times \dots \times A_n$, where $A_i$ is the finite action set of player~$i$ (i.e., the pure strategies of $i$). An \textit{action (pure strategy) profile} is a vector $\mathbf{a}=(a_1, \ldots, a_n) \in \mathcal{A}$. 
    \item $\mathbf{p}=(p_1, \ldots, p_n)$, where $p_i \colon \mathcal{A} \to \mathbb{R}$ is the real-valued payoff of player $i$, given an action profile. 
\end{itemize}
\end{definition}

\subsubsection*{Mixed-strategy profile}
Let us denote by $P(X)$ the set of all probability distributions over $X$. We can then define the set of mixed strategies of player $i$ as $\Pi_i = P(A_i)$.
The set of \textit{mixed-strategy profiles} is then the Cartesian product of all the individual mixed-strategy sets $\Pi=\Pi_1\times \ldots \times \Pi_n$.

We define $\pi_{-i} = (\pi_1,\ldots,\pi_{i-1},\pi_{i+1},\ldots,\pi_n)$ to be a strategy profile without player's $i$ strategy. We can thus write $\pi=(\pi_{i}, \pi_{-i})$.

A Nash equilibrium (NE) \citep{Nash1951Non} can be defined based on a pure or mixed-strategy profile, such that each player has selected her best response to the other players' strategies. We offer a more formal definition below.

\begin{definition}[Nash Equilibrium]
\label{def:ne}
A mixed strategy profile $\pi^{NE}$ of a game $G$ is a Nash equilibrium if for each player $i \in \{1,...,N\}$ and for any alternative strategy $\pi_i \in \Pi_i$:
\begin{equation}
\begin{split}
\E p_i(\pi_{i}^{NE}, \pi_{-i}^{NE}) \geq  \E p_i(\pi_i, \pi_{-i}^{NE}) 
\label{eqn:ne}
\end{split}
\end{equation}\end{definition}

Thus, under a Nash equilibrium, no player $i$ can improve her payoff by unilaterally changing her strategy. The same definition applies for pure-strategy profiles. \citet{Nash1951Non} has proven that, allowing the use of mixed-strategies, any finite NFG has at least one Nash equilibrium.

A correlated equilibrium is a game theoretic solution concept proposed by \citet{aumann1974subjectivity} in order to capture correlation options available to the players when some form of communication can be established prior to the action selection phase (i.e, the players receive signals from an external device, according to a known distribution, allowing them to correlate their strategies). For the current work, we look at correlation signals taking the form of action recommendations.

\subsubsection*{Correlated strategy}
A correlated strategy represents a probability vector $\sigma$ on $\mathcal{A}$, that assigns probabilities for each possible action profile, i.e., $\sigma \colon \mathcal{A} \to [0,1]$. The expected payoff of player $i$, given a correlated strategy $\sigma$ is calculated as: $$\E p_i(\sigma) = \sum_{\mathbf{a}\in \mathcal{A}}\sigma(\mathbf{a})p_i(\mathbf{a}) $$

\subsubsection*{Strategy modification}
A strategy modification for player $i$ is a function $\delta_i\colon A_{i}\to A_{i}$, such that given a recommendation $a_i$, player $i$ will play action $\delta_i(a_i)$ instead. The expected payoff of player $i$, given a correlated strategy $\sigma$ and a strategy modification $\delta_i$ is calculated as: $$\E p_i(\delta(\sigma)) = \sum_{\mathbf{a}\in \mathcal{A}}\sigma(\mathbf{a})p_i(\delta_i(a_i), a_{-i}) $$

\begin{definition}[Correlated equilibrium]
\label{def:ce}
A correlated strategy $\sigma^{CE}$ of a game G is a correlated equilibrium if for each player $i \in \{1,...,N\}$ and for any possible strategy modification $\delta_i$:
\begin{equation}
\begin{split}
\E p_i(\sigma^{CE}) \geq  \E p_i(\delta_i(\sigma^{CE})) 
\label{eqn:ce}
\end{split}
\end{equation}
\end{definition}

Thus, a correlated equilibrium ensures that no player can gain additional payoff by deviating from the suggestions, given that the other players follow them as well. Although this definition strongly resembles the one of NE, there is one important aspect we emphasise here, namely the distinction between a mixed-strategy profile and a correlated strategy. Mixed-strategy profiles are composed of independent probability factors, while the action probabilities in correlated strategies are jointly defined.

Correlated equilibria can be computed via linear programming in polynomial time \citep{papadimitriou2008computing}. It has been also shown that no-regret algorithms converge to CE \citep{foster1999regret}. Furthermore, CE has the same existence guarantees in finite NFGs \citep{hart1989existence} as NE, and any Nash equilibrium is an instance of a correlated equilibrium \citep{aumann1987correlated}. 

\subsubsection*{Example}

Consider the game of Chicken with the payoffs described in Table~\ref{table:chicken}. Each player has two actions: to continue driving towards the other player (D) or to swerve the car (S).

\begin{table}[h]
\centering
\begin{tabular}{ccc}
                       & S                         & D                         \\ \cline{2-3} 
\multicolumn{1}{c|}{S} & \multicolumn{1}{c|}{6, 6} & \multicolumn{1}{c|}{2, 7} \\ \cline{2-3} 
\multicolumn{1}{c|}{D} & \multicolumn{1}{c|}{7, 2} & \multicolumn{1}{c|}{0, 0} \\ \cline{2-3} 
\end{tabular}
\vskip 1em
\caption{Payoff matrix for the game of Chicken.}
\label{table:chicken}
\end{table}

There are three well-known Nash equilibria for this game with expected payoffs $(7,2)$, $(2,7)$ -- pure strategy NE -- and $(4\frac{2}{3}, 4\frac{2}{3})$ -- mixed strategy NE where each player selects S and D with probabilities $\frac{2}{3}$ and $\frac{1}{3}$ respectively. 

\begin{table}[h]
\centering
\begin{tabular}{ccc}
                       & S                         & D                         \\ \cline{2-3} 
\multicolumn{1}{c|}{S} & \multicolumn{1}{c|}{0.5} & \multicolumn{1}{c|}{0.25} \\ \cline{2-3} 
\multicolumn{1}{c|}{D} & \multicolumn{1}{c|}{0.25} & \multicolumn{1}{c|}{0} \\ \cline{2-3} 
\end{tabular}
\vskip 1em
\caption{A possible correlated equilibrium for the game of Chicken.}
\label{table:chicken_ce}
\end{table}
A possible correlated equilibrium is represented in Table~\ref{table:chicken_ce}, by assigning $0.5$ probability for the joint action $(S,S)$, $0.25$ for $(D,S)$ and finally $0.25$ for $(S,D)$. The expected payoff for this CE is $(5\frac{1}{4},5\frac{1}{4})$, values higher than the ones obtained under any NE. Thus, the notion of correlated equilibrium not only extends Nash equilibrium, but it also offers the potential for obtaining higher expected payoffs when players are able to receive a correlation signal (e.g., a recommended action).

\subsection{Multi-Objective Normal-Form Games}
\begin{definition}[Multi-objective normal-form game]
\label{def:monfg}
An $n$-person finite multi-objective normal-form game $G$ is a tuple $(N, \mathcal{A}, \mathbf{p})$, with $n\ge2$ and $d\ge2$ objectives, where:
\begin{itemize}
    \item $N=\{1, \ldots, n\}$ is a finite set of players. 
    \item $\mathcal{A} = A_1 \times \dots \times A_n$, where $A_i$ is the finite action set of player~$i$ (i.e., the pure strategies of $i$). An \textit{action (pure strategy) profile} is a vector $\mathbf{a}=(a_1, \ldots, a_n) \in \mathcal{A}$. 
    \item $\mathbf{p}=(\mathbf{p_1}, \ldots, \mathbf{p_n})$, where $\mathbf{p_i} \colon \mathcal{A} \to \mathbb{R}^d$ is the vectorial payoff of player $i$, given an action profile. 
\end{itemize}
\end{definition}
In this work we adopt a utility-based perspective \citep{roijers2013survey} and assume that each agent has a utility function that maps her vectorial payoff to a scalar utility value. A more detailed discussion of utility functions can be found in Section~\ref{sec:utility}.

\subsection{Multi-Objective Optimisation Criteria}
\label{sec:optimisation}
When agents consider multiple conflicting objectives, they should balance these in such a way that the user utility derived from the outcome of a decision problem (such as a MONFG) is maximised. This is known as the utility-based approach \citep{roijers2013survey}. Following this approach, we assume that there exists a utility function that maps a vector with a value for each objective to a scalar utility: 
\begin{equation}
    p_{u,i} = u_i({\bf p}_i)
    \label{eqn:utilityfunction}
\end{equation}

\noindent where $p_{u,i}$ is the utility that agent $i$ derives from the vector ${\bf p}_i$.  
When deciding what to optimise in a multi-objective normal form game, we thus need to apply this function to the vector-valued outcomes of the decision problem in some way. 
There are two choices for how to do this \citep{roijers2013survey,roijers2017modem}. Computing the expected value of the payoffs of a joint strategy first and then applying the utility function, leads to the \emph{scalarised expected returns} (SER) optimisation criterion, i.e., 
\begin{equation}
    p_{u,i} = u(\E[{\bf p}_i~|~ \boldsymbol{\pi}])
    \label{eqn:ser}
\end{equation}
\noindent where $\pi$ is the joint strategy for all the agents in a MONFG, and ${\bf p}_i$ is the payoff received by agent $i$. SER is employed in most of the multi-objective planning and reinforcement learning literature. Alternatively, the utility function can be applied before computing the expectation, leading to the \emph{expected scalarised returns} (ESR) optimisation criterion \citep{roijers2018multi}, i.e., 
\begin{equation}
    p_{u,i} = \E[u({\bf p}_i)~|~ \boldsymbol{\pi}]
    \label{eqn:esr}
\end{equation}
Which of these criteria should be considered best depends on how the games are used in practice. SER is the correct criterion if a game is played multiple times, and it is the average payoff over multiple plays that determines the user's utility. ESR is the correct formulation if the payoff of a single play is what is important to the user. 

\subsection{Utility Functions}
\label{sec:utility}

From a single-objective game theoretic perspective the notions of utility and payoff functions are generally used interchangeably. When transitioning to the multi-objective domain, we usually denote by payoff function the vectorial return (containing a real-valued payoff for each objective) received by a player, given an action profile. The utility (scalarisation) function is then used to denote the mapping from this vectorial return to a scalar utility value for a player $i$: $u_i \colon \mathbb{R}^d \to \mathbb{R}$.

Linear combinations are a widely used canonical example of a scalarisation function:
\begin{equation}
    u_{i}(\mathbf{p}_{i}) = \sum\limits_{d \in D} w_d p_{i,d}
    \label{eqn:linear_utility}
\end{equation}
\noindent where $D$ is the set of objectives, $\mathbf{w}$ is a weight vector\footnote{A vector whose coordinates are all non-negative and sum up to 1.}, $w_d \in [0,1]$ is the weight for objective $d$ and $p_{i,d}$ is the payoff for objective $d$ received by agent $i$.
Non-linear, discontinuous utility functions may arise in the case where it is important for an agent to achieve a minimum payoff on one of the objectives; such a utility function may look like the following:
\begin{equation}
        u_i(\mathbf{p}_{i}) = \begin{cases}
        p_{i,t_d} & \text{if } p_{i,d} \geq t_d\\
        0 & \text{otherwise } 
        \end{cases}
        \label{eqn:nonlinear_utility_example}
\end{equation}
\noindent where $p_{i,d}$ represents the expected payoff for agent $i$ on objective $d$, $t_d$ is the required threshold value for $d$, and $p_{i,t_d}$ is the utility to agent $i$ of reaching the threshold value on $d$.

Utility functions may not always be known \emph{a priori} and/or may not be easy to define depending on the setting. For example, in the \emph{decision support scenario} \citep{roijers2013survey} it may not be possible for users to specify utility functions directly; instead users may be asked to provide their preferences by scoring or ranking different possible outcomes. After the preference elicitation process is complete, users' responses may then be used to model their utility functions \citep{zintgraf2018ordered}. 
\section{Related Work}
\label{sec:relatedwork}
Since their introduction in \citet{blackwell1956analog}, multi-objective (multicriteria) games have been discussed extensively in the literature. Below we present a non-exhaustive overview of this work, highlighting a few differences with the current considered perspective.

Most previous work in multi-objective games considers utility-function agnostic equilibria, i.e., the agents do not know their preferences. For this case, \citet{shapley1959equilibrium} extend and characterise the set of mixed-strategy agnostic Nash equilibria for multicriteria two-person zero-sum games for linear utility functions: joint strategies that are undominated w.r.t.\ unilateral changes by either agent. They also note that if the preference functions differ, the scalarised game (implicitly assuming ESR) can possibly be no longer zero-sum. 
While the idea that utility functions could also be non-linear is discussed by \citet{bergstresser1977domination}, for analysis purposes they only consider linear utility functions and derive solution points from the resulting trade-off games. This is important because, as we will discuss in Section \ref{sec:theory}, there is no in-practice difference between ESR and SER in the linear case. 
The existence of Pareto\footnote{While the original paper refers to this type of equilibrium as ``Pareto'', we note that Pareto is a too loose domination concept when considering only linear utility functions, and would prefer ``Convex'' in this case. For consistency however, we keep the original term.} equilibria for two-person multi-objective games under linear utility functions is proven by \citet{borm1990pareto}. A further characterisation of Pareto equilibria can be found in \cite{voorneveld1999axiomatizations}. 

Considering non-cooperative games, \citet{wierzbicki1995multiple} states that, in realistic scenarios, how to aggregate criteria might not be known, however some form of scalarisation function is necessary in order to compute possible solutions. This corresponds to explicitly taking the user utility into account, and we therefore fully agree with this approach. Conflict escalation and solution selection are discussed when considering linear or order-consistent scalarisation functions.
\citet{lozovanu2005multiobjective} formulate an algorithm for finding Pareto-Nash equilibria in multi-objective non-cooperative games (i.e. for every linear utility function for which the weights sum to one, compute the trade-off game, then find its NE). Finally, \citet{lozan2013computing} propose a method for computing Pareto-Nash equilibrium sets, also under linear utility functions.
A third approach is to elicit preferences, i.e., information about the utility function, while determining equilibria \citep{igarashi2017multi}. As far as we know however, this also has only been done for linear utility functions. 

Notice that, despite the fact that many works admit that it might not always be desirable for a player to share full information about her utility function or that utility functions could take any form (including non-linear), most analysis and theoretical contributions use linear utility functions only. Furthermore, the utility function is directly applied on the original game in order to derive and analyse the corresponding trade-off game, which corresponds to the expected scalarised return (ESR) case. However, due to the use of linear utility functions, there is no distinction to be made between the ESR and SER optimisation criteria, as we will show in Section~\ref{sec:theory}. Interestingly enough, the field of multi-objective (single-agent) reinforcement learning typically focuses on the SER case \citep{van2014multi,vamplew2011empirical,zintgraf2015quality,mossalam2016multi}, while in either field this vital choice is typically not made explicitly or explained in the individual papers. In this paper, we aim to make the choice between an ESR and SER perspective explicit, and show that this choice has profound consequences in multi-objective multi-agent systems. 

For a comprehensive overview of prior work on multi-objective multi-agent decision making, the interested reader is referred to a recent survey article by \cite{Radulescu2020Survey}.

\section{Computing Equilibria in MONFGS}
\label{sec:computing}
Now that we have covered the necessary background, we begin our exploration of the differences between the ESR and SER optimisation criteria in MOMAS. In Section \ref{sec:definitions} we formally define Nash and correlated equilibria in MONFGs under either ESR or SER. In Section \ref{sec:theory} we discuss several important theoretical considerations arising from these definitions, {and introduce a new MONFG for this purpose. Section \ref{sec:additional_games} introduces some additional games, which we analyse from the SER perspective.}

\subsection{Definitions}
\label{sec:definitions}
As agents in MOMAS seek to optimise the utility of their vector-valued payoffs, rather than the value of scalar payoffs in single-objective settings, the standard solution concepts must be redefined based on the agents' utilities. Incentives to deviate from an equilibrium strategy may be defined based on utility, specifically the difference between the utility of an equilibrium action and the utilities of other possible actions. Here, we reformulate the conditions for Nash equilibria (Eqn. \ref{eqn:ne}) and correlated equilibria (Eqn. \ref{eqn:ce}) under the ESR optimisation criterion (Eqn. \ref{eqn:esr}) and the SER optimisation criterion (Eqn. \ref{eqn:ser}). 

\begin{definition}[Nash equilibrium in a MONFG under ESR]
\label{def:ne_esr}
A mixed-strategy strategy profile $\pi^{NE}$ is a Nash equilibrium in a MONFG under ESR if for all $i \in \{1,...,N\}$ and all $\pi_i \in \Pi_i$:
\begin{equation}
\begin{split}
\E u_i\big[\mathbf{p}_i(\pi_{i}^{NE}, \pi_{-i}^{NE}) \big] \geq  \E u_i\big[\mathbf{p}_i(\pi_i, \pi_{-i}^{NE}) \big]
\label{eqn:ne_esr}
\end{split}
\end{equation}
\noindent i.e. $\pi^{NE}$ is a Nash equilibrium under ESR if no agent can increase the \emph{expected utility of her payoffs} by deviating unilaterally from $\pi^{NE}$.
\end{definition}

\begin{definition}[Nash equilibrium in a MONFG under SER]
\label{def:ne_ser}
A mixed-strategy strategy profile $\pi^{NE}$ is a Nash equilibrium in a MONFG under SER if for all $i \in \{1,...,N\}$ and all $\pi_i \in \Pi_i$:
\begin{equation}
u_i \big[ \E \mathbf{p}_i(\pi_{i}^{NE}, \pi_{-i}^{NE}) \big] \geq  u_i \big[ \E \mathbf{p}_i(\pi_i, \pi_{-i}^{NE}) \big]
\label{eqn:ne_ser}
\end{equation}
\noindent i.e. $\pi^{NE}$ is a Nash equilibrium under SER if no agent can increase the \emph{utility of her expected payoffs} by deviating unilaterally from $\pi^{NE}$.
\end{definition}

\begin{definition}[Correlated equilibrium in a MONFG under ESR]
\label{def:ce_esr}
A probability vector $\sigma^{CE}$ on $\mathcal{A}$ is a correlated equilibrium in a MONFG under ESR if for all players $i \in \{1,...,N\}$ and for all
strategy modifications $\delta_i$:
\begin{equation}
    \E u_i\big[\mathbf{p}_i(\sigma^{CE}) \big] \geq  \E u_i\big[\mathbf{p}_i(\delta_i(\sigma^{CE})) \big]
    \label{eqn:ce_esr}
\end{equation}
\noindent i.e. $\sigma^{CE}$ is a correlated equilibrium under ESR if no agent can increase the \emph{expected utility of her payoffs} by deviating unilaterally from the action recommendations in $\sigma^{CE}$.
\end{definition}

When applying the SER optimisation criterion for correlated equilibrium, there are two cases we can distinguish between, due to the two expectations that CE incorporates for every player $i$. First, we can define the expected payoff given a signal $a^r_i$ due to the uncertainty about the other players' actions. Second, we can define the expected payoff given the correlated strategy (i.e., a certain probability distribution over the joint action space). Depending on where we place the utility function for taking the scalarised expectation, we distinguish between the \emph{single-signal} and \emph{multi-signal} cases.

\subsubsection*{Single-signal CE under SER}

In the case of a single-signal correlated equilibrium, we assume that the signal is only given once, and that the expected payoffs over which the utility must be computed is conditioned on the signal. Even if the MONFG is played multiple times, the signal does not change. An example of a single persistent signal in a multi-agent decision problem can be a smart-grid in which the correlation signal
corresponds to the price of electricity in a longer interval (e.g., one or more hours), and the actions of the agents are whether to perform a given task or not within a small interval (e.g., 10 min). In such cases, the utility of the other signals that might have been possible do not matter; they did not occur. Hence, the agent must maximise the utility of its expected vector-valued payoff given a single signal. Or, if the signal is not known at plan-time, for each signal separately.  

\begin{definition}[Single-signal CE in a MONFG under SER]

\label{def:ce_ser_single}
A probability vector $\sigma^{CE}$ on $\mathcal{A}$ is a single-signal correlated equilibrium in a MONFG under SER if for all players $i \in \{1,...,N\}$, given a recommended action $a_i^r$, and for any alternative action $a_i$:

\begin{equation}
    u_i\bigg[\cfrac{\sum_{a_{-i} \in \mathcal{A}_{-i}} \sigma^{CE}(a_{-i},a_i^{r}) \mathbf{p}_i(a_{-i},a_i^{r})}{\sum_{a_{-i} \in \mathcal{A}_{-i}} \sigma^{CE}(a_{-i},a_i^{r})}\bigg]  \geq 
    u_i\bigg[ \cfrac{\sum_{a_{-i} \in \mathcal{A}_{-i}} \sigma^{CE}(a_{-i},a_i^{r}) \mathbf{p}_i(a_{-i},a_i)}{\sum_{a_{-i} \in \mathcal{A}_{-i}} \sigma^{CE}(a_{-i},a_i^{r})}\bigg]
    \label{eqn:ce_SER_single}
\end{equation}

i.e. $\sigma^{CE}$ is a single-signal correlated equilibrium under SER if no agent can increase the \emph{utility of her expected payoffs} by deviating unilaterally from the given action recommendation in $\sigma^{CE}$.
\end{definition}

\subsubsection*{Multi-signal CE under SER}
The single-signal CE for MONFGs assumes that even if the MONFG is played multiple times, there will be one possible signal. Alternatively, the signal may change every time the game is played. I.e., the scalarisation is performed after marginalising over the entire correlated strategy probability distribution.

\begin{definition}[Multi-signal CE in a MONFG under SER]
\label{def:ce_ser_multi}
A probability vector $\sigma^{CE}$ on $\mathcal{A}$ is a multi-signal correlated equilibrium in a MONFG under SER if for all players $i \in \{1,...,N\}$ and for any strategy modification $\delta_i$:
\begin{equation}
u_i \big[ \E \mathbf{p}_i(\sigma^{CE}) \big] \geq  u_i \big[ \E \mathbf{p}_i(\delta_i(\sigma^{CE})) \big]
\label{eqn:ce_ser}
\end{equation}
\noindent i.e. $\sigma^{CE}$ is a multi-signal correlated equilibrium under SER if no agent can increase the \emph{utility of her expected payoffs} by deviating unilaterally from the given action recommendations in $\sigma^{CE}$.
\end{definition}

Notice that while the ESR case is equivalent to solving the CE for the corresponding single-objective trade-off game, the SER case leads to a much more complicated situation. In a general case, when no restriction is imposed on the form of the utility function, we may end up having to solve a non-linear optimisation problem.

\subsection{Theoretical Considerations}
\label{sec:theory}
\begin{theorem} 
Every finite MONFG where each agent seeks to maximise the expected utility of its payoff vectors (ESR) has at least one Nash equilibrium.

\end{theorem}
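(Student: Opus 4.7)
The plan is to prove this by reduction to the classical Nash theorem for single-objective finite NFGs. Given a MONFG $G = (N, \mathcal{A}, \mathbf{p})$ together with utility functions $u_i : \mathbb{R}^d \to \mathbb{R}$ for each player, I would construct an auxiliary scalar NFG $G' = (N, \mathcal{A}, \mathbf{p}')$ on the same player set and the same action sets, where the scalar payoff is defined per action profile by $p'_i(\mathbf{a}) := u_i(\mathbf{p}_i(\mathbf{a}))$. Notice that this is well-defined and finite since $\mathcal{A}$ is finite and $u_i$ is a real-valued function on $\mathbb{R}^d$.

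The key step is to observe that the ESR objective of each agent in $G$ coincides exactly with the expected (scalar) payoff in $G'$. Concretely, for any mixed-strategy profile $\pi \in \Pi$, since $\pi$ induces a product distribution over the finite set $\mathcal{A}$,
\begin{equation*}
\E u_i\bigl[\mathbf{p}_i(\pi)\bigr] \;=\; \sum_{\mathbf{a} \in \mathcal{A}} \pi(\mathbf{a})\, u_i(\mathbf{p}_i(\mathbf{a})) \;=\; \sum_{\mathbf{a} \in \mathcal{A}} \pi(\mathbf{a})\, p'_i(\mathbf{a}) \;=\; \E p'_i(\pi),
\end{equation*}
where $\pi(\mathbf{a}) = \prod_j \pi_j(a_j)$. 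This identity holds because the expectation is linear and acts entry-wise over the finitely many realisations of $\mathbf{a}$; the utility function can therefore be absorbed into the per-profile payoff without any further assumption on $u_i$ (linearity, monotonicity, or even continuity are not needed).

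Consequently, the ESR Nash equilibrium condition from Definition \ref{def:ne_esr} applied to $G$ is literally the standard Nash equilibrium condition from Definition \ref{def:ne} applied to $G'$: for every player $i$ and every $\pi_i \in \Pi_i$, $\E p'_i(\pi_i^{NE}, \pi_{-i}^{NE}) \geq \E p'_i(\pi_i, \pi_{-i}^{NE})$. Hence the set of ESR Nash equilibria of $G$ equals the set of Nash equilibria of $G'$. Since $G'$ is a finite $n$-person NFG, \citet{Nash1951Non}'s existence theorem guarantees at least one mixed-strategy Nash equilibrium in $G'$, which is therefore an ESR Nash equilibrium of $G$.

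There is no substantive obstacle here: the whole argument rests on the fact that once the utility function is taken inside the expectation, it reduces a MONFG under ESR to an ordinary single-objective NFG with real-valued payoffs, to which Nash's theorem applies verbatim. The only thing worth emphasising explicitly is that this reduction specifically fails for SER, because in the SER definition the utility $u_i$ is applied \emph{after} the expectation $\E \mathbf{p}_i$ over the joint strategy, so there is no corresponding per-profile scalar payoff function whose expectation recovers the SER objective; this asymmetry motivates the non-existence results developed later in the paper.
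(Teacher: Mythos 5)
Your proposal is correct and follows essentially the same route as the paper's own proof: construct the trade-off game $G'$ with per-profile payoffs $f_i(\mathbf{a}) = u_i(\mathbf{p}_i(\mathbf{a}))$ and invoke Nash's existence theorem for finite NFGs. The only difference is that you spell out the expectation identity $\E u_i[\mathbf{p}_i(\pi)] = \E p'_i(\pi)$ explicitly, which the paper leaves implicit, so your write-up is if anything slightly more complete.
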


\begin{proof}
In the ESR case, any MONFG can be reduced to its corresponding single-objective trade-off game $G'$, as players will apply the utility function on their payoff vectors after every interaction. We proceed with showing how one can construct $G'$.

Consider the following finite normal-form game $G' = (N, \mathcal{A}, f)$, where $N$ and $\mathcal{A}$ are the same as in the original MONFG. According to Definition~\ref{def:nfg}, the payoff function for $G'$: $f = (f_1, \ldots, f_n)$. 

We define each component $f_i\colon \mathcal{A} \to \mathbb{R}$ as the composition between player's $i$ utility function $u_i \colon \mathbb{R}^d \to \mathbb{R}$ and her vectorial payoff function $\mathbf{p}_i \colon \mathcal{A} \to \mathbb{R}^d$: $$ f_i(a) = (u_i \circ \mathbf{p}_i)(a) = u_i(\mathbf{p}_i(a)), \forall a \in \mathcal{A}$$

Thus, in the ESR case, any MONFG is reduced to a corresponding single-objective trade-off finite NFG that can be constructed as shown above. According to the Nash equilibrium existence theorem \cite{Nash1951Non}, the resulting finite NFG $G'$ has at least one Nash equilibrium. 
\end{proof}

\begin{theorem}
In finite MONFGs, when linear utility functions are used, the ESR and SER optimisation criteria are equivalent.\footnote{As is the case for single-agent decision problems \citep{roijers2013survey,roijersPhD}.}
\end{theorem}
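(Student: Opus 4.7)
The plan is to exploit the fact that a linear utility function commutes with the expectation operator by linearity of expectation. Concretely, fix any agent $i$ and any distribution over payoff vectors $\mathbf{p}_i$ induced by a joint mixed strategy $\boldsymbol{\pi}$ (or by a correlated strategy $\sigma$). With $u_i$ as in equation~(\ref{eqn:linear_utility}) and weight vector $\mathbf{w}$, a one-line calculation gives
$$\E[u_i(\mathbf{p}_i)] = \E\Big[\sum_{d \in D} w_d\, p_{i,d}\Big] = \sum_{d \in D} w_d\, \E[p_{i,d}] = u_i(\E[\mathbf{p}_i]).$$
So the ESR objective $\E[u_i(\mathbf{p}_i)]$ and the SER objective $u_i(\E[\mathbf{p}_i])$ agree pointwise on every strategy profile available to agent~$i$.

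From this pointwise identity, the equivalence of the optimisation criteria follows immediately. Applied to the two sides of the ESR Nash condition in Definition~\ref{def:ne_esr}, it reproduces exactly the two sides of the SER Nash condition in Definition~\ref{def:ne_ser}, so a profile $\pi^{NE}$ satisfies one inequality iff it satisfies the other; hence the Nash equilibrium sets under ESR and SER coincide. The identical argument, with $\boldsymbol{\pi}$ replaced by the correlated strategy $\sigma$ (or its modification $\delta_i(\sigma)$), shows that the ESR correlated equilibrium condition (Definition~\ref{def:ce_esr}) agrees with the multi-signal SER correlated equilibrium condition (Definition~\ref{def:ce_ser_multi}). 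For the single-signal SER case (Definition~\ref{def:ce_ser_single}), one applies the commutativity to the conditional expectation given each recommendation $a_i^r$, after which the inequality collapses into the corresponding ESR inequality conditioned on the same signal.

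There is no real obstacle here beyond being careful about what equivalence means. The substantive content is the single commutativity step; the remaining work is a uniform substitution into each of the five equilibrium definitions of Section~\ref{sec:definitions}. It is worth emphasising in the write-up that this equivalence is precisely what fails in the nonlinear case: when $u_i$ is not affine, Jensen-type gaps between $\E[u_i(\mathbf{p}_i)]$ and $u_i(\E[\mathbf{p}_i])$ appear, which foreshadows the separation results the paper develops later.
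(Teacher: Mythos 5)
Your proof is correct and takes essentially the same route as the paper: both rest on the single observation that a linear utility function commutes with expectation, so the ESR and SER objectives coincide pointwise and each equilibrium inequality transforms into its counterpart. The only differences are cosmetic --- you justify the commutation directly by linearity of expectation where the paper invokes the equality case of Jensen's inequality, and you additionally spell out the single-signal CE case, which the paper leaves implicit.
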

\begin{proof} 
Let $\pi^{NE}$ be the NE strategy profile under the ESR optimisation criteria, according to Definition~\ref{def:ne_esr} and for each player $i$ let $u_i$ be a linear scalarisation function, according to Equation~\ref{eqn:linear_utility}.

Due to the fact that $u_i$ is a linear function,  Jensen's inequality \cite{jensen1906fonctions} allows us to rewrite each term of Equation~\ref{eqn:ne_esr} as follows: 
\begin{equation}
\begin{split}
\E u_i\big[\mathbf{p}_i(\pi_{i}^{NE} \cup \pi_{-i}^{NE}) \big] = 
u_i \big[ \E \mathbf{p}_i(\pi_{i}^{NE} \cup \pi_{-i}^{NE}) \big] 
\end{split}
\label{eqn:ne_linear1}
\end{equation}
\begin{equation}
\begin{split}
\E u_i\big[\mathbf{p}_i(\pi_i \cup \pi_{-i}^{NE}) \big] = u_i \big[ \E \mathbf{p}_i(\pi_i \cup \pi_{-i}^{NE}) \big]
\end{split}
\label{eqn:ne_linear2}
\end{equation}
Notice that by replacing the terms from Equation~\ref{eqn:ne_esr} according to Equations~\ref{eqn:ne_linear1} and \ref{eqn:ne_linear2} we obtain the definition of the NE under SER (Equation~\ref{eqn:ne_ser}). 
The same procedure can be applied for CE, to transition from Equation~\ref{eqn:ce_esr} to \ref{eqn:ce_ser} and prove that, under a linear utility function, the ESR and SER criteria are also equivalent for CE. 
\end{proof}

When considering a more general case, with $u_i$ being a non-linear function, despite the fact that Jensen's inequality \citep{jensen1906fonctions} would allow us to define inequality relations between the terms in Equations~\ref{eqn:ne_linear1} and \ref{eqn:ne_linear2} (when constraining $u_i$ to be convex or concave), we have no guarantee that the set of NE and CE remains the same under the two optimisation criteria ESR and SER. Thus, no clear conclusions can be drawn when generalising the form of the utility function. Furthermore, as we show below using a concrete example, in the general case, the ESR and SER criteria are not equivalent.

\begin{theorem}
In finite MONFGs, where each agent seeks to maximise the utility of its expected payoff vectors (SER), Nash equilibria need not exist.
\label{th:ser_nash}
\end{theorem}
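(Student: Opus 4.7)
The claim is an existence-negation, so the plan is to exhibit a counterexample: a concrete finite MONFG together with non-linear utility functions for which no mixed-strategy profile satisfies Definition~\ref{def:ne_ser}. First I would construct a small game, e.g.\ a $2 \times 2$ MONFG with $d=2$ objectives and vector payoffs chosen so that the two objectives \emph{compete} along orthogonal axes (action $A$ for each player dominates objective~1, action $B$ dominates objective~2). A minimal choice such as $\mathbf{p}(A,A)=(4,0)$, $\mathbf{p}(A,B)=(3,1)$, $\mathbf{p}(B,A)=(1,3)$, $\mathbf{p}(B,B)=(0,4)$ (possibly slightly asymmetric for the two players) is the kind of structure I would use. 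For the scalarisation I would pick a strictly non-linear utility such as $u_i(\mathbf{x}) = x_1 \cdot x_2$, which rewards balanced vectors and makes the scalarised objective strictly quasi-convex in each player's own mixing probability after the expectation is taken.

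The core steps of the argument would then be: let $\pi_1 = (p, 1-p)$ and $\pi_2 = (q, 1-q)$ be an arbitrary mixed profile, compute the expected payoff vector $\E\mathbf{p}_i(\pi_1, \pi_2)$ componentwise as an affine function of $p$ and $q$, substitute into $u_i$, and differentiate with respect to $p$ (resp.\ $q$) to characterise each player's best response correspondence. Because $u_i$ is applied \emph{after} the expectation, the scalarised payoff becomes a quadratic in the player's own mixing probability with a coefficient on $p^2$ (resp.\ $q^2$) whose sign depends on the opponent's strategy. By choosing payoffs so that this coefficient is positive over the entire opponent-strategy simplex, the scalarised payoff is strictly convex in the player's own parameter, so the best response always lies at an extreme point, i.e.\ the best-response correspondence is pure-strategy valued. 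I would then show that the resulting pure best-response map has no fixed point: the induced game on pure strategies cycles (e.g.\ $A \to B \to A$ for the responder), mimicking a matching-pennies-like structure.

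Finally I would verify that no \emph{mixed} profile can be an equilibrium either. The key observation is that under strict convexity of $u_i \circ \E[\mathbf{p}_i(\cdot,\pi_{-i})]$ in player~$i$'s own mixing parameter, any interior mix is strictly dominated by one of the two pure deviations; hence a Nash equilibrium must be pure. Combined with the previous step this rules out every strategy profile. The main obstacle is the last part: I need to ensure that the convexity (not concavity) of the scalarised payoff holds uniformly over the opponent's entire mixed-strategy simplex, not just at pure opponent strategies; this requires a careful algebraic check of the $p^2$-coefficient as a function of $q$. If the natural symmetric choice does not yield this uniformly, I would perturb the payoff entries slightly, or equivalently use an explicit threshold utility of the form in Eqn.~\ref{eqn:nonlinear_utility_example} where the discontinuity makes the non-existence easier to see but requires extra care at the threshold boundary. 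Either route reduces to a finite algebraic verification that I would present in a table of best responses concluding the proof.
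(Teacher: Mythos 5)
Your high-level plan --- refute existence by exhibiting a concrete finite MONFG with non-linear utilities --- is exactly the paper's strategy, but your concrete instantiation contains a sign error that the paper's construction is specifically designed to avoid. With the payoffs you propose, the expected payoff vector is affine in each player's own mixing probability: writing $p,q$ for the probabilities of the first action, $\E[\mathbf{p}] = (3p+q,\; 4-3p-q)$. Composing with $u_i(\mathbf{x})=x_1 x_2$ gives $(3p+q)(4-3p-q)$, whose coefficient on $p^2$ is $-9$: the scalarised payoff is strictly \emph{concave} in the player's own parameter, not convex or quasi-convex as you assert, so best responses are interior rather than extreme. This is not a repairable detail. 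If every player's scalarised expected payoff is concave in her own strategy, the best-response correspondence is non-empty, convex-valued and upper hemicontinuous, so Kakutani's fixed point theorem \emph{guarantees} a Nash equilibrium; indeed your symmetric game with the product utility for both players has a continuum of equilibria along $3p+q=2$. Switching both players to a convexity-inducing utility such as $x_1^2+x_2^2$ does force pure-valued best responses as you intend, but then your matrix has the pure equilibrium $(A,A)$ (both receive $(4,0)$ with utility $16$, and any unilateral deviation, pure or mixed, yields at most $10$), and with identical payoff vectors and identical convex utilities the scalarised argmax over pure profiles is always an equilibrium, so no matching-pennies cycle can arise from a symmetric construction.

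The idea you are missing is that the two players must be given \emph{opposing} utility functions on a constant-sum payoff structure. The paper pits $u_1(\mathbf{x})=x_1^2+x_2^2$ (convex along the line $x_1+x_2=4$, so player 1's best response always pushes the expected vector to an extreme) against $u_2(\mathbf{x})=x_1 x_2$ (concave along that line, so player 2's unique best response is the balanced point $(2,2)$), in a $3\times 3$ game arranged so that player 2 can always reach exact balance by a unilateral deviation and player 1 can always break it. Any profile with a balanced expected vector gives player 1 a profitable deviation to a pure action; any profile with an imbalanced expected vector gives player 2 a profitable deviation back to balance; hence no profile is stable. Your closing remark --- that you must verify convexity rather than concavity uniformly over the opponent's simplex --- correctly identifies where your argument is fragile, but the fix is neither a perturbation of the payoffs nor a threshold utility: it is the asymmetry between the two agents' utility functions.
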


\begin{proof} 
Consider the following game. There are two agents that can each choose from three actions: {\it left}, {\it middle}, or {\it right}. The payoff vectors are identical for both agents, and are specified by the payoff matrix in Table \ref{table:balance}. 

The utility functions of the agents are given by $u_1( [p^1,p^2] ) = p^1 \cdot p^1 +  p^2 \cdot p^2$ for agent 1, and $u_2( [p^1,p^2] ) = p^1 \cdot p^2$ for agent 2.\footnote{Please note that this is a monotonically increasing payoff function for positive-only payoffs. In the case of negative payoffs we can set the utility to $0$ as soon as the payoff value for one of the objectives becomes negative.}
\begin{table}[ht!]
  \centering
    \setlength{\extrarowheight}{2pt}
    \begin{tabular}{cc|c|c|c|}
      & \multicolumn{1}{c}{} & \multicolumn{1}{c}{$L$}  & \multicolumn{1}{c}{$M$} & \multicolumn{1}{c}{$R$} \\\cline{3-5}
      \multirow{2}*{}  & $L$ & $(4,0)$ & $(3,1)$ & $(2,2)$ \\\cline{3-5}
      & $M$ & $(3,1)$ & $(2,2)$ & $(1,3)$ \\\cline{3-5}
      & $R$ & $(2,2)$ & $(1,3)$ & $(0,4)$ \\\cline{3-5}
    \end{tabular}
    \vskip 1em
    \caption{The (Im)balancing act game.}
    \label{table:balance}
  \end{table}
In this game, it is easy to see that agent 1 will always want to move towards an as imbalanced payoff vector as possible, i.e., concentrate as much of the value in one objective, while agent 2 will always want to move to a balanced solution, i.e, spread out the value across the objectives equally. Under SER, the expectation is taken before the utility function is applied. Therefore, a mixed strategy will lead to an expected payoff vector for both agents. If the expected payoff vector is balanced, i.e., $[2,2]$, agent 1 will have an incentive to deterministically take action $L$ or $R$, irrespective of its current strategy. If the payoff vector is imbalanced, e.g., $[2-x, 2+x]$, agent 2 will have an incentive to compensate for this imbalance, and play \textit{left} more often to compensate if $x$ is positive, and \textit{right} more often if $x$ is negative, and he is always able to do so. Hence, at least one of the agents will always have an incentive to deviate from its strategy, and therefore there is no Nash equilibrium under SER.
\end{proof}

\begin{table}[ht]
  \centering
    \setlength{\extrarowheight}{2pt}
    \begin{tabular}{cc|c|c|c|}
      & \multicolumn{1}{c}{} & \multicolumn{1}{c}{$L$}  & \multicolumn{1}{c}{$M$} & \multicolumn{1}{c}{$R$} \\\cline{3-5}
      \multirow{2}*{}  & $L$ & $(16,0)$ & $(10,3)$ & $(8,4)$ \\\cline{3-5}
      & $M$ & $(10,3)$ & $(8,4)$ & $(10,3)$ \\\cline{3-5}
      & $R$ & $(8,4)$ & $(10,3)$ & $(16,0)$ \\\cline{3-5}
    \end{tabular}
    \vskip 1em
    \caption{The (Im)balancing act game under ESR with utility functions $u_1(\mathbf{p}) = p^1 \cdot p^1 +  p^2 \cdot p^2$ and $u_2(\mathbf{p}) = p^1 \cdot p^2$ applied.}
    \label{table:balance_esr}
  \end{table}

We also note that under ESR there is a mixed Nash equilibrium for the game in Table \ref{table:balance}, i.e., agent 2 plays {\it middle} deterministically, and agent 1 plays {\it left} with a probability $0.5$ and {\it right} with a probability $0.5$, leading to an expected utility of $3^2+1^2 = 10$ for agent 1, and $3 \cdot 1 = 3$ for agent 2. This is not a Nash equilibrium under SER, as the expected payoff vector is $[2,2]$ for this strategy, and agent 1 has an incentive to play either  {\it left} or  {\it right} deterministically, which would lead to an expected payoff vector of $[3,1]$ or $[1,3]$, yielding a higher utility for agent 1 if agent 2 does not adjust its strategy. Hence, the SER and ESR case are fundamentally different. 

\begin{theorem}
In finite MONFGs, where each agent seeks to maximise the utility of its expected payoff vectors given a signal (single-signal CE under SER), correlated equilibria can exist when Nash equilibria do not.
\label{th:ser_ce_single}
\end{theorem}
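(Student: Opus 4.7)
The plan is to prove the theorem by explicit construction, reusing the (Im)balancing act game of Theorem~\ref{th:ser_nash} in which, as already shown, no Nash equilibrium exists under SER. It then suffices to exhibit a single correlated strategy on that game that satisfies Definition~\ref{def:ce_ser_single}.

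First I would propose the correlated strategy $\sigma^{CE}$ that places probability $1/2$ on the joint action $(L,M)$ and probability $1/2$ on $(R,M)$, and zero elsewhere. The intuition behind this choice is that agent~1 (who prefers imbalanced payoff vectors under $u_1$) is steered into the asymmetric outcomes $(3,1)$ and $(1,3)$, while agent~2 (who prefers balanced vectors under $u_2$) sees an expected payoff of $(2,2)$ after conditioning on the single signal $M$ they ever receive. Crucially, unlike in the NE proof, once the signals are fixed neither agent can reshape the marginal distribution of their opponent by unilateral deviation.

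Next I would verify the inequality in Equation~\ref{eqn:ce_SER_single} for every (player, recommended action) pair with positive marginal probability. For agent~1, the recommendations $L$ and $R$ each occur with probability $1/2$; conditional on either, agent~2 deterministically plays $M$, giving conditional expected payoff vectors $(3,1)$ and $(1,3)$ respectively, both of utility $10$ to agent~1. Any deviation yields a conditional payoff of either $(2,2)$ (utility $8$) or the opposite extreme $(1,3)$ or $(3,1)$ (again utility $10$), so the inequality holds weakly. For agent~2, the only recommendation with positive marginal probability is $M$; conditional on $M$, agent~1 plays $L$ or $R$ with equal probability, so the conditional expected payoff is $\tfrac{1}{2}(3,1) + \tfrac{1}{2}(1,3) = (2,2)$, yielding utility $4$. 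Deviating to $L$ produces conditional expected payoff $\tfrac{1}{2}(4,0) + \tfrac{1}{2}(2,2) = (3,1)$ of utility $3$, and deviating to $R$ symmetrically yields $(1,3)$ of utility $3$, so the inequality again holds.

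I expect the main subtlety, rather than any calculation, to lie in correctly applying the single-signal definition: the expectation under SER must be taken conditional on the individual signal received, not over the entire joint distribution, so one must check the inequality once per signal for each player. Combined with the already established non-existence of NE under SER for this game, the existence of this $\sigma^{CE}$ yields the claim.
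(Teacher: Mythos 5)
Your proposal is correct and takes essentially the same approach as the paper: an explicit correlated strategy on the (Im)balancing act game supported on the joint actions $(L,M)$ and $(R,M)$, with the single-signal conditions verified per recommendation; the only difference is your choice of probabilities ($\tfrac{1}{2},\tfrac{1}{2}$ versus the paper's $0.75,0.25$), and both choices satisfy Eqn.~\ref{eqn:ce_SER_single}.
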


\begin{proof} 
Consider the action suggestions in Table~\ref{table:balance_ser_ce1} for the (Im)balancing act game. 

\begin{table}[h!]
  \centering
    \setlength{\extrarowheight}{2pt}
    \begin{tabular}{cc|c|c|c|}
      & \multicolumn{1}{c}{} & \multicolumn{1}{c}{$L$}  & \multicolumn{1}{c}{$M$} & \multicolumn{1}{c}{$R$} \\\cline{3-5}
      \multirow{2}*{}  & $L$ & $0$ & $0.75$ & $0$ \\\cline{3-5}
      & $M$ & $0$ & $0$ & $0$ \\\cline{3-5}
      & $R$ & $0$ & $0.25$ & $0$ \\\cline{3-5}
    \end{tabular}
    \vskip 1em
    \caption{A correlated equilibrium in the (Im)balancing act game under SER.}
    \label{table:balance_ser_ce1}
  \end{table}
It may easily be shown that the action suggestions in Table \ref{table:balance_ser_ce1} satisfy the conditions given in Eqn. \ref{eqn:ce_SER_single} for a single-signal CE in a MONFG under SER:
\begin{itemize}
\item When L is suggested to the row player, the expected payoff vectors and SER for it to play L, M or R are: 
        \begin{itemize}
            \item L: $\E(\mathbf{p}) = (0.75 \cdot [3,1])/0.75 = [3,1]$, SER $=3^2 + 1^2 = 10$ 
            \item M: $\E(\mathbf{p}) = (0.75 \cdot [2,2])/0.75 = [2,2]$, SER $=2^2 + 2^2 = 8$
            \item R: $\E(\mathbf{p}) = (0.75 \cdot [1,3])/0.75 = [1,3]$, SER $=1^2 + 3^2 = 10$
        \end{itemize}
    \item When R is suggested to the row player, the expected payoff vectors and SER for it to play L, M or R are:
        \begin{itemize}
            \item L: $\E(\mathbf{p}) = (0.25 \cdot [3,1])/0.25 = [3,1]$, SER $=3^2 + 1^2 = 10$
            \item M: $\E(\mathbf{p}) = (0.25 \cdot [2,2])/0.25 = [2,2]$, SER $=2^2 + 2^2 = 8$
            \item R: $\E(\mathbf{p}) = (0.25 \cdot [1,3])/0.25 = [1,3]$, SER $=1^2 + 3^2 = 10$
        \end{itemize}
    \item When M is suggested to the column player, the expected payoff vectors and SER for it to play L, M or R are:
    \begin{itemize}
        \item L: $\E(\mathbf{p}) = (0.75 \cdot [4,0] + 0.25 \cdot [2,2])/(0.75+0.25) = [3.5,0.5]$, SER $=3.5 \cdot 0.5 = 1.75$
        \item M: $\E(\mathbf{p}) = (0.75 \cdot [3,1] + 0.25 \cdot [1,3])/(0.75+0.25) = [2.5,1.5]$, SER $=2.5 \cdot 1.5 = 3.75$
        \item R: $\E(\mathbf{p}) = (0.75 \cdot [2,2] + 0.25 \cdot [0,4])/(0.75+0.25) = [1.5,2.5]$, SER $=1.5 \cdot 2.5 = 3.75$
    \end{itemize}
\end{itemize}

In all the cases above, neither of the agents may increase the utility of their expected payoff vectors given the recommendations, by deviating from the suggested actions in Table \ref{table:balance_ser_ce1}, assuming that the other agent follows the suggestions. Therefore CE may exist in MONFGs under SER when conditioning the expectation on a given signal, even in cases where Nash equlilibria do not exist.
\end{proof}

\begin{theorem}
In finite MONFGs, where each agent seeks to maximise the utility of its expected payoff vectors over all the given signals (multi-signal CE under SER), correlated equilibria need not exist.
\label{th:ser_ce}
\end{theorem}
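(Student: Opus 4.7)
The natural approach is to prove non-existence by counterexample, reusing the (Im)balancing act game of Table~\ref{table:balance} together with the utilities $u_1(\mathbf{p}) = (p^1)^2 + (p^2)^2$ and $u_2(\mathbf{p}) = p^1 p^2$ from the proof of Theorem~\ref{th:ser_nash}. The crucial structural feature I would exploit is that every entry of the payoff matrix sums to $4$, so under any correlated strategy $\sigma$ the expected payoff vector takes the form $[x, 4-x]$ with $x = \E[p^1] \in [0,4]$. In this one-dimensional reduction agent~$1$'s utility becomes $2(x-2)^2 + 8$ (strictly convex, maximised at the extremes) while agent~$2$'s utility becomes $4 - (x-2)^2$ (strictly concave, maximised at $x = 2$), so the two agents have diametrically opposed preferences over $|x-2|$.

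Suppose, for contradiction, that $\sigma$ is a multi-signal CE, and let $(\alpha,\beta,\gamma)$ and $(a,b,c)$ denote the action marginals of agent~$2$ and agent~$1$ over $L,M,R$. My plan is to test only the three constant strategy modifications $\delta_i \equiv L$, $\delta_i \equiv M$, $\delta_i \equiv R$ for each agent: each produces an expected payoff that is a marginal-weighted average of a single row or column of the matrix, so its first coordinate is a simple linear expression in the opponent's marginals (e.g.\ agent~$1$'s deviation to constant $L$ gives first coordinate $2 + 2\alpha + \beta$, and similarly for the other five cases). Combining the $L$- and $R$-deviations of agent~$1$ and using $\alpha + \beta + \gamma = 1$ forces $|x-2| \geq 1$; the analogous pair for agent~$2$ forces $|x-2| \leq 1$. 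Hence $|x-2| = 1$, and the agent-$2$ bounds tighten to equalities $2a+b = b+2c = 1$, which yields $a = c$. But then agent~$2$'s remaining modification $\delta \equiv M$ produces first coordinate $2 + (a-c) = 2$, forcing $|x-2| \leq 0$ in contradiction with $|x-2| = 1$.

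The main obstacle is justifying why restricting to constant $\delta_i$ is legitimate. This is valid because Definition~\ref{def:ce_ser_multi} quantifies over every strategy modification, so producing any small family whose induced inequalities are jointly infeasible suffices to refute existence. Constant modifications are additionally convenient because their expected payoffs depend only on the opponent's marginal of $\sigma$, not on how $\sigma$ correlates the two agents' actions, so the argument does not need to track the full joint distribution. The deeper reason the counterexample succeeds is that placing $u_i$ outside the expectation collapses the payoff geometry onto the scalar $x$, where agent~$1$'s convex $u_1$ and agent~$2$'s concave $u_2$ exert incompatible directional pressures -- a phenomenon that is absent in the linear case (where ESR and SER coincide) and in the single-signal SER case of Theorem~\ref{th:ser_ce_single}, where conditioning on each recommendation restores enough flexibility for a CE to exist.
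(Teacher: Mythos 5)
Your proposal is correct, and it uses the same counterexample as the paper (the (Im)balancing act game of Table~\ref{table:balance} with $u_1(\mathbf{p})=(p^1)^2+(p^2)^2$ and $u_2(\mathbf{p})=p^1p^2$), but the actual argument is genuinely different in character and substantially more rigorous. The paper's proof is an informal dynamic narrative: player~1 ``will have an incentive'' to push toward an imbalanced outcome, player~2 ``will have an incentive'' to rebalance, hence no $\sigma$ can be stable; it never verifies this for an arbitrary candidate correlated strategy. You instead give a static infeasibility argument: the observation that every payoff entry sums to $4$ collapses the problem to the scalar $x=\E[p^1]$, the constant strategy modifications depend only on the opponent's marginals through $m=2\alpha+\beta$ and $n=2a+b$, and the resulting finite system of deviation inequalities ($|x-2|\ge 1$ from agent~1's $L/R$ deviations, $|x-2|\le\min(n,2-n)\le 1$ from agent~2's $L/R$ deviations, then $n=1$, $a=c$, and the $M$-deviation forcing $x=2$) is jointly infeasible. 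I checked the arithmetic ($4\alpha+3\beta+2\gamma=2+2\alpha+\beta$, the utility reductions $2(x-2)^2+8$ and $4-(x-2)^2$, and $3a+2b+c=2+(a-c)$) and it is all correct; your justification for restricting to constant modifications — that Definition~\ref{def:ce_ser_multi} quantifies over all $\delta_i$, so exhibiting an infeasible subfamily suffices — is exactly the right point and is what the paper leaves implicit. In short, your version actually closes the gap in the paper's proof; what the paper's looser argument buys in exchange is only brevity and the explicit analogy to the Nash non-existence proof of Theorem~\ref{th:ser_nash}.
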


\begin{proof} 
In the case of a multi-signal CE, the agents are interested in their expected payoff vectors across all possible signals. In other words, to compute the expected payoff vectors, the signal must be marginalised out first. Therefore, the CE previously discussed for the single-signal case (Table~\ref{table:balance_ser_ce1}) is not a CE for the multi-signal case, i.e., 

Player 1 will have an incentive to deterministically take action $L$ or $R$, irrespective of the given signal. If the correlated strategy tries to incorporate this tendency, player 2 will have an incentive to deviate towards the options that offer her the most balanced outcome. Hence, similar to the proof for the non-existence of Nash-equilibria under SER, at least one of the agents will always have an incentive to deviate from the given recommendation, and therefore there is no multi-signal correlated equilibrium under SER. 
\end{proof}

We thus conclude that an MONFG under ESR with \textit{known} utility functions is equivalent to a single-objective NFG, and therefore all theory, including the existence of Nash equilibria and correlated equilibria, is implied. Under SER however, Nash equilibria and multi-signal correlated equilibria need not exist, and MONFGs are fundamentally more difficult than single-objective NFGs, even when the utility functions are known in advance.

\subsection{Additional Games for SER Analysis}
\label{sec:additional_games}
To further investigate the existence of Nash, single- and multi-signal correlated equilibria under scalarised expected returns (SER), we introduce two additional games that demonstrate different characteristics under these criteria. We consider for analysis the same non-linear utility functions as above: $u_1( [p^1,p^2] ) = p^1 \cdot p^1 +  p^2 \cdot p^2$ for player 1, and $u_2( [p^1,p^2] ) = p^1 \cdot p^2$ for player 2.

\subsubsection{The (Im)balancing act game without action M}
\begin{table}[h!]
\centering
\begin{tabular}{ccc}
 & L & R \\ \cline{2-3} 
\multicolumn{1}{c|}{L} & \multicolumn{1}{c|}{(4, 0)} & \multicolumn{1}{c|}{(2, 2)} \\ \cline{2-3} 
\multicolumn{1}{l|}{R} & \multicolumn{1}{l|}{(2, 2)} & \multicolumn{1}{l|}{(0, 4)} \\ \cline{2-3} 
\end{tabular}
\quad\quad
\begin{tabular}{ccc}
 & L & R \\ \cline{2-3} 
\multicolumn{1}{c|}{L} & \multicolumn{1}{c|}{0.25} & \multicolumn{1}{c|}{0.25} \\ \cline{2-3} 
\multicolumn{1}{l|}{R} & \multicolumn{1}{l|}{0.25} & \multicolumn{1}{l|}{0.25} \\ \cline{2-3} 
\end{tabular}
    \vskip 1em
    \caption{The (Im)balancing act game without action M (left), together with the corresponding correlated strategy (right).}
\label{table:balance_noM}
\end{table}

First, we derive a 2-player, 2-action, 2-objective game from the (Im)balancing act game (Table~\ref{table:balance}), by removing the middle action. The (Im)balancing act game without action M is presented in Table~\ref{table:balance_noM} (left).

Notice that in the case of NE and multi-signal CE, the dynamics of the game remain unchanged from the original 3-action version: player 1 will always have an incentive to deviate towards imbalanced payoffs, while player 2 desires the exact opposite. For the single-signal CE, we have the opportunity to offer the agents the chance to coordinate their actions and obtain a fair outcome over the two possible situations (i.e., a balanced outcome (2, 2) and an imbalanced outcome (4, 0) or (0, 4)), as shown in the right side of Table~\ref{table:balance_noM}. 

It may be shown that the correlated strategy in Table~\ref{table:balance_noM} (right) satisfies the conditions given in Eqn. \ref{eqn:ce_SER_single} for a single-signal CE in a MONFG under SER:
\begin{itemize}
    \item When L is suggested to the row player, the expected payoff vectors and SER for it to play L or R are: 
        \begin{itemize}
            \item L: $\E(\mathbf{p}) = (0.25 \cdot [4,0] + 0.25 \cdot [2,2])/(0.25+0.25) = [3,1]$, SER $=3^2 + 1^2 = 10$ 
            \item R: $\E(\mathbf{p}) = (0.25 \cdot [4,0] + 0.25 \cdot [2,2])/(0.25+0.25) = [3,1]$, SER $=3^2 + 1^2 = 10$ 
        \end{itemize}
    \item When R is suggested to the row player, the expected payoff vectors and SER for it to play L or R are:
        \begin{itemize}
            \item L: $\E(\mathbf{p}) =  (0.25 \cdot [4,0] + 0.25 \cdot [2,2])/(0.25+0.25) = [3,1]$, SER $=3^2 + 1^2 = 10$ 
            \item R: $\E(\mathbf{p}) =  (0.25 \cdot [4,0] + 0.25 \cdot [2,2])/(0.25+0.25) = [3,1]$, SER $=3^2 + 1^2 = 10$ 
        \end{itemize}
    \item When L is suggested to the column player, the expected payoff vectors and SER for it to play Lor R are:
    \begin{itemize}
        \item L: $\E(\mathbf{p}) = (0.25 \cdot [4,0] + 0.25 \cdot [2,2])/(0.25+0.25) = [3,1]$, SER $=3 \cdot 1 = 3$
        \item R: $\E(\mathbf{p}) = (0.25 \cdot [4,0] + 0.25 \cdot [2,2])/(0.25+0.25) = [3,1]$, SER $=3 \cdot 1 = 3$
    \end{itemize}
        \item When R is suggested to the column player, the expected payoff vectors and SER for it to play Lor R are:
    \begin{itemize}
        \item L: $\E(\mathbf{p}) = (0.25 \cdot [4,0] + 0.25 \cdot [2,2])/(0.25+0.25) = [3,1]$, SER $=3 \cdot 1 = 3$
        \item R: $\E(\mathbf{p}) = (0.25 \cdot [4,0] + 0.25 \cdot [2,2])/(0.25+0.25) = [3,1]$, SER $=3 \cdot 1 = 3$
    \end{itemize}
\end{itemize}

In all the cases above, neither of the agents may increase the utility of their expected payoff vectors given the recommendations, by deviating from the suggested actions, assuming that the other agent follows the suggestions. Therefore the signal suggested in the right side of Table~\ref{table:balance_noM} represents a single-signal correlated equilibrium for the (Im)balancing act game without action M. 

\subsubsection{A 3-action MONFG with NE and CE under SER} 
\label{sec:NEgame}
\begin{table}[h!]
  \centering
    \setlength{\extrarowheight}{2pt}
    \begin{tabular}{cc|c|c|c|}
      & \multicolumn{1}{c}{} & \multicolumn{1}{c}{$L$}  & \multicolumn{1}{c}{$M$} & \multicolumn{1}{c}{$R$} \\\cline{3-5}
      \multirow{2}*{}  & $L$ & $(4,1)$ & $(1,2)$ & $(2,1)$ \\\cline{3-5}
      & $M$ & $(3,1)$ & $(3,2)$ & $(1,2)$ \\\cline{3-5}
      & $R$ & $(1,2)$ & $(2,1)$ & $(1,3)$ \\\cline{3-5}
    \end{tabular}
    \quad
        \begin{tabular}{cc|c|c|c|}
      & \multicolumn{1}{c}{} & \multicolumn{1}{c}{$L$}  & \multicolumn{1}{c}{$M$} & \multicolumn{1}{c}{$R$} \\\cline{3-5}
      \multirow{2}*{}  & $L$ & $0.5$ & $0$ & $0$ \\\cline{3-5}
      & $M$ & $0$ & $0.5$ & $0$ \\\cline{3-5}
      & $R$ & $0$ & $0$ & $0$ \\\cline{3-5}
    \end{tabular}
        \vskip 1em
    \caption{A 3-action MONFG which has 3 pure strategy NE (left) -- (L,L), (M,M) and (R,R) -- when the row player uses utility function $u_1( [p^1,p^2] ) = p^1 \cdot p^1 +  p^2 \cdot p^2$ and the column player uses utility function $u_2( [p^1,p^2] ) = p^1 \cdot p^2$, with the corresponding proposed correlated strategy (right).}
    \label{table:MONFG_with_NE}
  \end{table}
  
  The final game we introduce for this work presents an example of a MONFG for which all the studied equilibria (i.e., NE, single- and multi-signal CE) exist under SER (Table~\ref{table:MONFG_with_NE}). There are 3 pure-strategy NE -- (L,L), (M,M) and (R,R), under the non-linear utility functions specified above.
Notice that player 1 will receive the highest SER under (L, L), while player 2 will prefer the (M, M) outcome. (R, R) is also a NE, but it is Pareto dominated by (L,L) and (M,M) and does not offer the best possible SER for either agent.

  Let us turn our attention to the single-signal CE. It may be shown that the correlated strategy proposed in Table \ref{table:MONFG_with_NE} (right) satisfies the conditions given in Eqn. \ref{eqn:ce_SER_single} for a single-signal CE in a MONFG under SER:
\begin{itemize}
    \item When L is suggested to the row player, the expected payoff vectors and SER for it to play L, M or R are: 
        \begin{itemize}
            \item L: $\E(\mathbf{p}) = (0.5 \cdot [4,1] )/0.5 = [4,1]$, SER $=4^2 + 1^2 = 17$ 
            \item M: $\E(\mathbf{p}) = (0.5 \cdot [3,1])/0.5 = [3,1]$, SER $=3^2 + 1^2 = 10$
            \item R: $\E(\mathbf{p}) = (0.5 \cdot [1,2])/0.5 = [1,2]$, SER $=1^2 + 2^2 = 5$
        \end{itemize}
    \item When M is suggested to the row player, the expected payoff vectors and SER for it to play L, M or R are:
        \begin{itemize}
            \item L: $\E(\mathbf{p}) = (0.5 \cdot [1,2])/0.5 = [1,2]$, SER $=1^2 + 2^2 = 5$
            \item M: $\E(\mathbf{p}) = (0.5 \cdot [3,2])/0.5 = [3,2]$, SER $=3^2 + 2^2 = 13$
            \item R: $\E(\mathbf{p}) = (0.5 \cdot [2,1])/0.5 = [2,1]$, SER $=2^2 + 1^2 = 5$
        \end{itemize}
    \item When L is suggested to the column player, the expected payoff vectors and SER for it to play L, M or R are:
    \begin{itemize}
        \item L: $\E(\mathbf{p}) = (0.5 \cdot [4,1])/0.5 = [4,1]$, SER $=4 \cdot 1 = 4$
        \item M: $\E(\mathbf{p}) = (0.5 \cdot [1,2])/(0.5 = [1,2]$, SER $=1 \cdot 2 = 2$
        \item R: $\E(\mathbf{p}) = (0.5 \cdot [2,1])/0.5 = [2,1]$, SER $=2 \cdot 1 = 2$
    \end{itemize}
    \item When M is suggested to the column player, the expected payoff vectors and SER for it to play L, M or R are:
    \begin{itemize}
        \item L: $\E(\mathbf{p}) = (0.5 \cdot [3,1])/0.5 = [3,1]$, SER $=3 \cdot 1 = 3$
        \item M: $\E(\mathbf{p}) = (0.5 \cdot [3,2])/0.5 = [3,2]$, SER $=3 \cdot 2 = 6$
        \item R: $\E(\mathbf{p}) = (0.5 \cdot [1,2])/0.5 = [1,2]$, SER $=1 \cdot 2 = 2$
    \end{itemize}
\end{itemize}

In all the cases above, neither of the agents may increase the utility of their expected payoff vectors given the recommendations, by deviating from the suggested actions, assuming that the other agent follows the suggestions. Therefore the signal suggested in the right side of Table~\ref{table:MONFG_with_NE} represents a single-signal correlated equilibria.

In single-objective normal form games, it is known that any convex combination of Nash equilibrium payoff profiles can be reached or achieved by a correlated equilibrium \citep{aumann1974subjectivity}. The relationship between Nash and correlated equilibria in multi-objective normal form games remains, however, an open question. In Section~\ref{sec:experiments} we empirically test whether the proposed correlated strategy, representing a convex combination between 2 pure NE under SER, is a multi-signal correlated equilibrium as well. We also note that in the single-objective case CE can achieve payoffs that lie outside the convex hull of NE payoffs, again a property not validated in the case of MONFGs.

\section{Experiments}
\label{sec:experiments}
To demonstrate the effect of the SER optimisation criterion on equilibria in MONFGs, with no action recommendations and in the case of a \emph{single- and multi-signal correlated equilibrium}, we conducted a series of experiments using the games introduced in the previous section in Tables~\ref{table:balance}, \ref{table:balance_noM} and \ref{table:MONFG_with_NE}. All experiments were repeated 100 times and had a duration of 10,000 episodes, where the MONFG game was played once per episode. 

Agents implemented a simple algorithm\footnote{We note that specialised algorithms exist to learn mixed-strategy Nash equilibria (e.g. \cite{fudenberg1993learning}) or correlated equilibria (e.g. \cite{arifovic2016learning}) in single-objective MAS. We leave the design and empirical evaluation of versions of these algorithms for learning or approximating equilibria in MOMAS under SER for future work.} to learn estimates of the expected vectors for each action according to the following update rule (i.e. a ``one-shot'' vectorial version of Q-learning \citep{Watkins89}):
\begin{equation}
    \mathbf{Q}(s_i,a_i) \leftarrow \mathbf{Q}(s_i,a_i) + \alpha [\mathbf{p}_{i}(s_i,a_i) - \mathbf{Q}(s_i,a_i)]
    \label{eqn:ser_learning}
\end{equation}
\noindent where $\mathbf{Q}(s_i,a_i)$ is an estimate of the expected value vector for selecting action $a_i$ when a private signal $s_i$ is received, $\mathbf{p}_{i}(s_i,a_i)$ is the payoff vector received by agent $i$ for selecting action $a_i$ when observing $s_i$, and $\alpha$ is the learning rate.

The private signals given to each agent allow us to test empirically whether agents will have an incentive to deviate from a single- or multi-signal correlate equilibrium in a MONFG under SER. For the experiments marked as ``No action recommendations'', in each episode agents received unchanging private signals with probability 1 (i.e. equivalent to the case where no private signals are present). Otherwise, the private signals received by each agent corresponded to the correlated action recommendations indicated for each considered MONFG. When signals were given, for the first 500 episodes, both agents followed the action recommendations in their private signals deterministically, so that the correlated equilibrium behaviour could be learned. For the last 9,500 episodes, agents continued to receive action recommendations, but selected their actions autonomously.

Agents implemented the $\epsilon$-greedy exploration strategy. As agents seek to optimise their action choices with respect to scalarised expected returns, they will determine the optimal mixed strategy (given the recommendation, where applicable), with probability $1-\epsilon$, or chose a random action with probability $\epsilon$. Each agents determines their optimal mixed strategy by solving a non-linear optimisation problem with the goal of maximising their scalarised expected returns, under their utility function and current Q-values\footnote{This non-linear optimisation problem is solved using the ``optimize'' module of the Scipy Python package \citep{2019arXiv190710121V}}. For all the experiments, the estimates of expected value vectors for each action were scalarised using the same utility functions as in Section \ref{sec:theory}. In the case of the single-signal CE, this expectation is taken under the given action recommendation, while for the multi-signal CE, the expectation is derived with respect to the entire CE signal the agent received, following Definitions~\ref{def:ce_ser_single} and \ref{def:ce_ser_multi}, respectively. This also implies that for the multi-signal correlated equilibria, each agent has information regarding the CE distribution over her own actions, but not over the entire joint-action space. For example, in the case of the (Im)balancing Act Game, player~1 knows that the CE distribution over her actions is $[0.75, 0, 0.25]$, but is not aware that player~2 will be recommended action ``M'' with probability 1, leaving this information to be acquired through the learning process.

All agents used a constant value of $\alpha=0.05$ for the learning rate. For the experiments without action recommendations, $\epsilon$ was initially set to $0.1$ in the first episode, and decayed by a factor $0.999$ in each subsequent episode. For the experiments where agents receive action recommendations, $\epsilon$ was set to $0.0$ in for the first $500$ episodes where the agents deterministically followed the recommendations from their private signals, after which $\epsilon$ was set to $0.1$ for episode $501$ and decayed by a factor $0.999$ in each subsequent episode. No attempt was made to conduct comprehensive parameter sweeps to optimise the values of $\alpha$ and $\epsilon$ which were used in either experiment.

\begin{figure}[ht!]
    \centering
    \begin{subfigure}[b]{0.32\textwidth}
        \includegraphics[width=\textwidth]{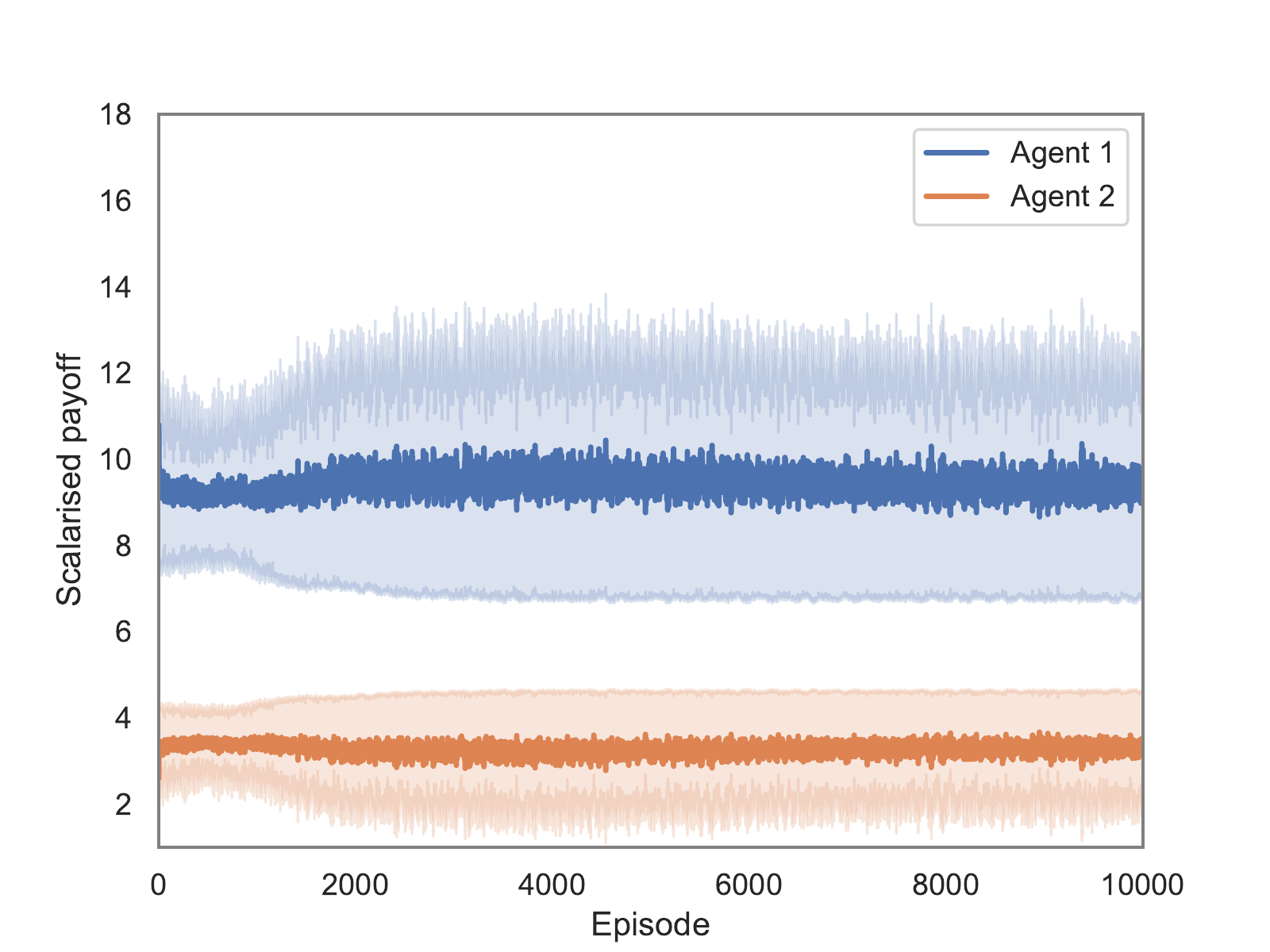}
    \caption{Scalarised payoffs obtained by each agent.}
    \label{fig:game1_NE_SER}
    \end{subfigure}
    \vspace{\baselineskip}
    \begin{subfigure}[b]{0.32\textwidth}
       \includegraphics[width=\textwidth]{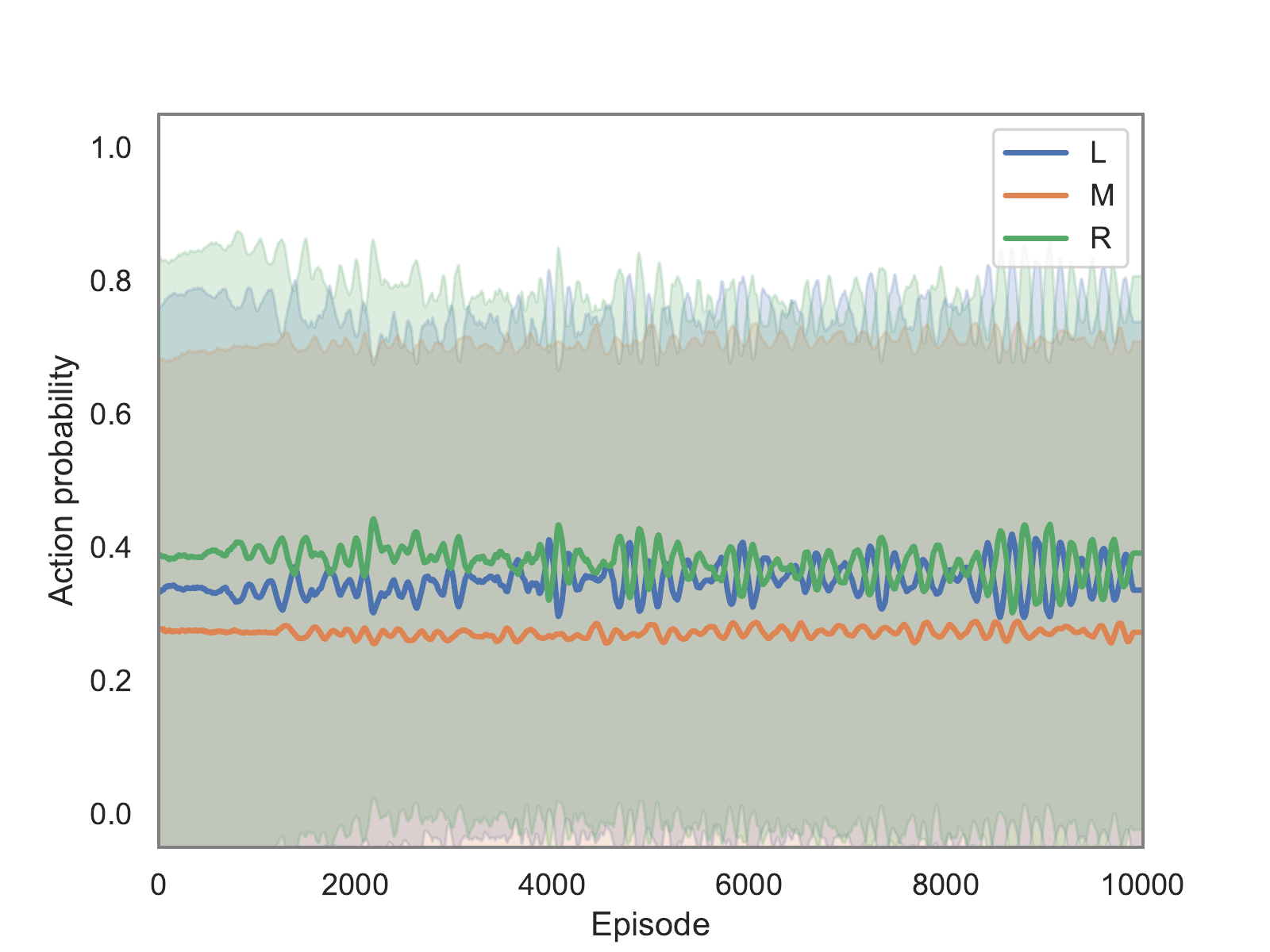}
    \caption[width=\textwidth]{Action selection probabilities of Agent 1.}
    \label{fig:game1_NE_A1}
    \end{subfigure}
    \begin{subfigure}[b]{0.32\textwidth}
        \includegraphics[width=\textwidth]{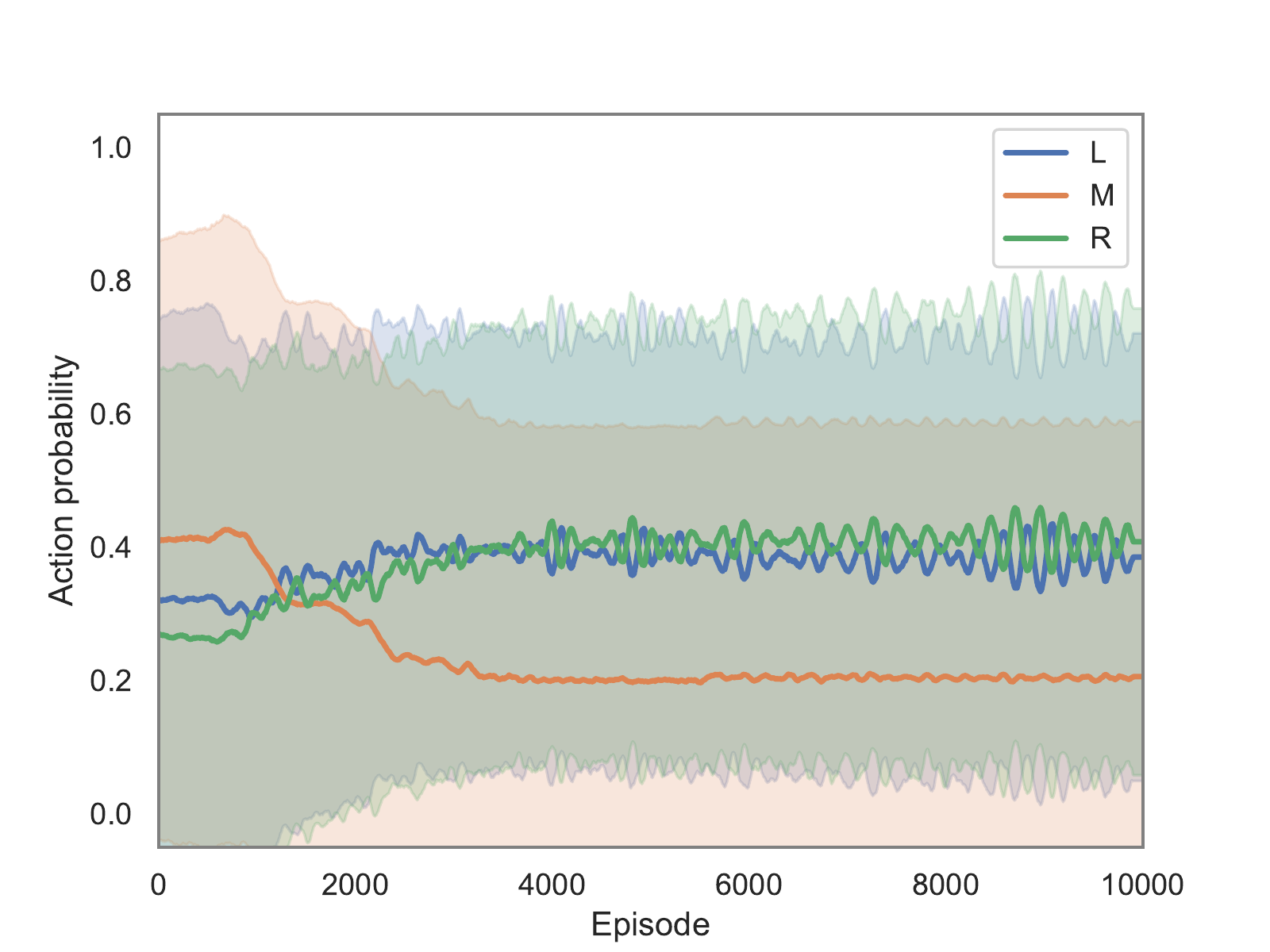}
    \caption{Action selection probabilities of Agent 2.}
    \label{fig:game1_NE_A2}
    \end{subfigure}
    \caption{Game 1 under SER with no action recommendations.}
    \label{fig:game1_NE}
\end{figure}
\begin{figure}[ht!]
    \centering
    \begin{subfigure}[b]{0.32\textwidth}
        \includegraphics[width=\textwidth]{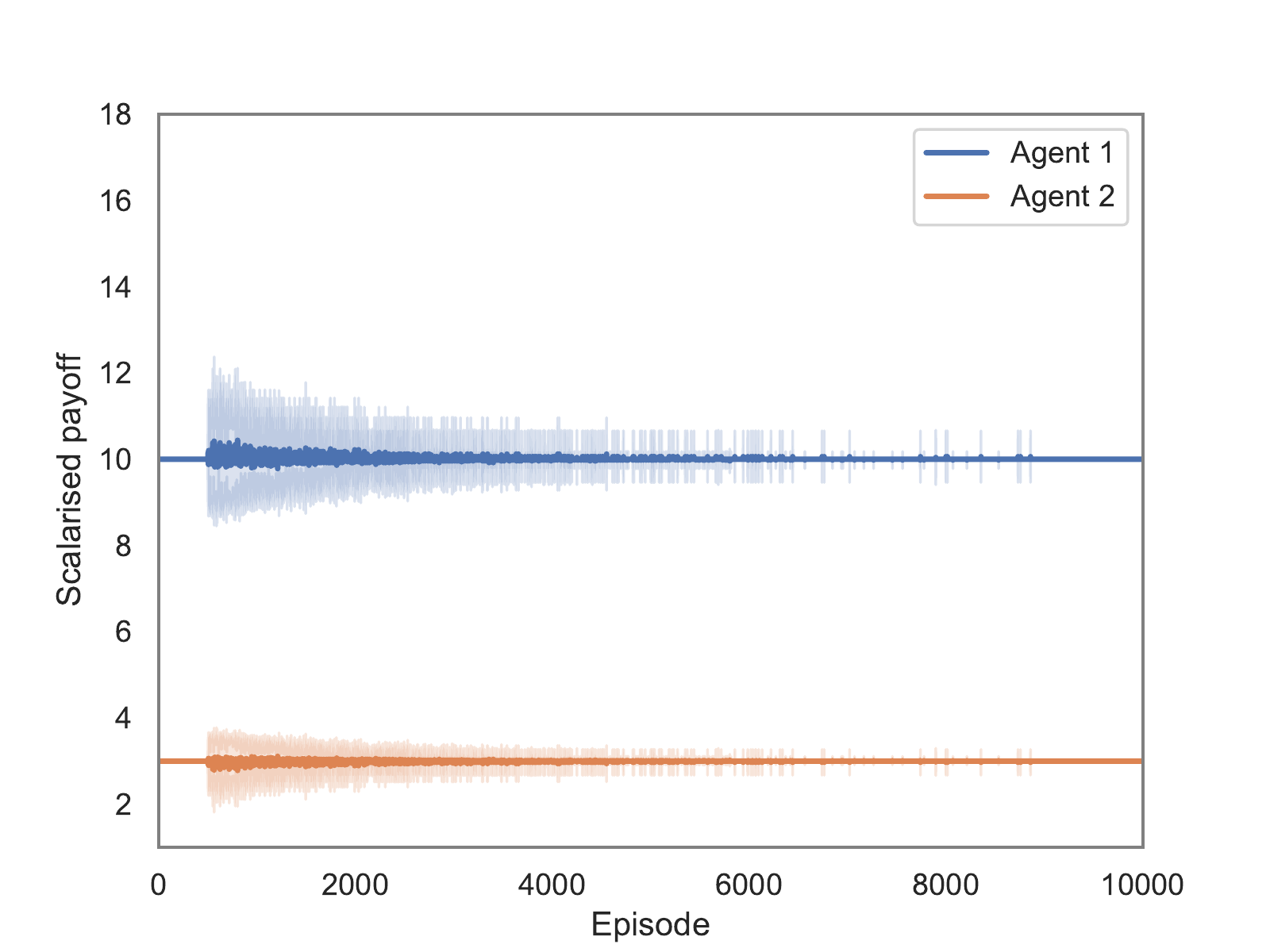}
    \caption{Scalarised payoffs obtained by each agent.}
    \label{fig:game1_sCE_SER}
    \end{subfigure}
    \begin{subfigure}[b]{0.32\textwidth}
       \includegraphics[width=\textwidth]{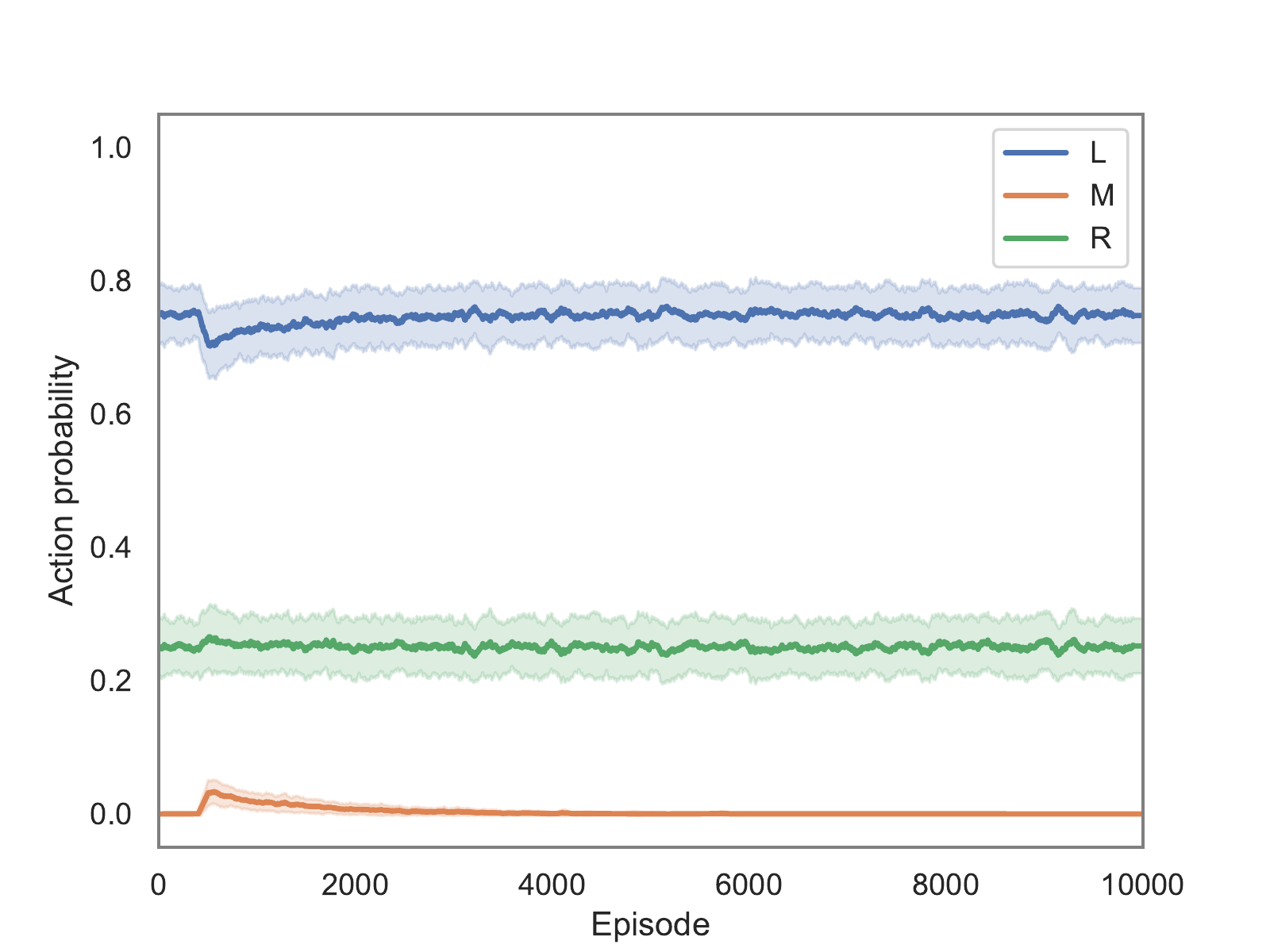}
    \caption{Action selection probabilities of Agent 1.}
    \label{fig:game1_sCE_A1}
    \end{subfigure}
    \begin{subfigure}[b]{0.32\textwidth}
        \includegraphics[width=\textwidth]{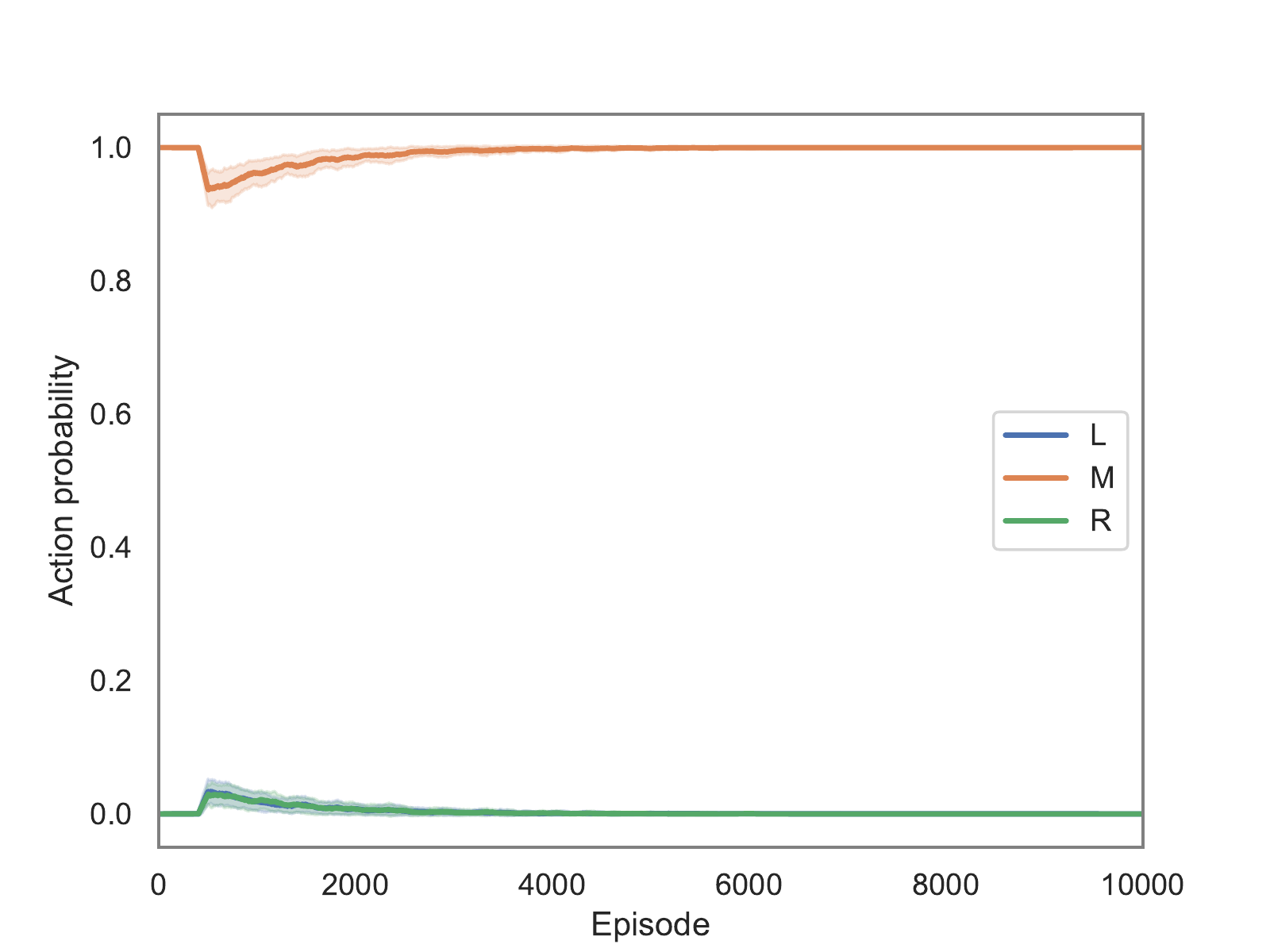}
    \caption{Action selection probabilities of Agent 2.}
    \label{fig:game1_sCE_A2}
    \end{subfigure}
    \caption{Game 1: single-signal CE under SER with action recommendations provided according to Table \ref{table:balance_ser_ce1}.}
    \label{fig:game1_sCE}
\end{figure}
\begin{figure}[ht!]
    \centering
    \begin{subfigure}[b]{0.32\textwidth}
        \includegraphics[width=\textwidth]{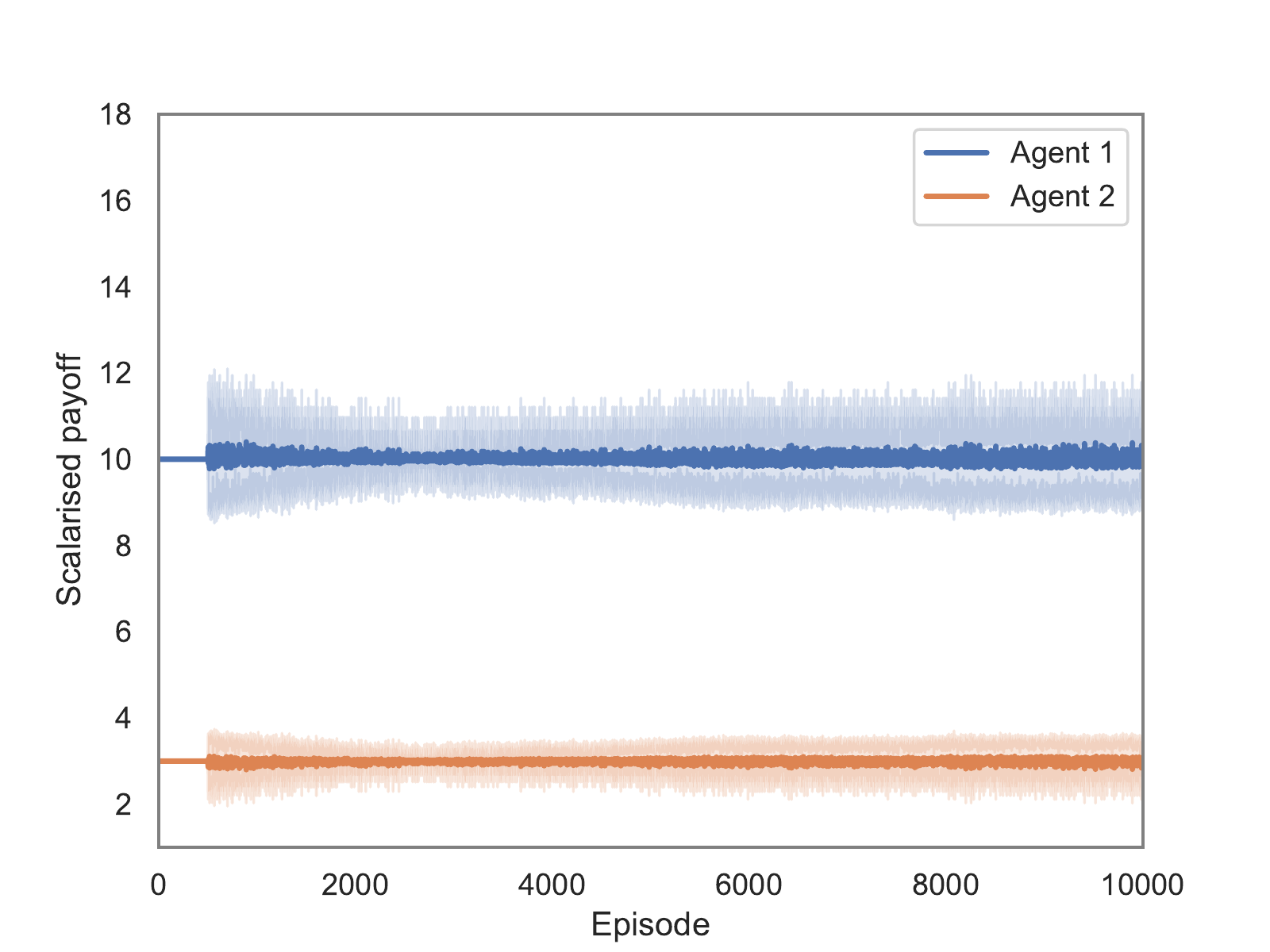}
    \caption{Scalarised payoffs obtained by each agent.}
    \label{fig:game1_mCE_SER}
    \end{subfigure}
    \begin{subfigure}[b]{0.32\textwidth}
       \includegraphics[width=\textwidth]{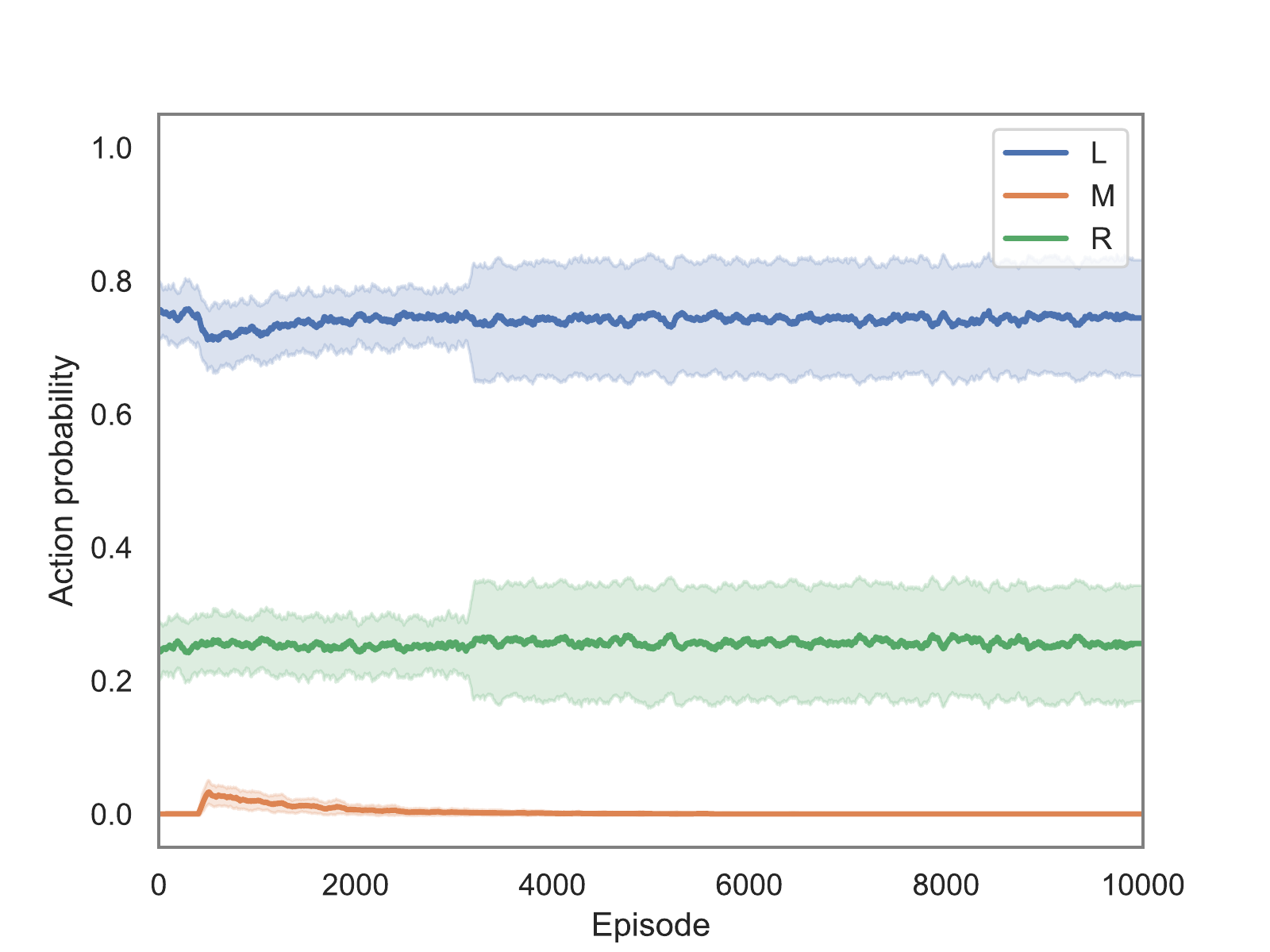}
    \caption{Action selection probabilities of Agent 1.}
    \label{fig:game1_mCE_A1}
    \end{subfigure}
    \begin{subfigure}[b]{0.32\textwidth}
        \includegraphics[width=\textwidth]{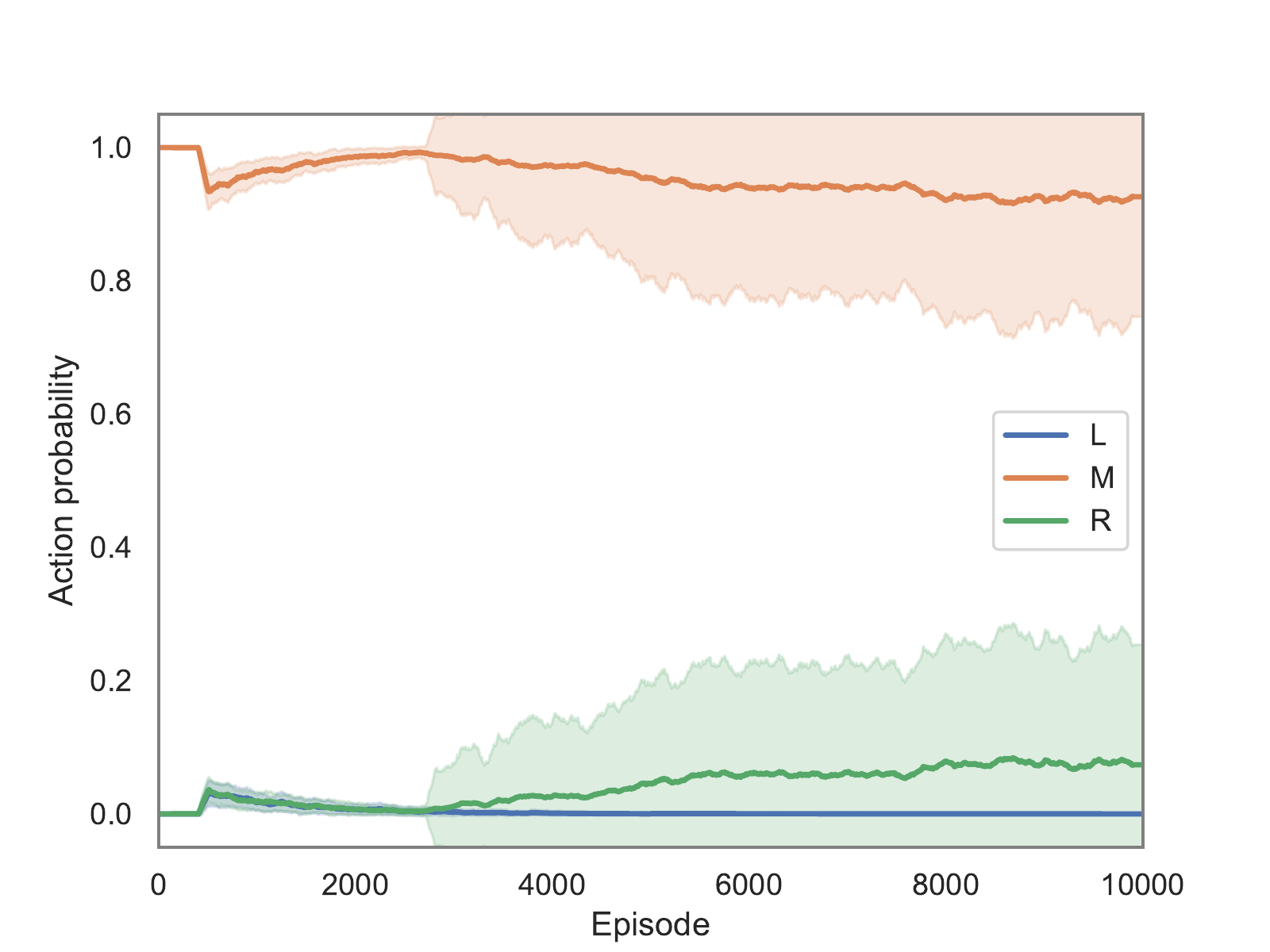}
    \caption{Action selection probabilities of Agent 2.}
    \label{fig:game1_mCE_A2}
    \end{subfigure}
    \caption{Game 1: multi-signal CE under SER with action recommendations provided according to Table \ref{table:balance_ser_ce1}.}
    \label{fig:game1_mCE}
\end{figure}
\begin{figure}[ht!]
    \centering
    \begin{subfigure}[b]{0.32\textwidth}
        \includegraphics[width=\textwidth]{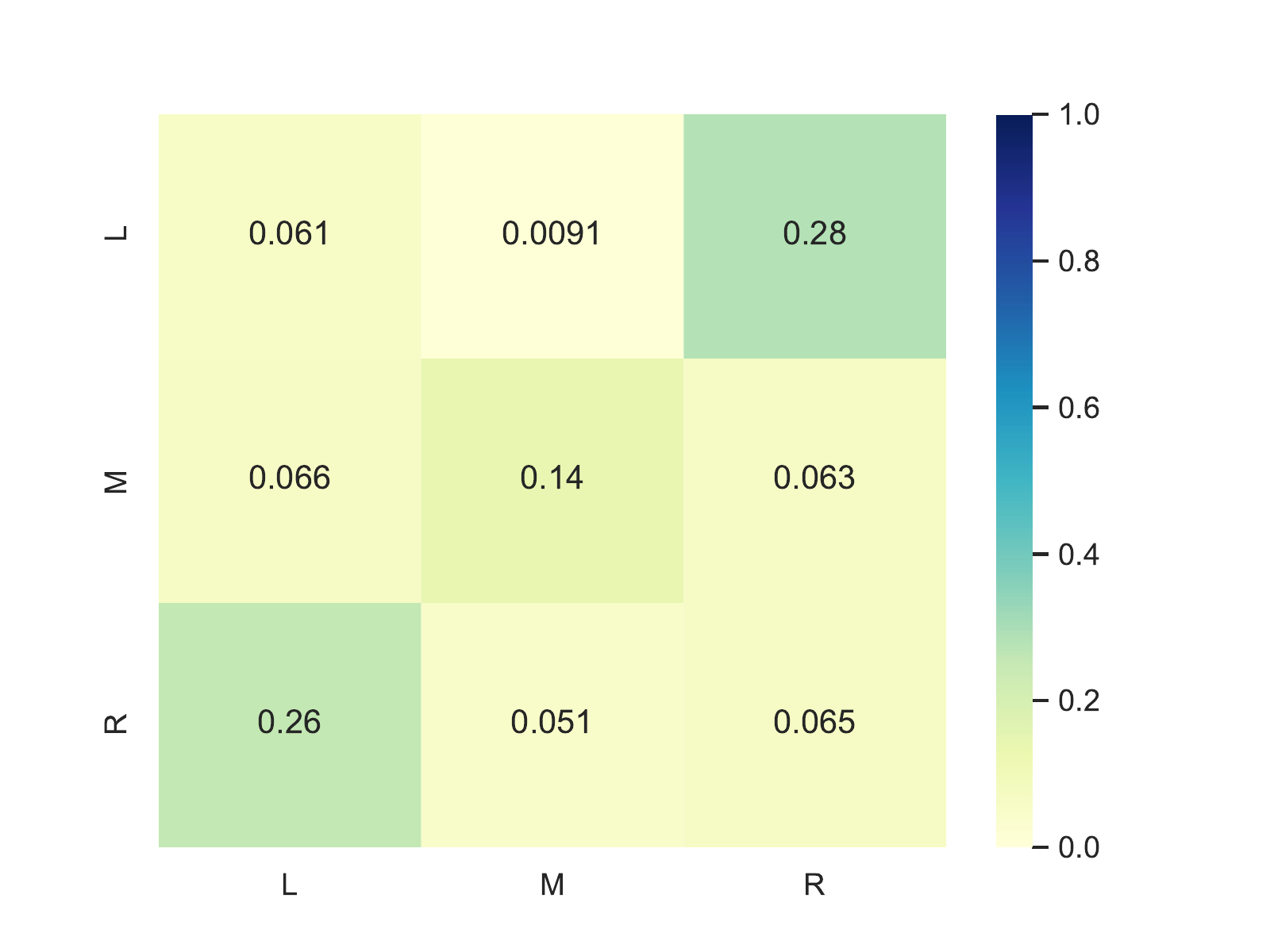}
    \caption{No action recommendations}
    \label{fig:game1_NE_states}
    \end{subfigure}
    \begin{subfigure}[b]{0.32\textwidth}
       \includegraphics[width=\textwidth]{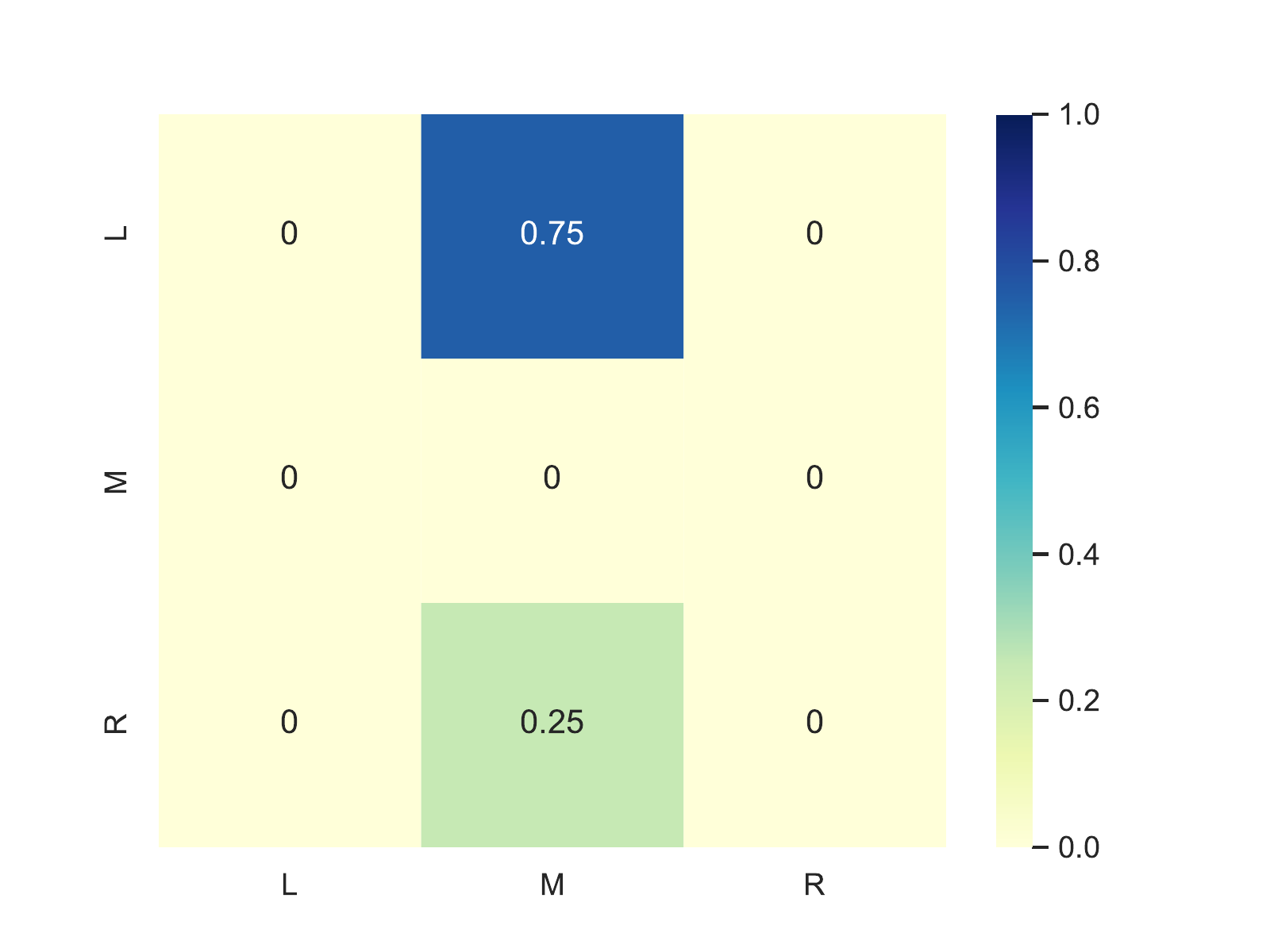}
    \caption{Single-signal CE}
    \label{fig:game1_sCE_states}
    \end{subfigure}
    \begin{subfigure}[b]{0.32\textwidth}
        \includegraphics[width=\textwidth]{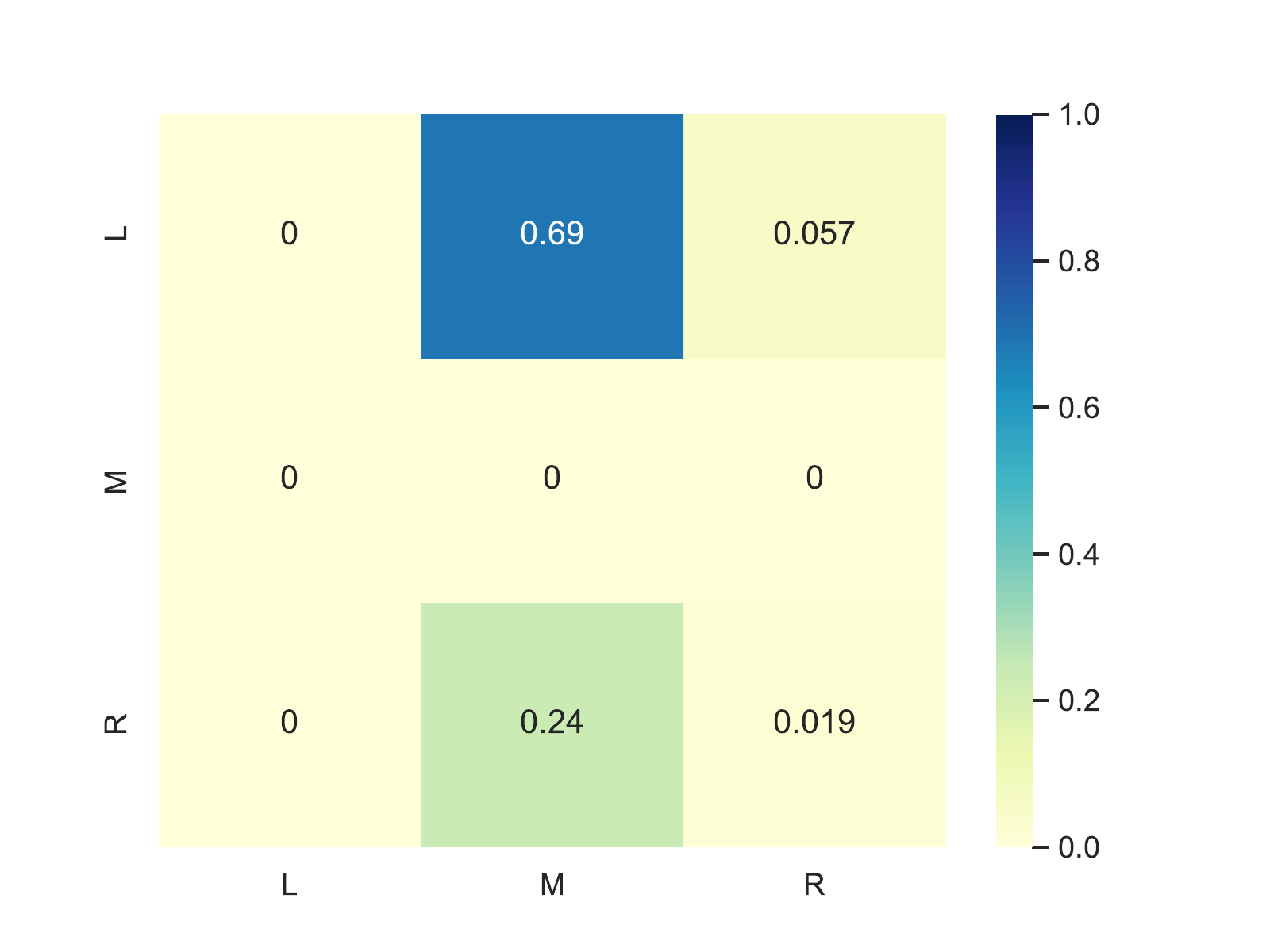}
    \caption{Multi-signal CE}
    \label{fig:game1_mCE_states}
    \end{subfigure}
    \caption{Game 1: Joint action probabilities over the last 1000 episodes under SER.}
    \label{fig:game1_states}
\end{figure}

\subsection{Game 1 - The (Im)balancing Act Game}
For Game 1, the correlated signal was given in accordance to Table \ref{table:balance_ser_ce1}, i.e., in a given episode, (L,M) was recommended with probability 0.75, or else (R,M) was recommended with probability 0.25.

The experimental results in terms of scalarised payoff are shown in Figs.~\ref{fig:game1_NE}, \ref{fig:game1_sCE} and \ref{fig:game1_mCE}. All figures show the scalarised payoffs received by the agents in each episode, averaged over 100 trials. Figure~\ref{fig:game1_states} presents the distribution of outcomes over the joint-action space for the last 1000 interactions, averaged again over 100 trials.
For each experiment we also plot the action selection probabilities for each of the two players (Figs.~\ref{fig:game1_NE_A1}, \ref{fig:game1_NE_A2}, \ref{fig:game1_sCE_A1}, \ref{fig:game1_sCE_A2}, \ref{fig:game1_mCE_A1} and \ref{fig:game1_mCE_A2}). The probabilities are computed using a sliding window of size 100 over the history of taken actions and are also averaged over 100 trials. 
The shaded region around each plot shows one standard deviation from the mean. No smoothing was applied to any of the plots.

It is clear to see from the high standard deviations in Fig. \ref{fig:game1_NE_SER} that agents do not reliably converge on any one joint strategy when no correlated action recommendations are provided. This conclusion is further strengthened when observing the action selection probabilities of player~1 (Fig.~\ref{fig:game1_NE_A1}) and player 2 (Fig.~\ref{fig:game1_NE_A2}). Given our analysis in Theorem \ref{th:ser_nash}, this is to be expected, as agents will always have some incentive to deviate from a potential Nash equilibrium point in this game. As $\epsilon$ is decayed, the agents' behaviour does not converge to any stable point, and the joint strategies learned in each run seem to always cycle among a few possibilities (e.g., predominant joint-actions are (R, L), (L, R) and (M, M) as it can be seen from  Figure~\ref{fig:game1_states}).

In Figure \ref{fig:game1_sCE}, the effect of the single-signal correlated equilibrium may clearly be seen. As we would expect, for the first 500 episodes a consistent scalarised payoff is received by both agents while they learn the correlated equilibrium. From episode $501$ both agents are free to select actions autonomously and to explore and learn the effects of deviating from the action suggestions. As $\epsilon$ is gradually decayed towards zero, the agents consistently converge back to the correlated equilibrium, evidenced by the low standard deviations around the means of the scalarised payoffs near episode 10,000. Furthermore, Fig.~\ref{fig:game1_sCE_A1}, \ref{fig:game1_sCE_A2} and \ref{fig:game1_sCE_states} show that the action selection probabilities for each player nicely converge to the probabilities of the correlated equilibrium in Table~\ref{table:balance_ser_ce1} (i.e., agent 1 will select L with 25\% probability and R with 75\% probability, while agent 2 ends up selecting M 100\% of the time). This provides empirical support for our claim in Theorem \ref{th:ser_ce_single} that single-signal correlated equilibria can exist in MONFGs under SER, demonstrating that neither agent has an incentive to deviate unilaterally given an action recommendation, when learning in this MONFG under SER.

For the case of multi-signal correlated equilibrium, Figure~\ref{fig:game1_mCE} clearly indicates how, after the initial 500 episodes, the agents slowly diverge from the given recommendations. From Fig. \ref{fig:game1_mCE_A2} we can notice how agent 2, decays the use of the recommended action M, replacing it consistently with R, as it is trying to push the outcome towards the more imbalanced payoff outcome (L, R), given that his opponent is initially still taking the recommended actions L with 75\% probability. We can then notice from Fig.~\ref{fig:game1_mCE_A1} an attempt from agent 1 to coordinate their actions to obtain (R, R), but with less success according to the join-action distribution outcome presented in Fig.~\ref{fig:game1_mCE_states}. This provides empirical support for our claim in Theorem \ref{th:ser_ce} that multi-signal correlated equilibria need not exist in MONFGs under SER, demonstrating that the agents have incentives to deviate from the given action recommendations, when learning in this MONFG under SER.

\begin{figure}[ht!]
    \centering
    \begin{subfigure}[b]{0.32\textwidth}
        \includegraphics[width=\textwidth]{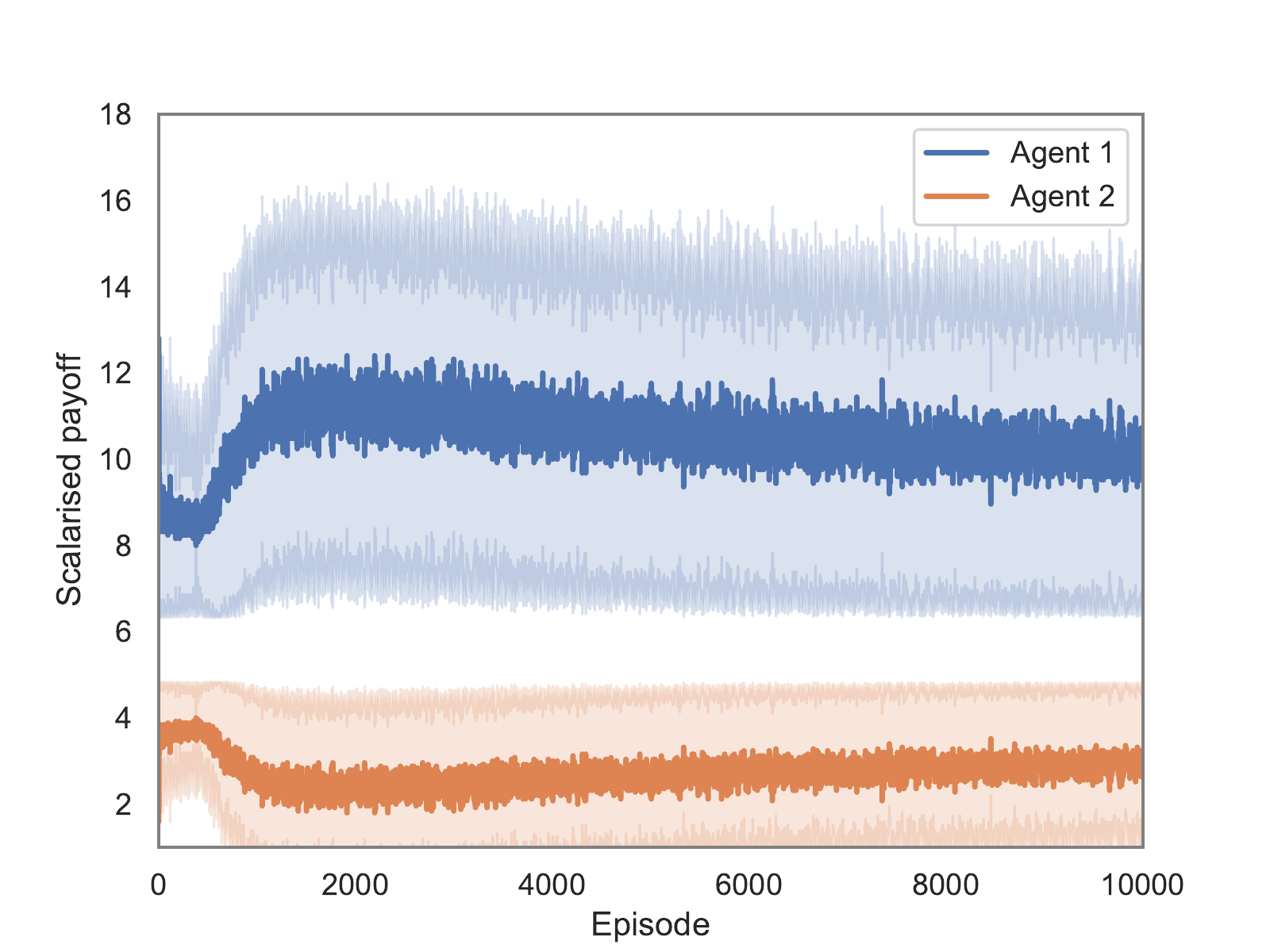}
    \caption{Scalarised payoffs obtained by each agent.}
    \label{fig:game2noM_NE_SER}
    \end{subfigure}
    \vspace{\baselineskip}
    \begin{subfigure}[b]{0.32\textwidth}
       \includegraphics[width=\textwidth]{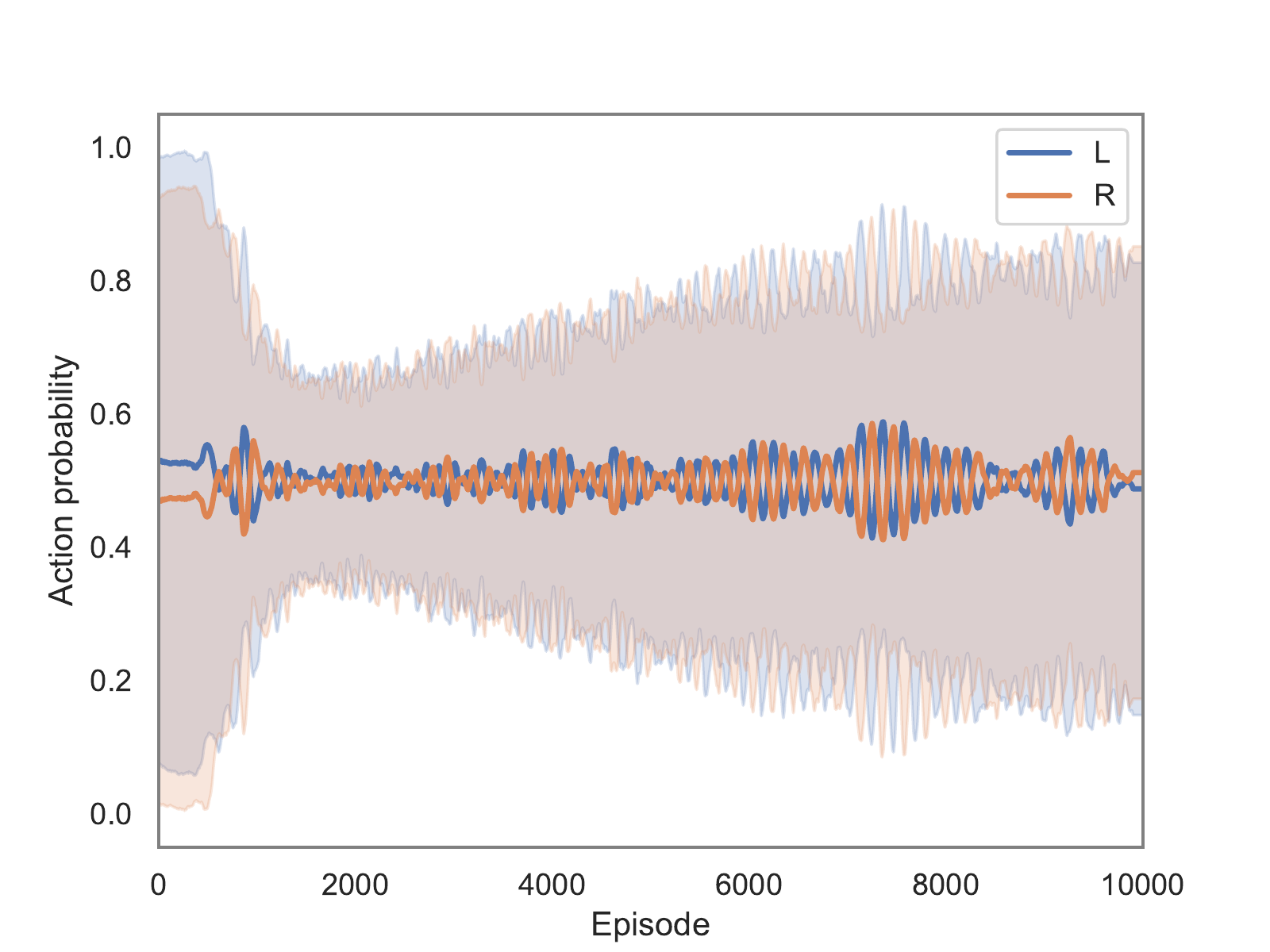}
    \caption[width=\textwidth]{Action selection probabilities of Agent 1.}
    \label{fig:game2noM_NE_A1}
    \end{subfigure}
    \begin{subfigure}[b]{0.32\textwidth}
        \includegraphics[width=\textwidth]{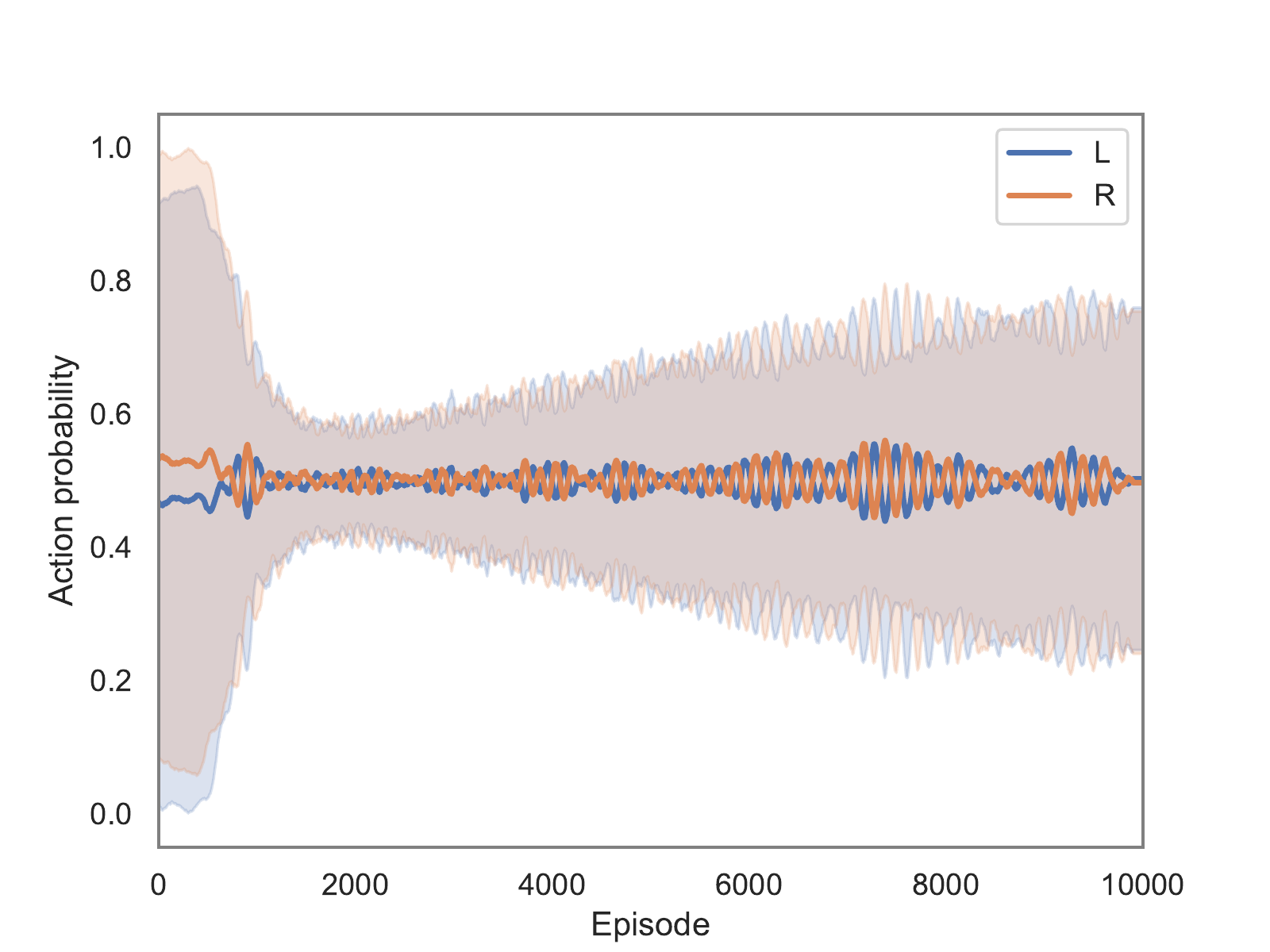}
    \caption{Action selection probabilities of Agent 2.}
    \label{fig:game2noM_NE_A2}
    \end{subfigure}
    \caption{Game 2 under SER with no action recommendations.}
    \label{fig:game2noM_NE}
\end{figure}
\begin{figure}[ht!]
    \centering
    \begin{subfigure}[b]{0.32\textwidth}
        \includegraphics[width=\textwidth]{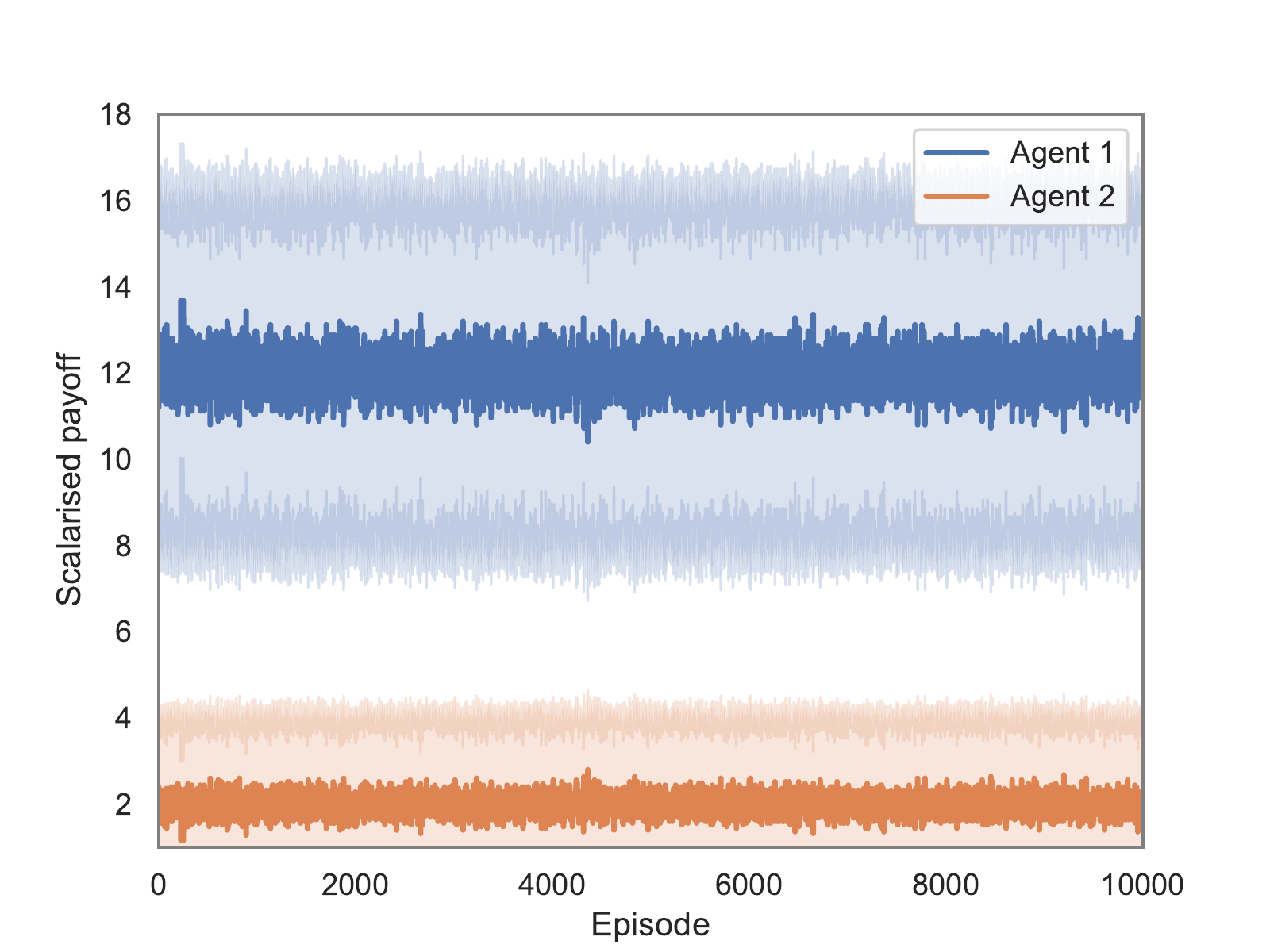}
    \caption{Scalarised payoffs obtained by each agent.}
    \label{fig:game2noM_sCE_SER}
    \end{subfigure}
    \begin{subfigure}[b]{0.32\textwidth}
       \includegraphics[width=\textwidth]{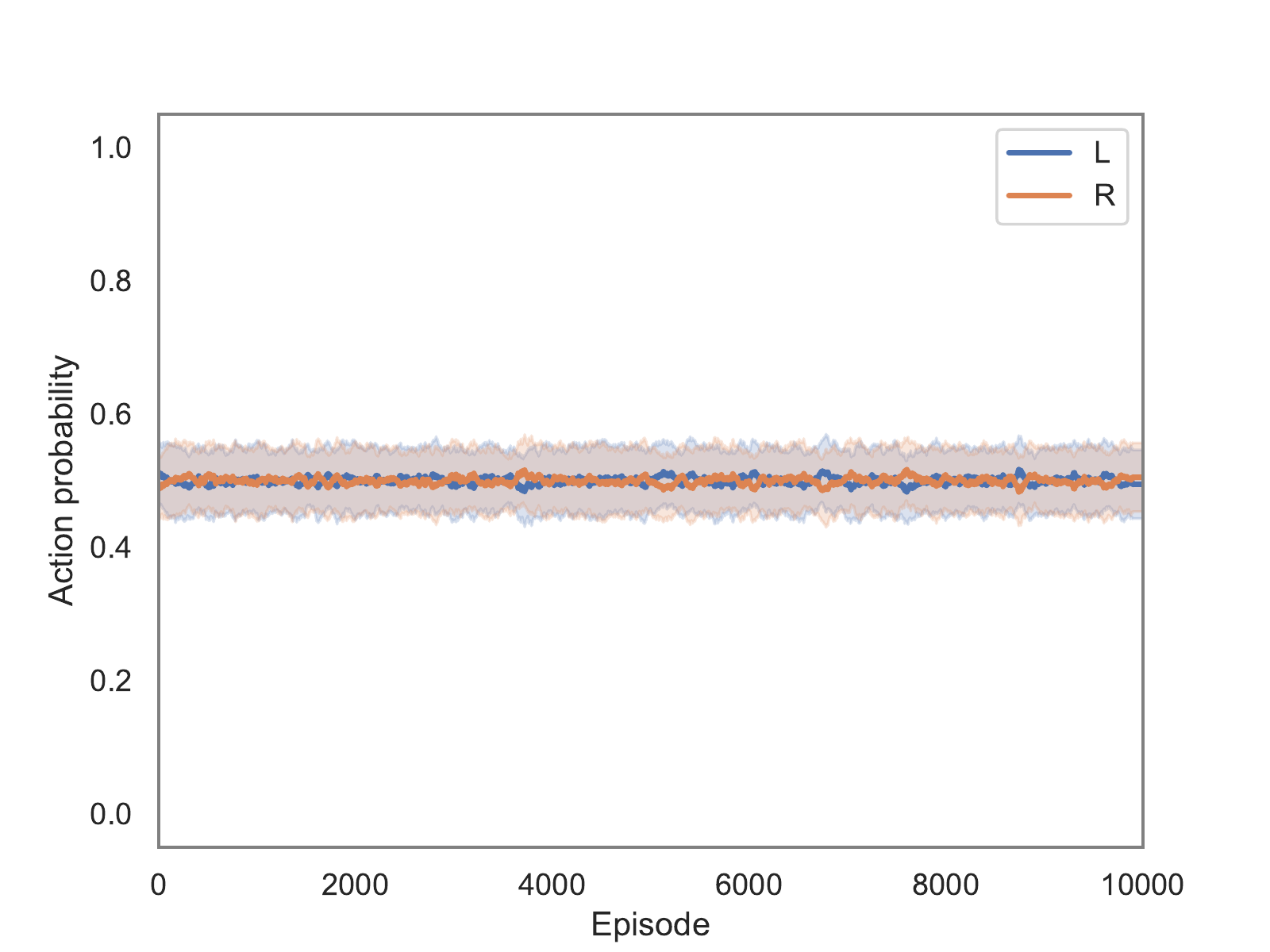}
    \caption{Action selection probabilities of Agent 1.}
    \label{fig:game2noM_sCE_A1}
    \end{subfigure}
    \begin{subfigure}[b]{0.32\textwidth}
        \includegraphics[width=\textwidth]{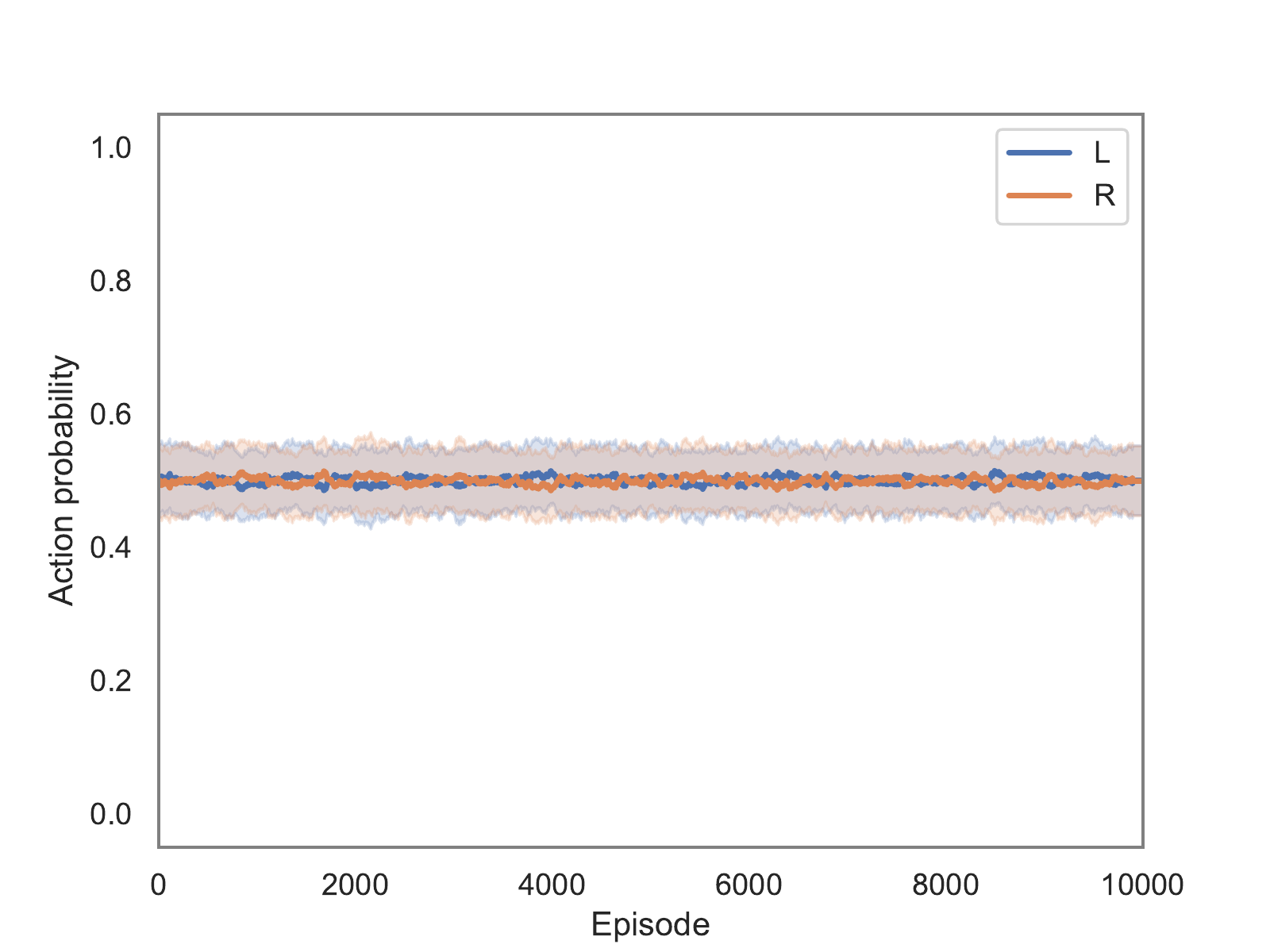}
    \caption{Action selection probabilities of Agent 2.}
    \label{fig:game2noM_sCE_A2}
    \end{subfigure}
    \caption{Game 2: single-signal CE under SER with action recommendations provided according to Table \ref{table:balance_ser_ce1}.}
    \label{fig:game2noM_sCE}
\end{figure}
\begin{figure}[ht!]
    \centering
    \begin{subfigure}[b]{0.32\textwidth}
        \includegraphics[width=\textwidth]{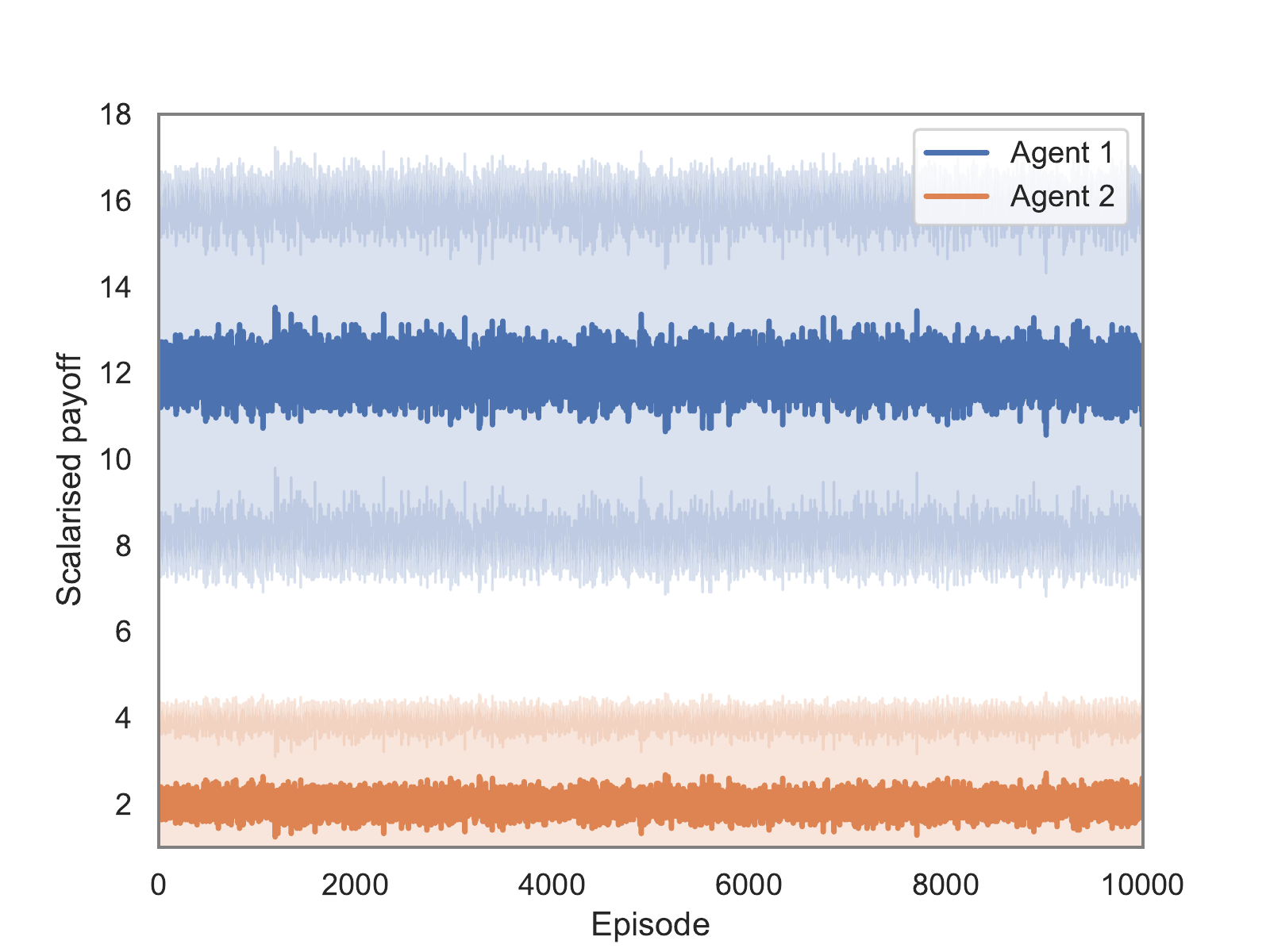}
    \caption{Scalarised payoffs obtained by each agent.}
    \label{fig:game2noM_mCE_SER}
    \end{subfigure}
    \begin{subfigure}[b]{0.32\textwidth}
       \includegraphics[width=\textwidth]{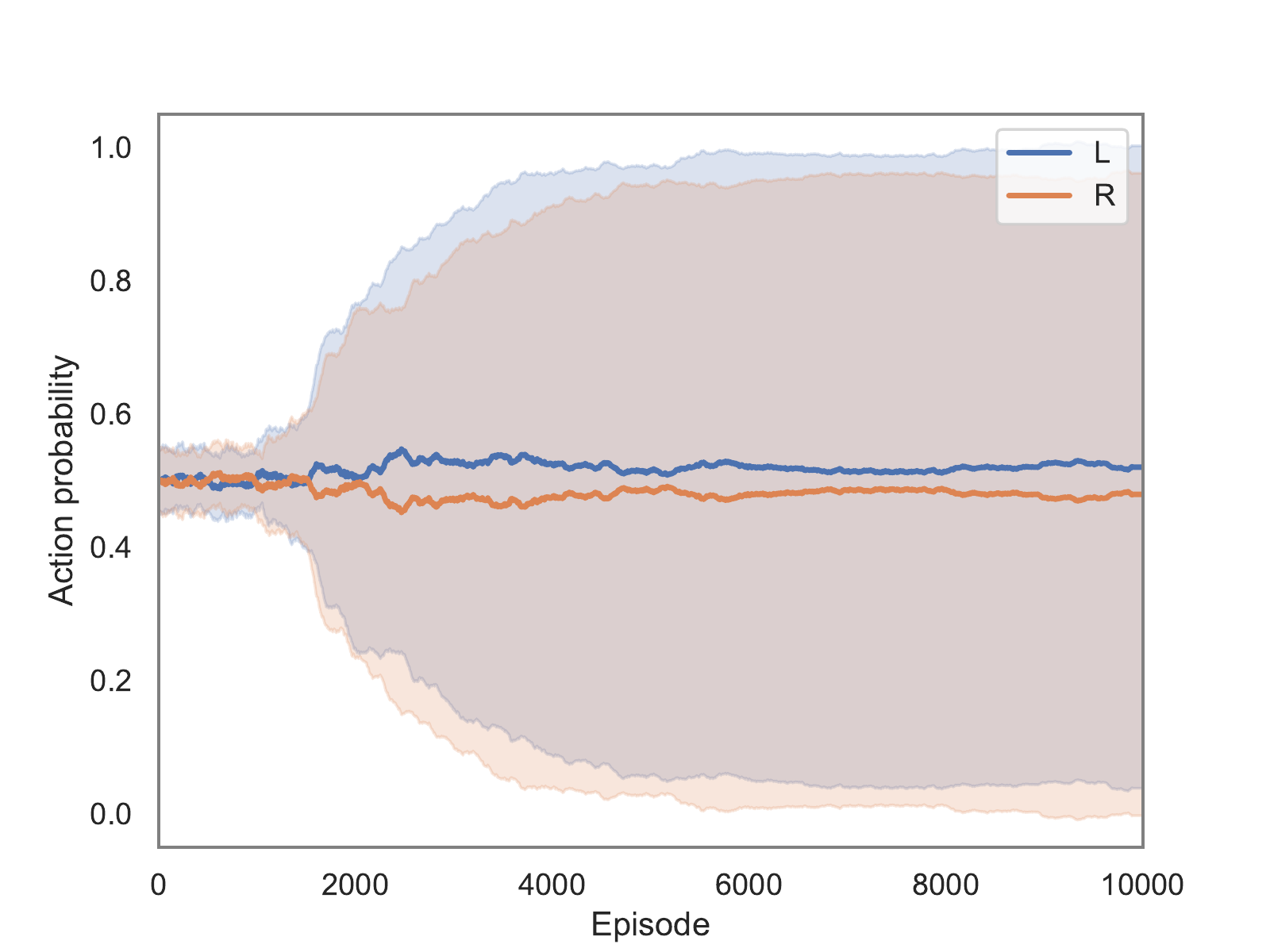}
    \caption{Action selection probabilities of Agent 1.}
    \label{fig:game2noM_mCE_A1}
    \end{subfigure}
    \begin{subfigure}[b]{0.32\textwidth}
        \includegraphics[width=\textwidth]{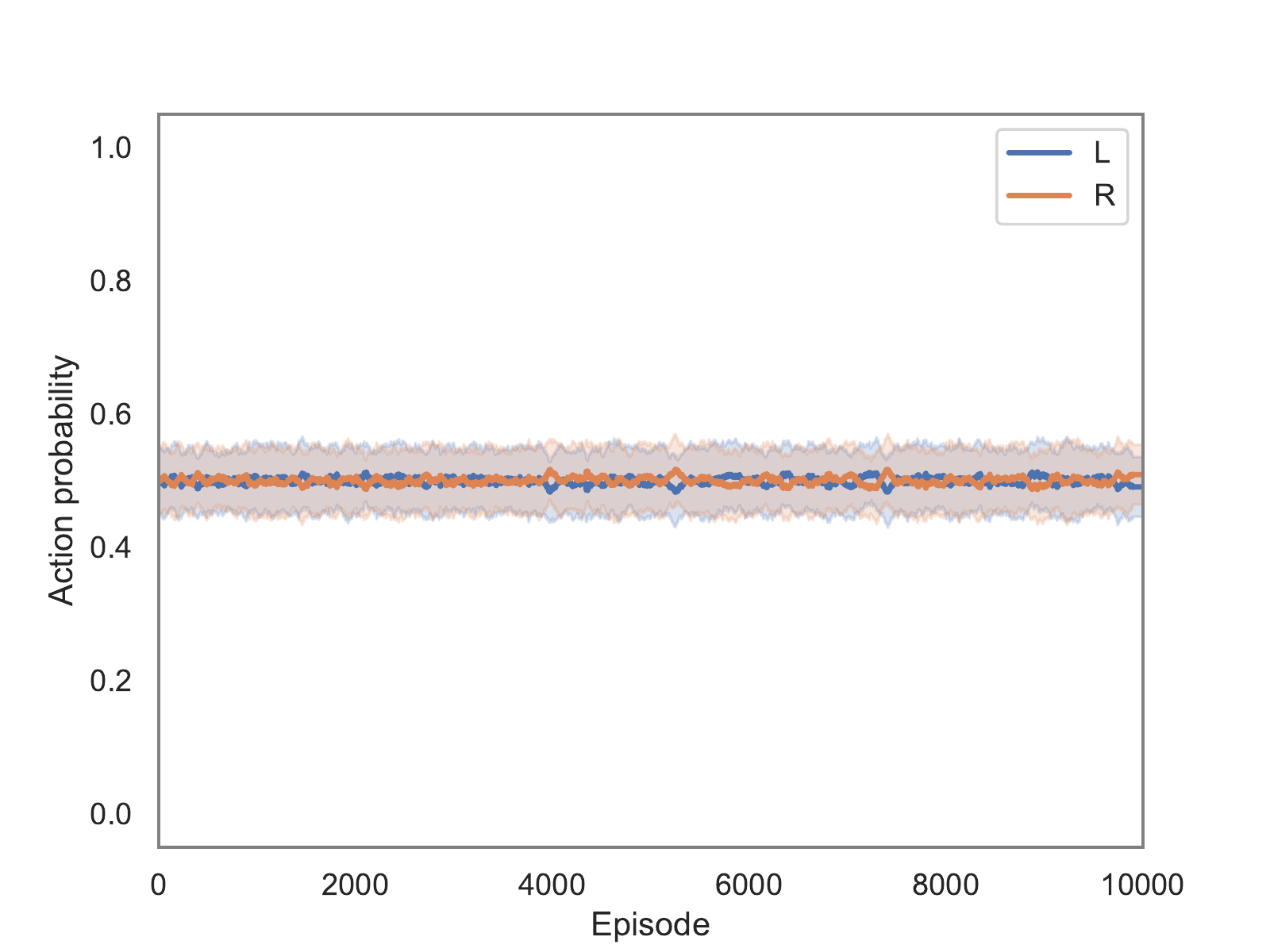}
    \caption{Action selection probabilities of Agent 2.}
    \label{fig:game2noM_mCE_A2}
    \end{subfigure}
    \caption{Game 2: multi-signal CE under SER with action recommendations provided according to Table \ref{table:balance_ser_ce1}.}
    \label{fig:game2noM_mCE}
\end{figure}
\begin{figure}[ht!]
    \centering
    \begin{subfigure}[b]{0.32\textwidth}
        \includegraphics[width=\textwidth]{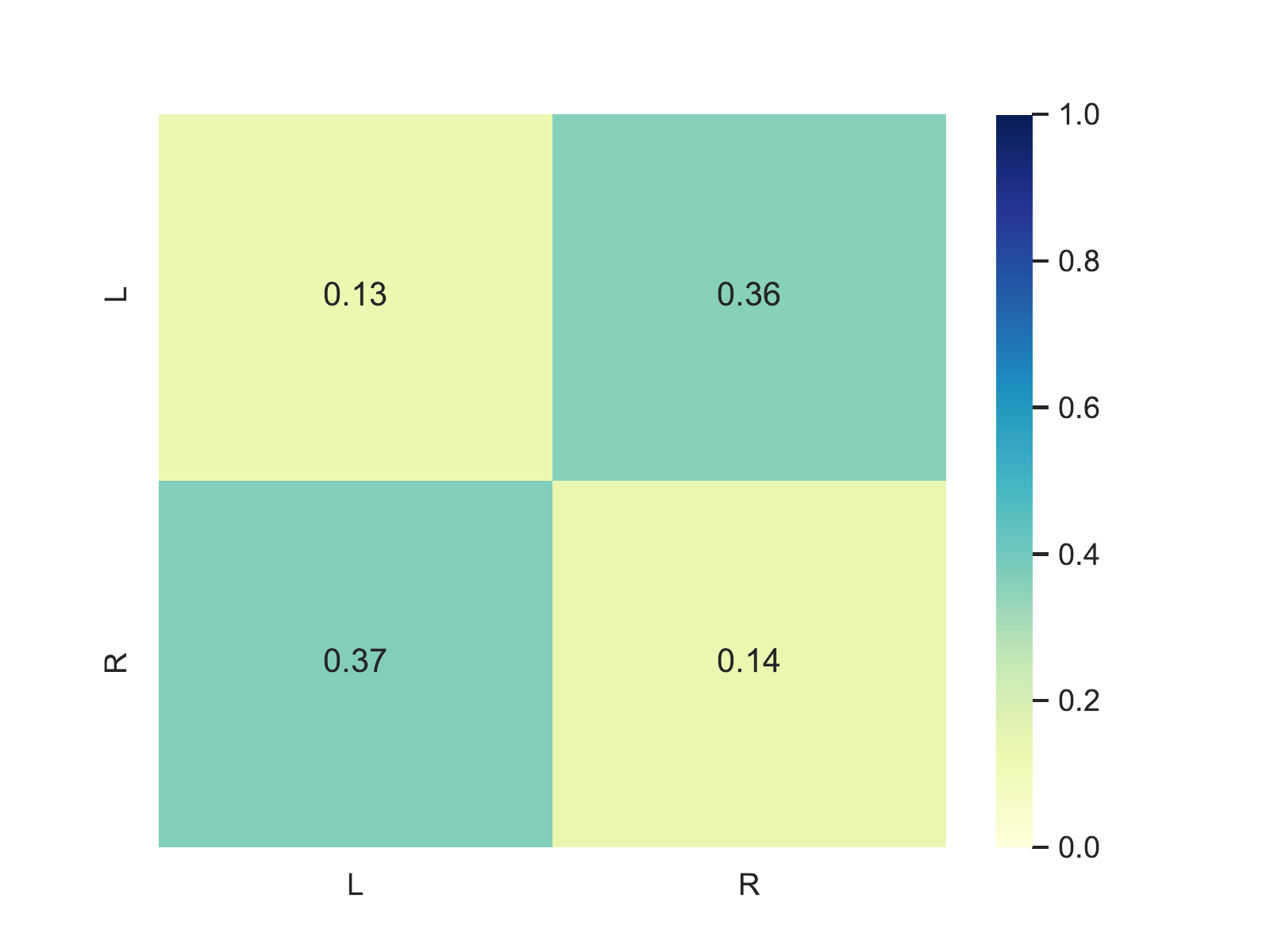}
    \caption{No action recommendations}
    \label{fig:game2noM_NE_states}
    \end{subfigure}
    \begin{subfigure}[b]{0.32\textwidth}
       \includegraphics[width=\textwidth]{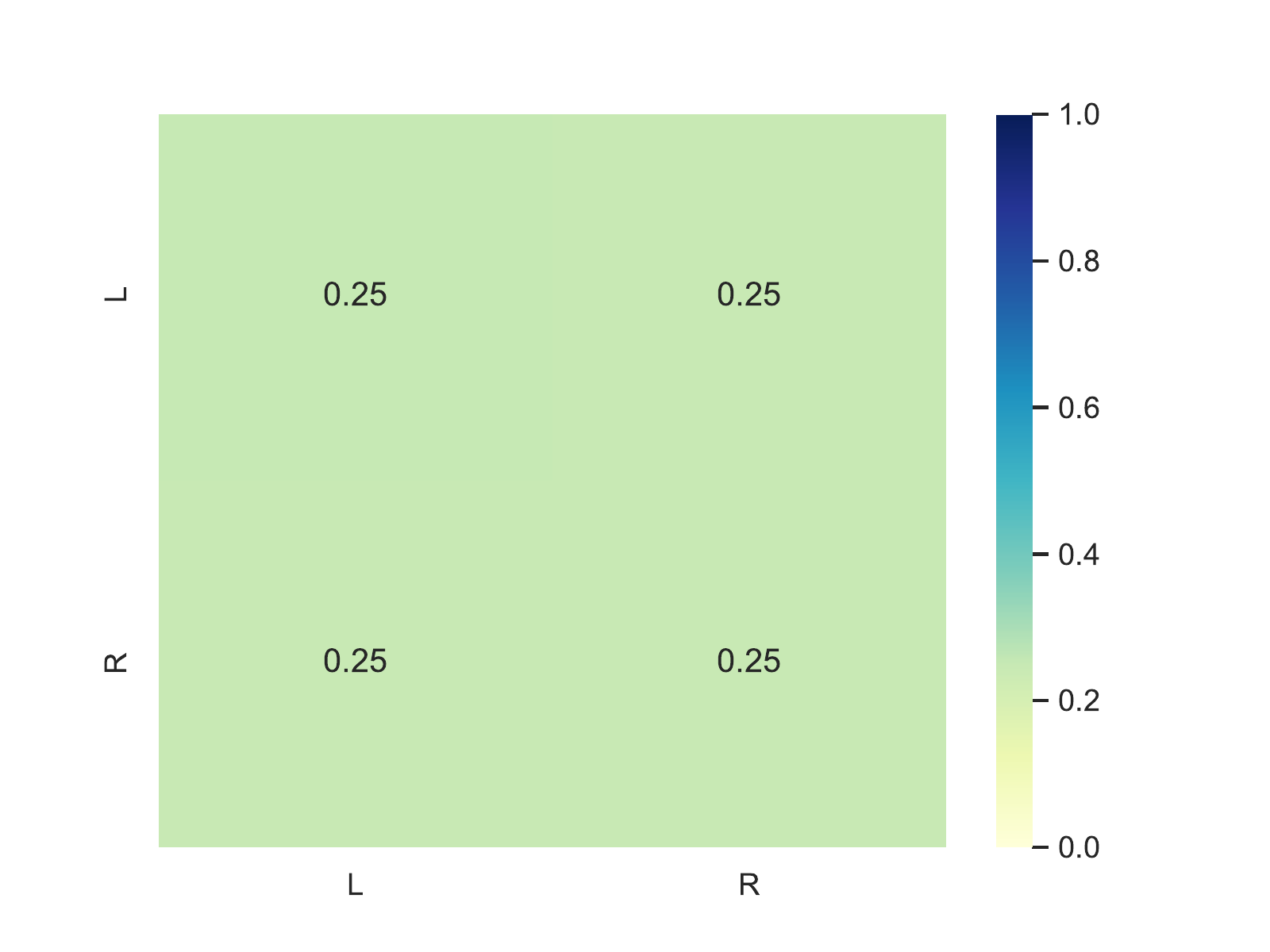}
    \caption{Single-signal CE}
    \label{fig:game2noM_sCE_states}
    \end{subfigure}
    \begin{subfigure}[b]{0.32\textwidth}
        \includegraphics[width=\textwidth]{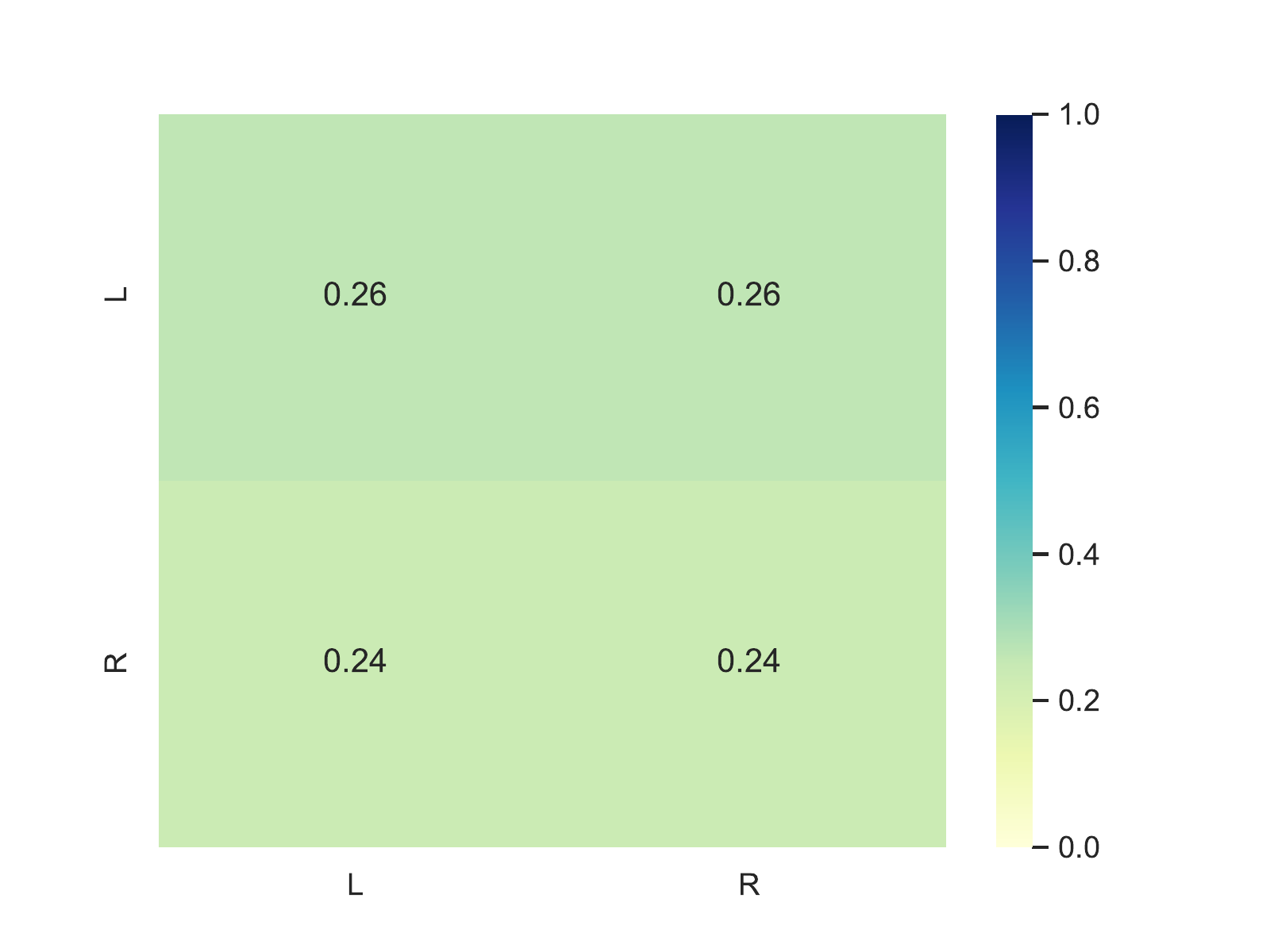}
    \caption{Multi-signal CE}
    \label{fig:game2noM_mCE_states}
    \end{subfigure}
    \caption{Game 2: Joint action probabilities over the last 1000 episodes under SER.}
    \label{fig:game2noM_states}
\end{figure}

\subsection{Game 2 - The (Im)balancing Act Game without action M} 
For Game 2, the correlated signal was given in accordance to Table \ref{table:balance_noM} (right), i.e., in a given episode, each possible joint-action among the four (i.e., (L, L), (L, R), (R, L) or (R, R)) is recommended with equal probability.

The experimental results in terms of scalarised payoff are shown in Figs.~\ref{fig:game2noM_NE}, \ref{fig:game2noM_sCE} and \ref{fig:game2noM_mCE} respectively. Figure~\ref{fig:game2noM_states} presents the distribution over the joint-action space for the last 1000 interactions. Again, all experiments are run for 10,000 interactions, averaged over 100 trials. 

Figs.~\ref{fig:game2noM_NE_A1} and \ref{fig:game2noM_NE_A2}  highlight the dynamics between the agents of shifting between balanced and imbalanced outcomes, without being able to converge to any stable equilibrium strategies when no action recommendations are given, implying that there are no NE present in this game. According to Fig.~\ref{fig:game2noM_NE_states}, player 2 seems to be more successful in obtaining her desired outcomes (L,R) or (R,L). Regarding the multi-signal CE, we see from Fig.~\ref{fig:game2noM_mCE_A1} that player 1 has a stronger incentive to deviate from the recommendations, probably obtained due to the higher loss in utility incurred when switching between the possible outcomes. In any case, similar to the previous experiment, agents are not able to converge to any stable strategy, implying that the set of action recommendations in Table \ref{table:balance_noM} (right) do not constitute a multi-signal CE for this game. Finally, similar to the 3-action (Im)balancing Act game, agents have no incentive to deviate from the action recommendations in Table \ref{table:balance_noM} (right) in the case of a single-signal CE, as can be seen from Figs.~\ref{fig:game2noM_sCE_A1}, \ref{fig:game2noM_sCE_A2}. Additionally, the distribution of outcomes over the joint-action space (Fig.~\ref{fig:game2noM_sCE_states}) also closely aligns with the action recommendations in Table \ref{table:balance_noM} (right), thus allowing the agents to fairly coordinate between ending up half of the time in the imbalanced payoff outcomes, preferred by player 1, and the balanced payoff outcomes, preferred by player 2.

\subsection{Game 3 - A 3-action MONFG with pure NE} 
For Game 3, the correlated signal was given in accordance to Table \ref{table:MONFG_with_NE} (right), i.e., in a given episode, (L,L) was recommended with probability 0.5, or else (M, M) was recommended with the same probability.

Compared to the previous two games, we now have the opportunity to study the learning outcomes of the agents when all the considered equilibria exist. Figures~\ref{fig:game4_NE} and \ref{fig:game4_NE_states} present the results for the setting in which the agents do not receive any action recommendations. Although the action selection probabilities (Figs.~\ref{fig:game4_sCE_A1} and \ref{fig:game4_sCE_A2}) might not exhibit any regular behaviour over to considered trials, when looking at the distribution over the joint-action space in Fig. \ref{fig:game4_NE_states} more structure emerges. We notice that for about $95\%$ of the time the agents converge to one of the pure NE, described in Section~\ref{sec:NEgame} -- (L,L), (M, M) or (R, R) -- with the Pareto-dominated outcome, (R, R), having the least probability mass. This indicates that our learning algorithm that combines a ``one-shot'' vectorial Q-learning update rule with a $\varepsilon$-greedy action selection method, while allowing agents to determine their best mixed strategy by solving non-linear optimisation problems with respect to their Q-values and utility function, is quite successful in converging to NE in the case of independent learners.

When agents are able to receive action recommendations, we can notice that the selected correlated strategy is both a single-signal CE (Figs.~\ref{fig:game4_sCE} and \ref{fig:game4_sCE_states}) and a multi-signal CE (Figs.~\ref{fig:game4_mCE} and \ref{fig:game4_mCE_states}). By looking at the scalarised payoffs in Figs.~\ref{fig:game4_sCE_SER} and \ref{fig:game4_mCE_SER}, we can also notice that even in a multi-objective setting, correlated equilibria can allow one to obtain better compromises between conflicting utility functions (i.e., a SER of 14.99 for agent 1 and 5 for agent 2 in the case of single and multi-signal CE) compared to the Nash equilibrium case (i.e., SER of 13.98 for agent 1 and 4.38 for agent 2), given that the agents are able to receive a correlation signal. This empirical result demonstrates that the well known previous findings that CE can provide better payoffs than NE in single objective games (see e.g. \cite{aumann1974subjectivity}) can also apply in the more general class of multi-objective games, i.e. that in a MOMAS where a coordination signal can be established CE can potentially lead to higher individual utilities than NE.

 \begin{figure}[ht!]
    \centering
    \begin{subfigure}[b]{0.32\textwidth}
        \includegraphics[width=\textwidth]{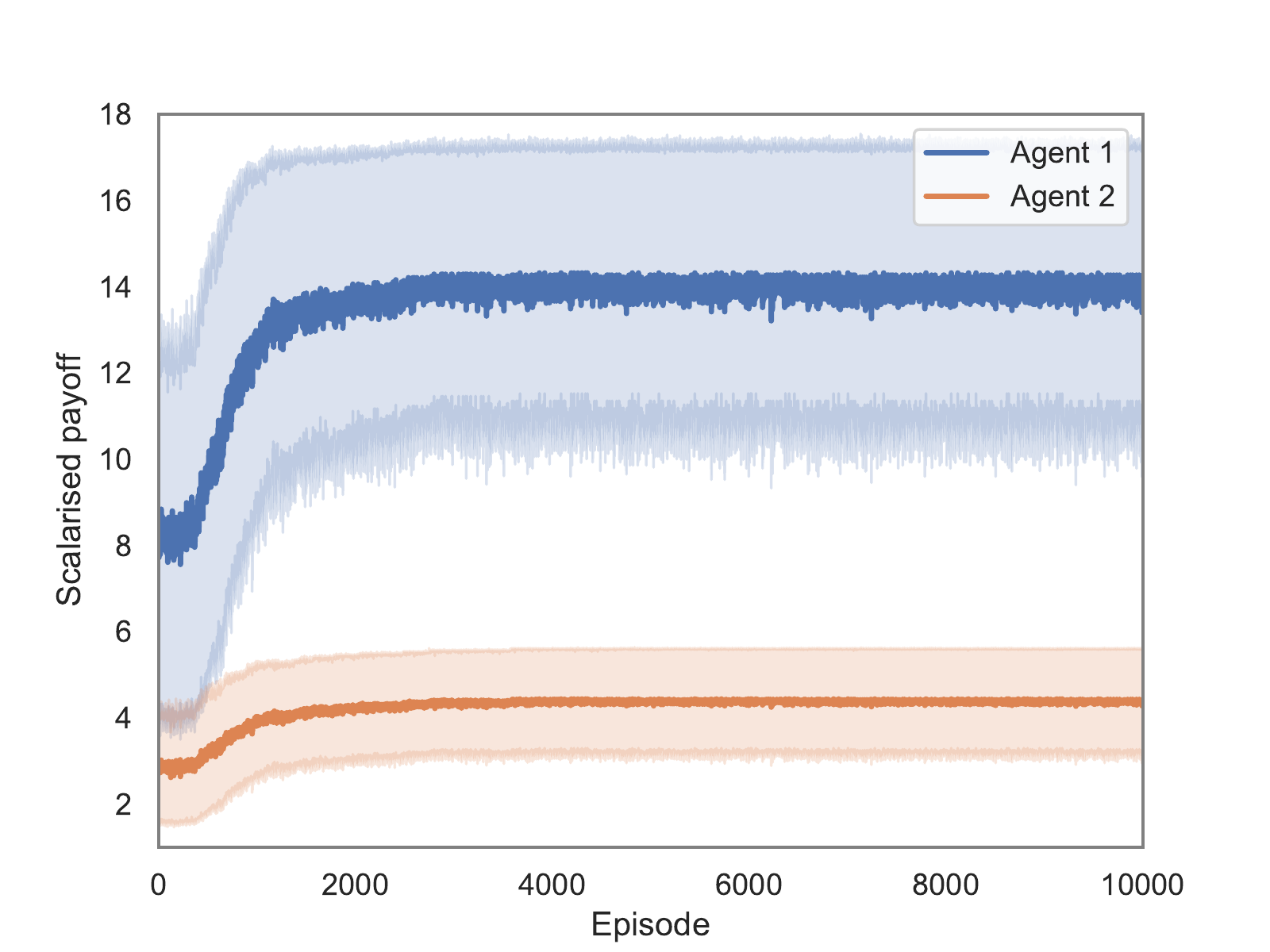}
    \caption{Scalarised payoffs obtained by each agent.}
    \label{fig:game4_NE_SER}
    \end{subfigure}
    \vspace{\baselineskip}
    \begin{subfigure}[b]{0.32\textwidth}
       \includegraphics[width=\textwidth]{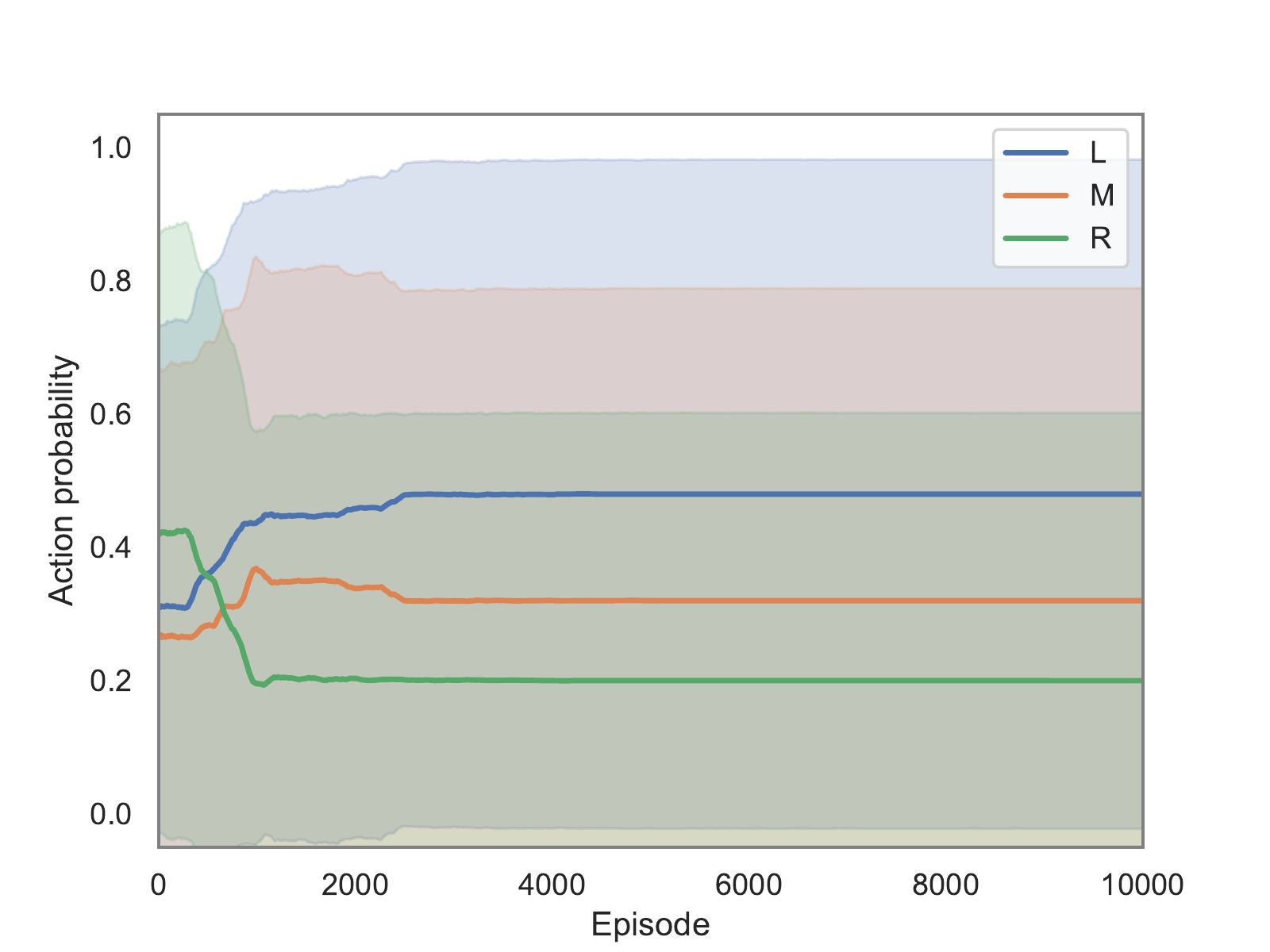}
    \caption[width=\textwidth]{Action selection probabilities of Agent 1.}
    \label{fig:game4_NE_A1}
    \end{subfigure}
    \begin{subfigure}[b]{0.32\textwidth}
        \includegraphics[width=\textwidth]{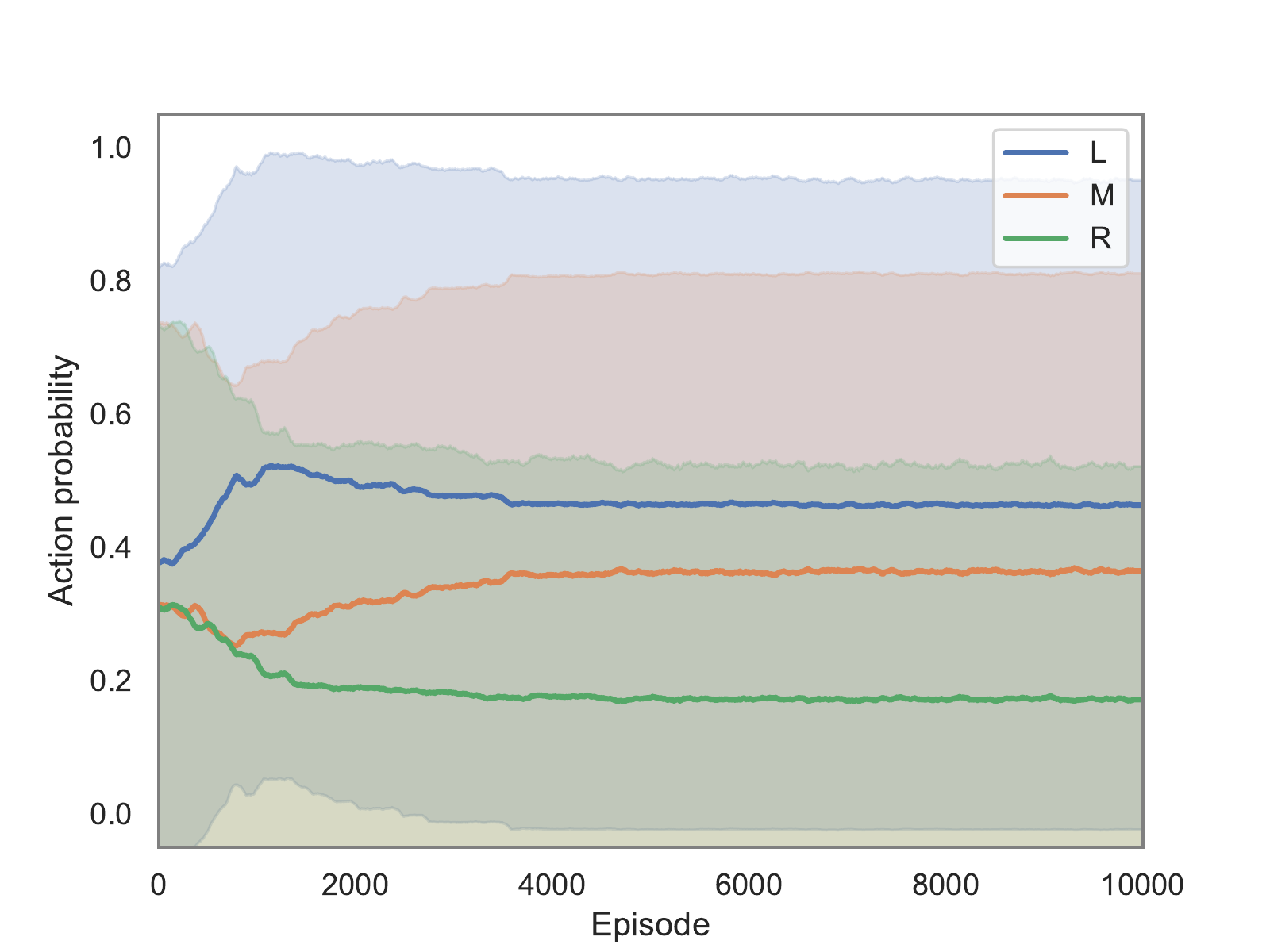}
    \caption{Action selection probabilities of Agent 2.}
    \label{fig:game4_NE_A2}
    \end{subfigure}
    \caption{Game 3 under SER with no action recommendations.}
    \label{fig:game4_NE}
\end{figure}
\begin{figure}[ht!]
    \centering
    \begin{subfigure}[b]{0.32\textwidth}
        \includegraphics[width=\textwidth]{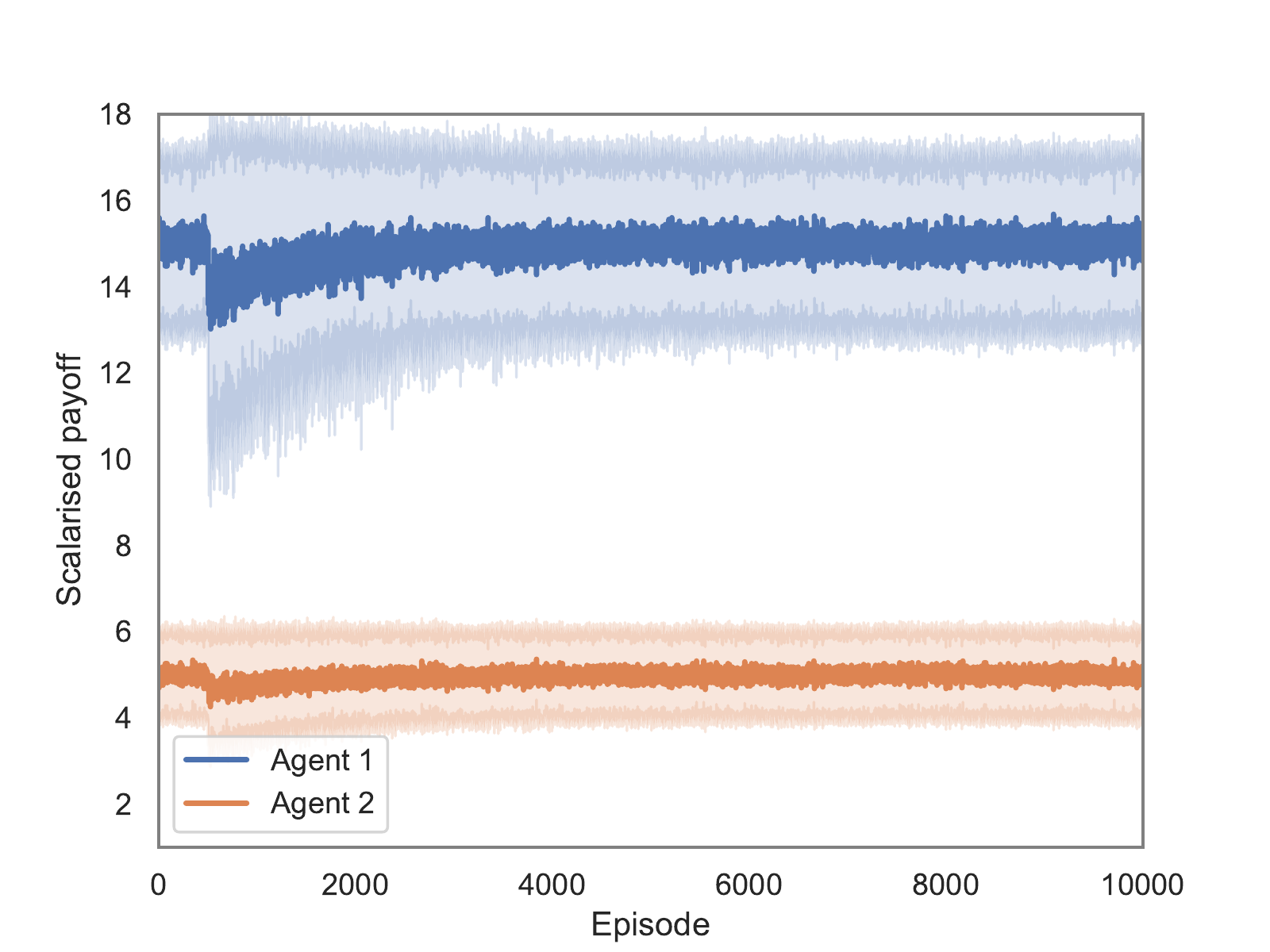}
    \caption{Scalarised payoffs obtained by each agent.}
    \label{fig:game4_sCE_SER}
    \end{subfigure}
    \begin{subfigure}[b]{0.32\textwidth}
       \includegraphics[width=\textwidth]{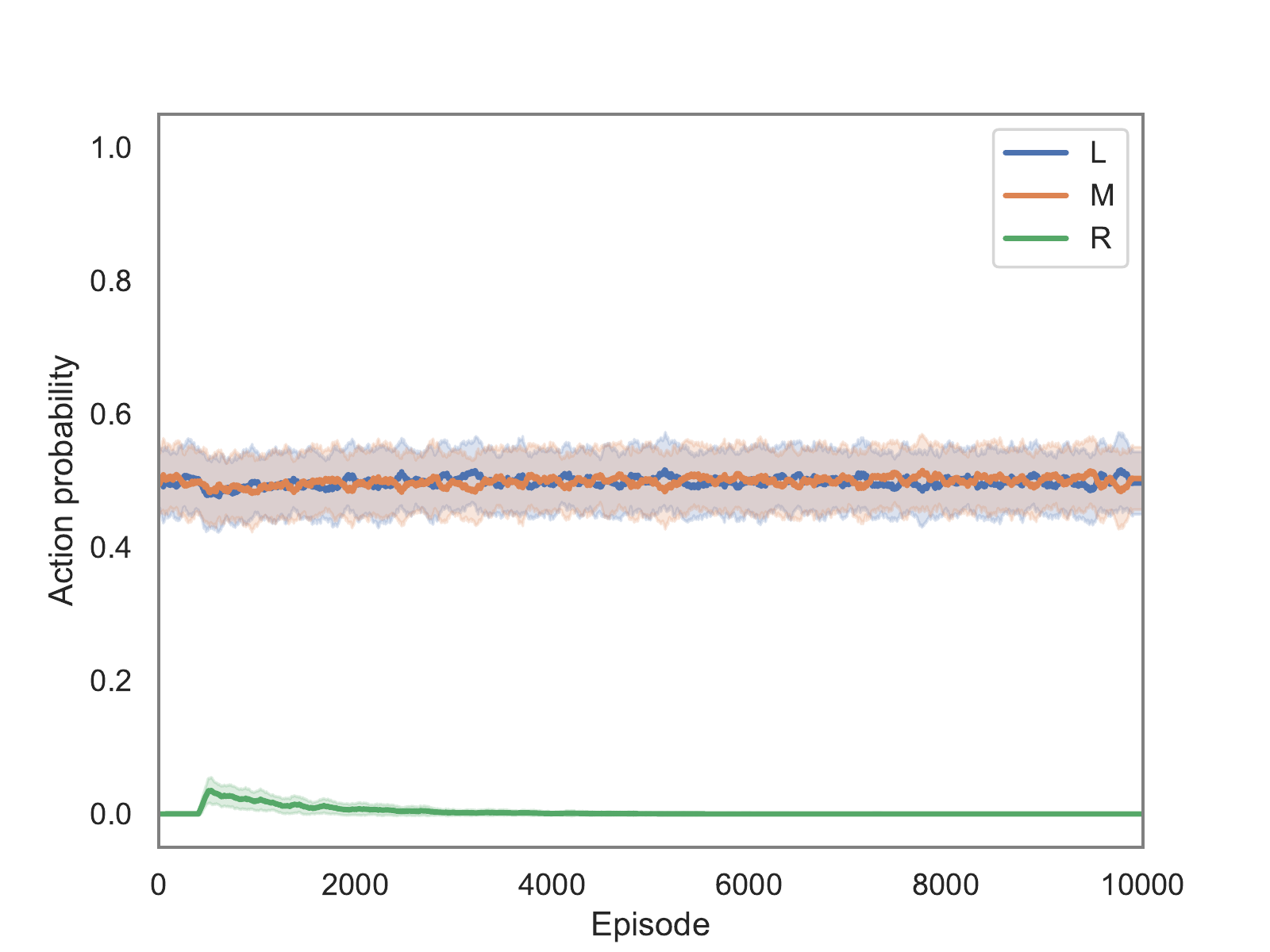}
    \caption{Action selection probabilities of Agent 1.}
    \label{fig:game4_sCE_A1}
    \end{subfigure}
    \begin{subfigure}[b]{0.32\textwidth}
        \includegraphics[width=\textwidth]{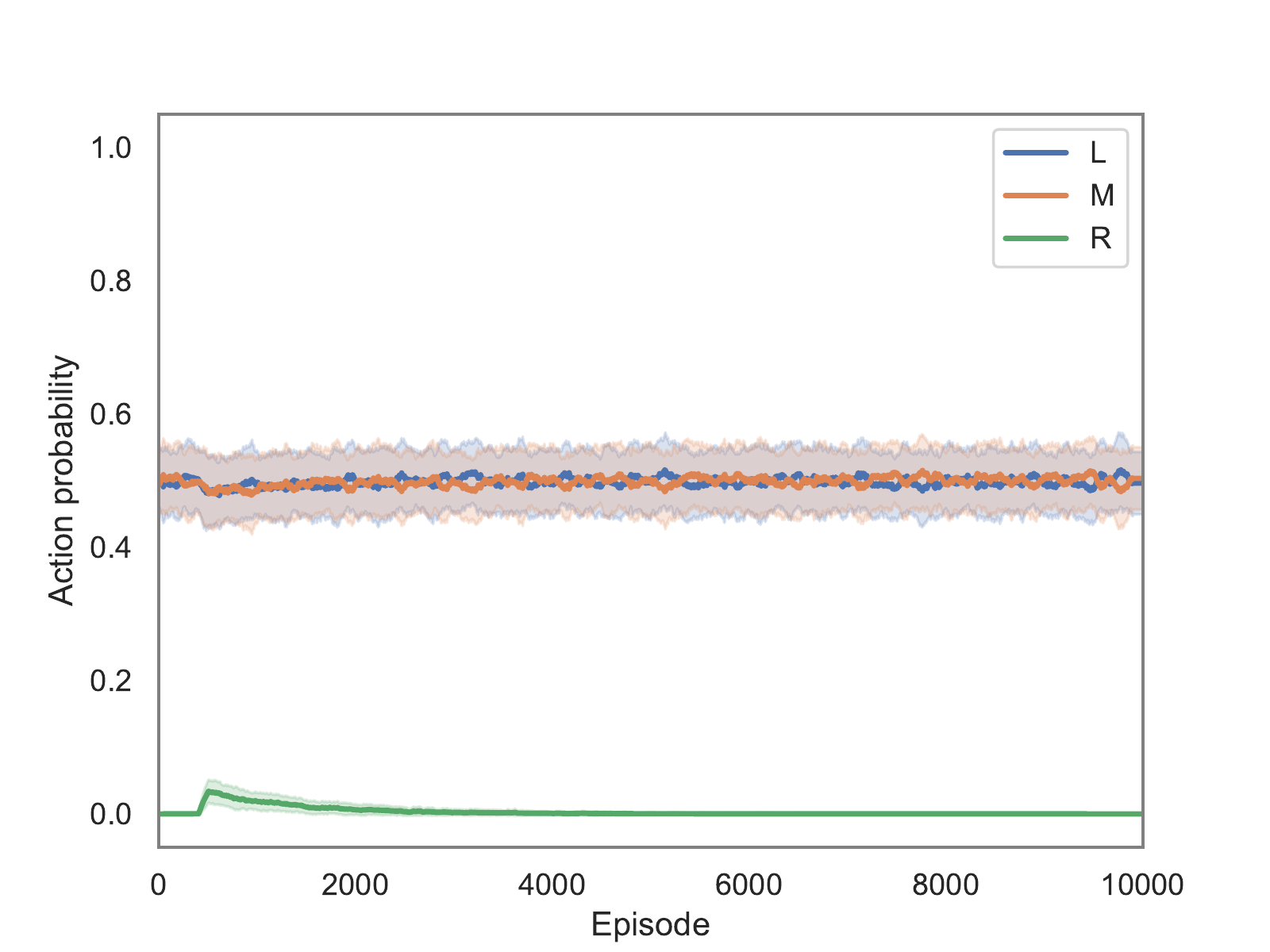}
    \caption{Action selection probabilities of Agent 2.}
    \label{fig:game4_sCE_A2}
    \end{subfigure}
    \caption{Game 3: single-signal CE under SER with action recommendations provided according to Table \ref{table:balance_ser_ce1}.}
    \label{fig:game4_sCE}
\end{figure}
\begin{figure}[ht!]
    \centering
    \begin{subfigure}[b]{0.32\textwidth}
        \includegraphics[width=\textwidth]{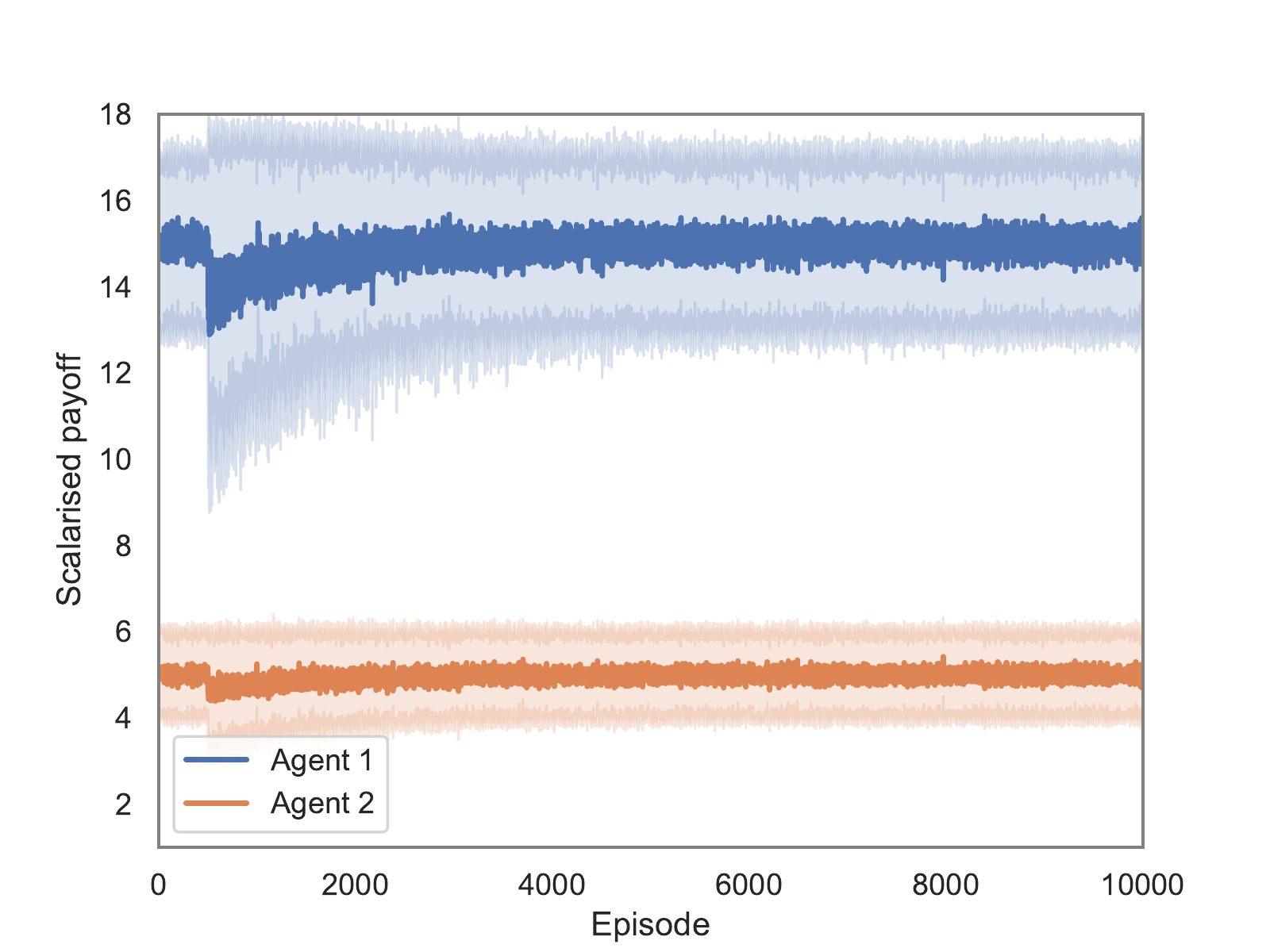}
    \caption{Scalarised payoffs obtained by each agent.}
    \label{fig:game4_mCE_SER}
    \end{subfigure}
    \begin{subfigure}[b]{0.32\textwidth}
       \includegraphics[width=\textwidth]{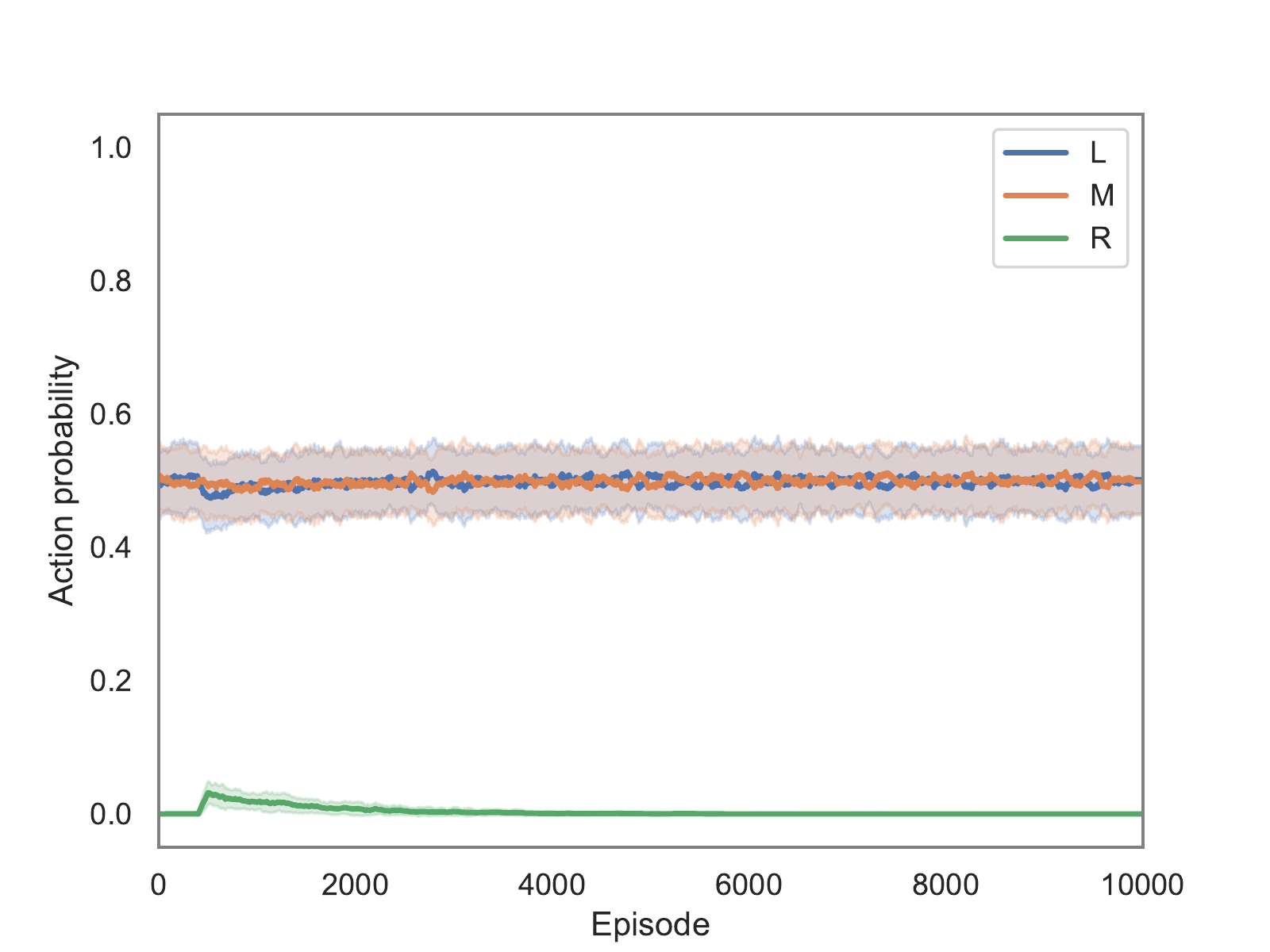}
    \caption{Action selection probabilities of Agent 1.}
    \label{fig:game4_mCE_A1}
    \end{subfigure}
    \begin{subfigure}[b]{0.32\textwidth}
        \includegraphics[width=\textwidth]{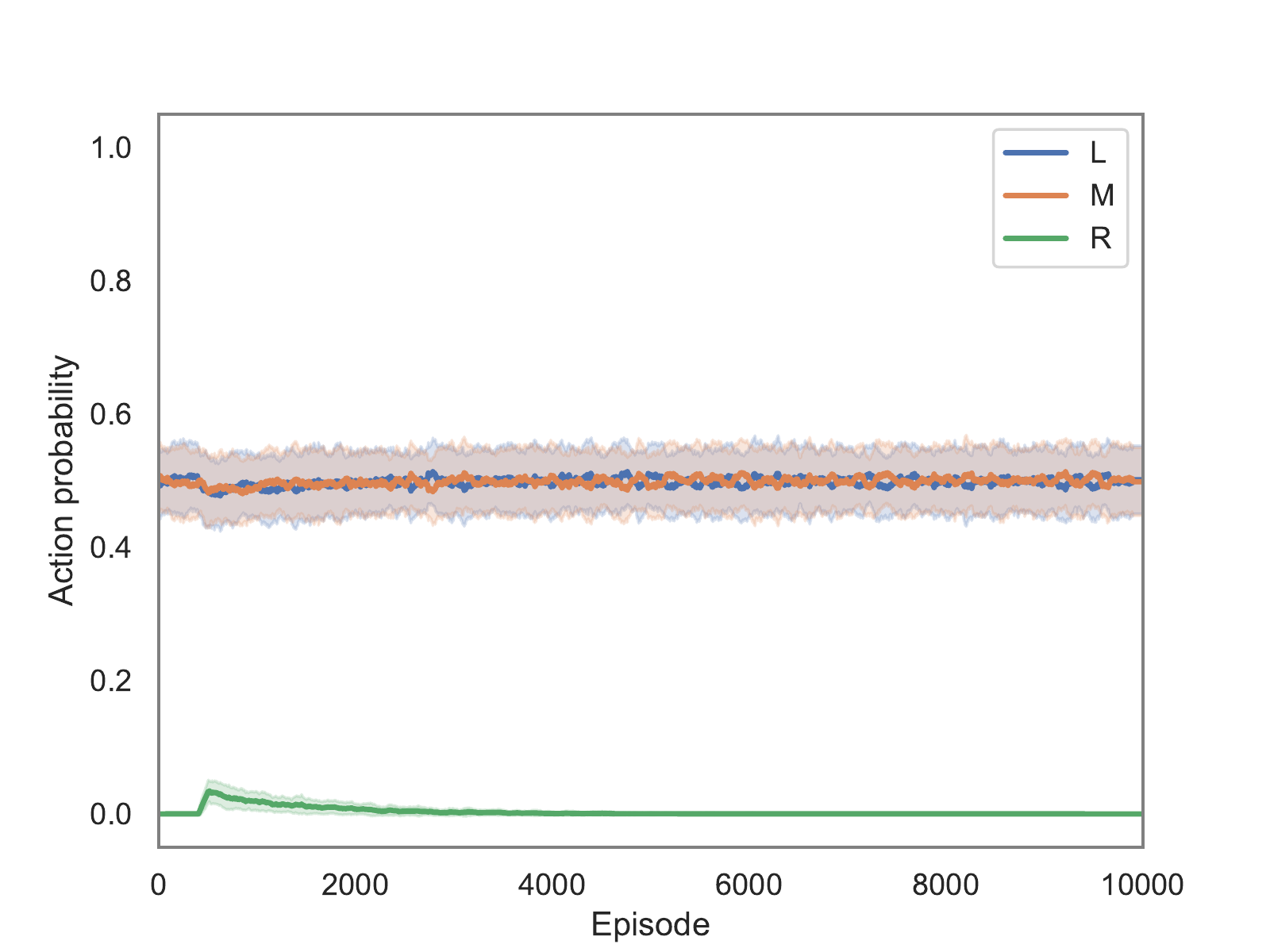}
    \caption{Action selection probabilities of Agent 2.}
    \label{fig:game4_mCE_A2}
    \end{subfigure}
    \caption{Game 3: multi-signal CE under SER with action recommendations provided according to Table \ref{table:balance_ser_ce1}.}
    \label{fig:game4_mCE}
\end{figure}
\begin{figure}[ht!]
    \centering
    \begin{subfigure}[b]{0.32\textwidth}
        \includegraphics[width=\textwidth]{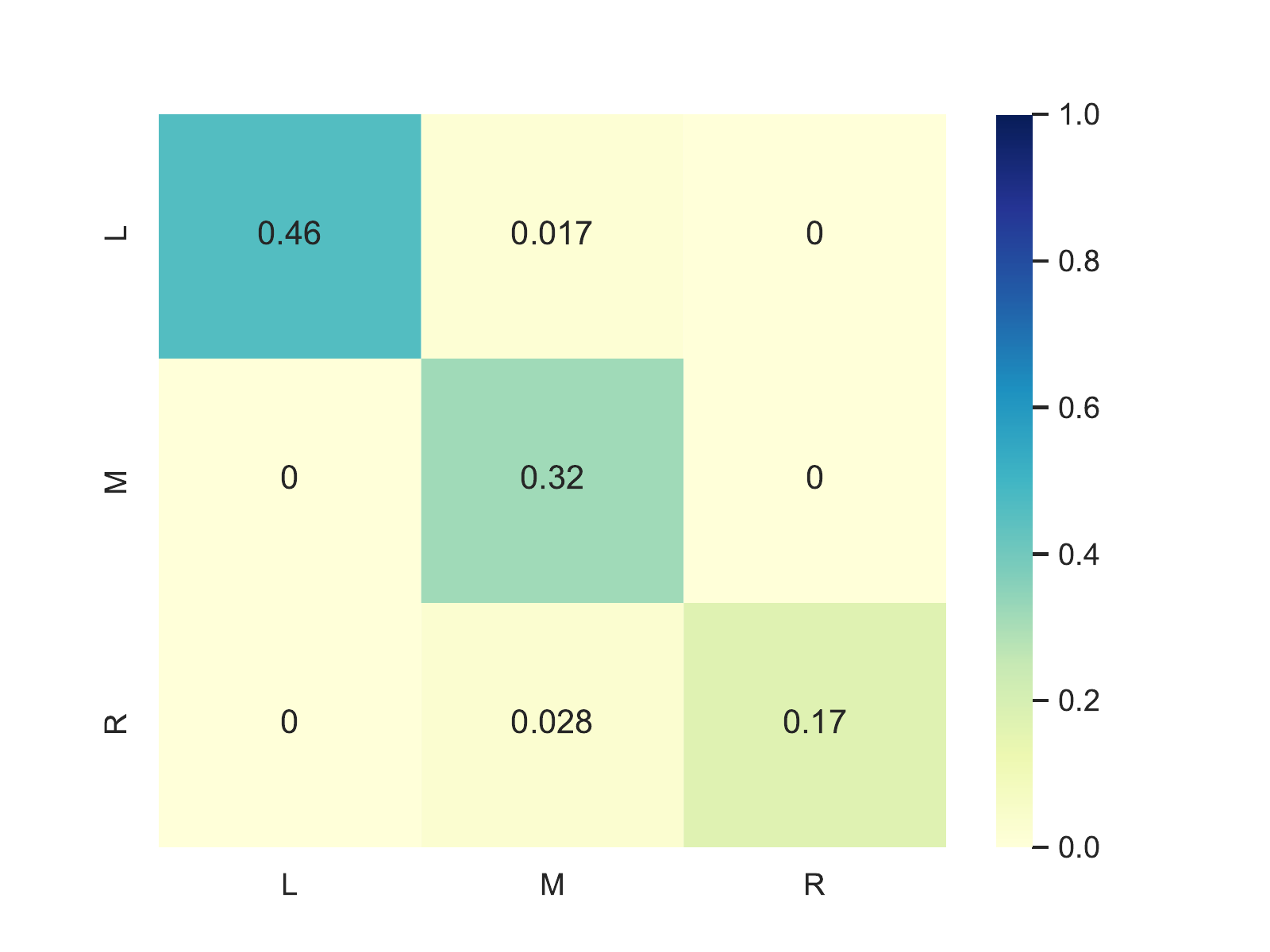}
    \caption{No action recommendations}
    \label{fig:game4_NE_states}
    \end{subfigure}
    \begin{subfigure}[b]{0.32\textwidth}
       \includegraphics[width=\textwidth]{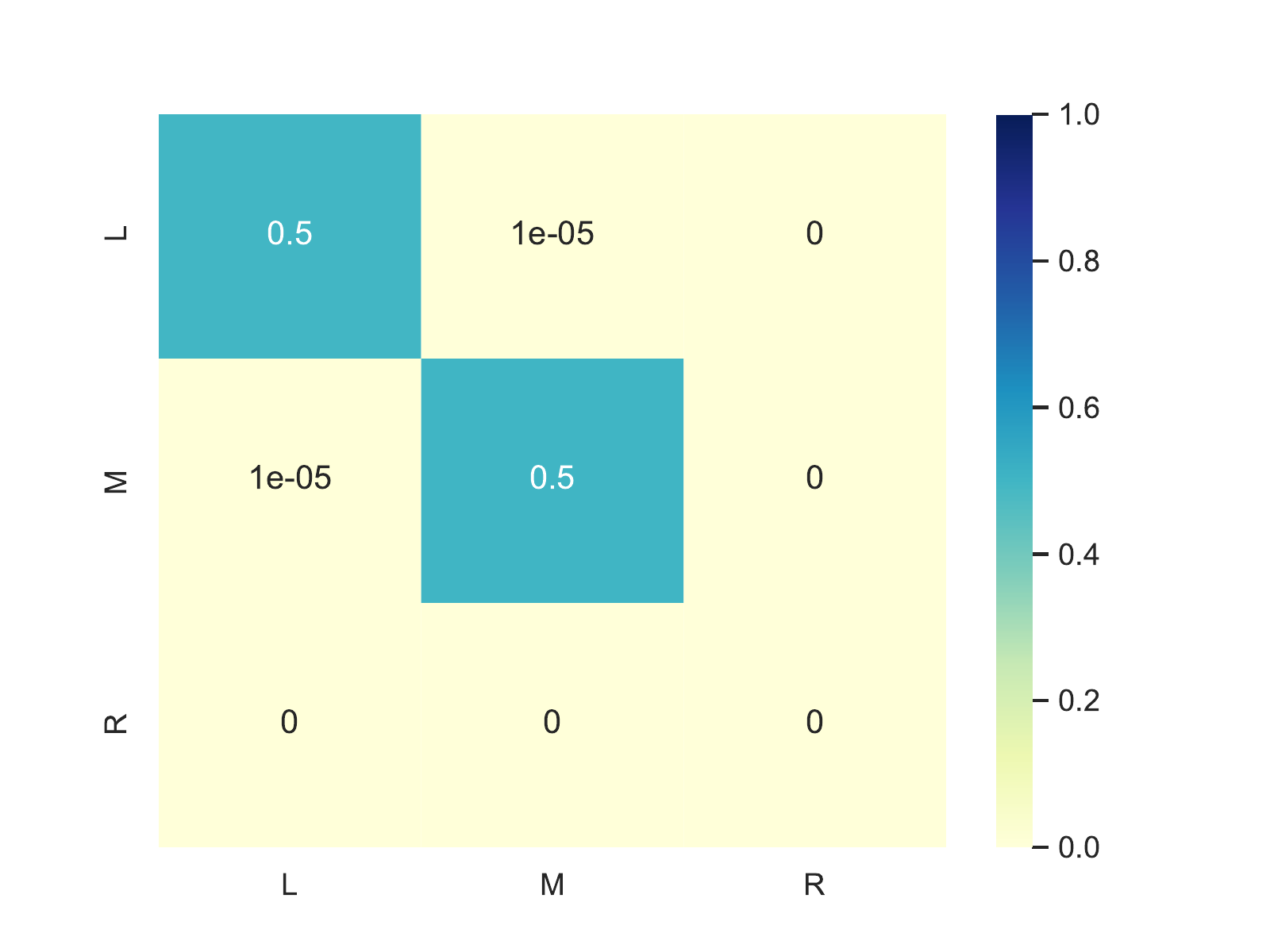}
    \caption{Single-signal CE}
    \label{fig:game4_sCE_states}
    \end{subfigure}
    \begin{subfigure}[b]{0.32\textwidth}
        \includegraphics[width=\textwidth]{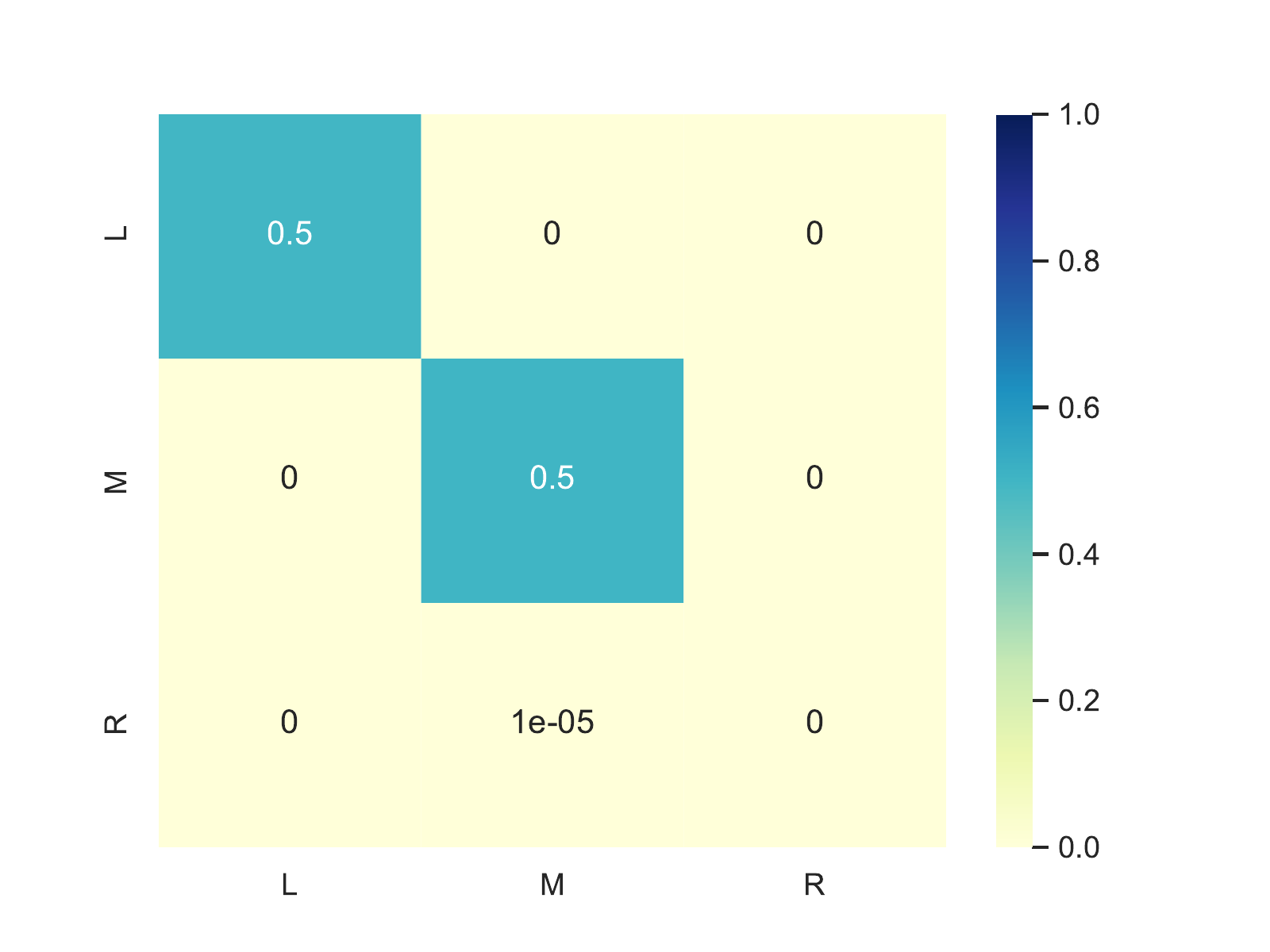}
    \caption{Multi-signal CE}
    \label{fig:game4_mCE_states}
    \end{subfigure}
    \caption{Game 3: Joint action probabilities over the last 1000 episodes under SER.}
    \label{fig:game4_states}
\end{figure}

\section{Conclusion and Future Work}
\label{sec:conclusion}
In this work, we explored the differences between two optimisation criteria for MOMAS: expected scalarised returns and scalarised expected returns. Using the framework of MONFGs, we constructed sets of conditions for the existence of Nash and correlated equilibria, two of the most commonly-used solution concepts in the single-objective MAS literature. Our analysis demonstrated that fundamental differences exist between the ESR and SER criteria in multi-agent settings.

While we have provided some theoretical results concerning the existence of equilibria in utility-based MONFGs, a number of deep and interesting open questions remain unanswered. Even though we provide examples of games where Nash equilibrium and multi-signal correlated equilibrium both exist (Table~\ref{table:MONFG_with_NE}) or do not exist (proof of Theorems \ref{th:ser_nash} and \ref{th:ser_ce}) under SER when considering non-linear utility functions, we have no concrete conclusion on how or if the relation between NE and CE modifies under SER, that is, when we can expect equilibria to exist, and when they do not;  therefore, further detailed theoretical analysis is required.

In the proof of Theorem \ref{th:ser_ce_single} we provide an example where a single-signal correlated equilibrium does exist under SER, although it not known whether single-signal correlated equilibria always exist in this setting. The existence of correlated equilibria in single-objective NFGs has been proven by \cite{hart1989existence} based on linear duality, an argument which does not rely on the existence of Nash equilibria (or by extension, the use of a fixed point theorem as per \cite{Nash1951Non}) as the original proof by \cite{aumann1974subjectivity} did. Extending the work of \cite{hart1989existence} for utility-based MONFGs under SER is a promising direction for future work. 
As we saw in the example Chicken game in Table \ref{table:chicken}, correlated equilibria allow for better compromises to be achieved between conflicting payoff functions in single-objective NFGs, when compared with Nash equilibria. In utility-based MONFGs, we demonstrated that this property translates well, allowing compromises to be achieved between conflicting utility functions (and allowing a stable compromise solution to be reached in MONFGs where no stable compromise may be reached using Nash equilibria, when conditioning on the received signal).

The analysis in this paper has a number of important limitations which should be addressed in future work. Our worked examples considered MONFGs with two agents only, so the interaction between equilibria and optimisation criteria should be further explored in larger MOMAS. It would also be worthwhile to conduct larger and more rigorous empirical studies to further expand upon our findings, and to develop new learning algorithms for MOMAS. One promising direction to allow agents to avoid solving non-linear optimisation problems when computing their optimal mixed strategy is to adopt an actor-critic approach, as per a new learning algorithm recently proposed for MONFGs by \cite{Zhang2020Opponent}. By adopting the MONFG model, we considered stateless decision making problems only; our analysis should be extended to stateful MOMAS models such as multi-objective stochastic games (MOSGs) \citep{Mannion2017Theoretical}, or even multi-objective versions of partially observable stochastic games \citep{wiggers2016structure}. We note that a similar equilibrium concept to the correlated equilibrium exists for single-objective stochastic games; the cyclic equilibrium (or cyclic correlated equilibrium) \citep{zinkevich2006cyclic}. Little is currently known about equilibria in multi-objective multi-agent sequential decision making settings. If the existence of Nash equilibria cannot be proven or demonstrated for MOSGs with non-linear utility functions under SER in the future, the cyclic equilibrium is one alternate solution concept which is worthy of exploration.

Another interesting line of future research concerns the interaction between MOMAS, optimisation criteria (ESR vs. SER) and reward shaping. Although reward shaping in MOMAS has received some attention to date (see e.g. \cite{Yliniemi2016Multi,Mannion2017Policy,Mannion2018Reward}), it has been primarily from the ESR perspective, and using linear and hypervolume scalarisation functions only. Principled reward shaping techniques such as potential-based reward shaping and difference rewards come with convenient theoretical guarantees (e.g. preserving the relative value of policies and/or actions, and therefore Nash and Pareto relations between policies and/or actions in MAS/MOMAS \citep{Devlin2011Theoretical,Mannion2017Policy,Colby2015Evolutionary,Mannion2017Theoretical}); how well these techniques will work under SER with non-linear utility functions is currently unknown.

How to best model users' utility functions for MOMAS remains a significant open question. Recent work on preference elicitation strategies for multi-objective decision support settings \citep{zintgraf2018ordered} has delivered promising results in single agent settings with non-linear utility; this approach could feasibly be extended to generate utility functions for decision making in MOMAS. It may also be beneficial for agents in MOMAS to learn opponent models; opponent modelling could potentially help to improve the utility of agents that implement it, as well as improving the probability of convergence to desirable equilibria. Initial work on opponent modelling for MONFGs under SER with non-linear utility functions \citep{Zhang2020Opponent} modelled opponent behaviour via policy reconstruction using conditional action frequencies. Modelling opponent utility functions directly, e.g. using Gaussian processes, is another promising avenue that could be explored in future research.

Finally, as we mentioned in Section \ref{sec:optimisation}, users may prefer either the SER or ESR criterion depending on their needs (e.g., whether they care more about average performance over a number of policy executions, or just the performance of a policy single execution \citep{roijers2018multi}). In larger MOMAS, it is possible that not all users would choose the same optimisation criterion, or that their preference for a specific optimisation criterion may change over time, potentially adding further complexity to the process of computing equilibria.

\bibliographystyle{plainnatCustom}

\end{document}